\pgfplotsset{compat=1.14, set layers}
\definecolor{refcolor}{rgb}{0.23, 0.27, 0.29}
\newtheorem{theorem}{Theorem}[section]
\newtheorem{lemma}[theorem]{Lemma}
\newtheorem{corollary}[theorem]{Corollary}
\newtheorem{definition}[theorem]{Definition}
\newtheorem{claim}[theorem]{Claim}
\newtheorem{observation}[theorem]{Observation}
\newcommand{\mpc}{\textsf{MPC}\xspace}
\newcommand{\ampc}{\textsf{AMPC}\xspace}
\newcommand{\pram}{\textsf{PRAM}\xspace}
\newcommand{\id}{\textsf{ID}\xspace}
\newcommand{\dht}{\textsf{DHT}\xspace}
\newcommand{\shrink}{\textsc{Shrink}\xspace}
\newcommand{\gshrink}{\textsc{ShrinkGeneral}\xspace}
\newcommand{\sar}{\textsc{ShrinkRecurse}\xspace}
\newcommand{\compose}{\textsc{Compose}\xspace}
\newcommand{\contract}{\textsc{Contract}\xspace}
\newcommand{\cc}{\textsc{ConnectedComponents}\xspace}
\newcommand{\whp}{w.h.p\xspace}
\newcommand{\larr}{\xleftarrow{}}
\newcommand{\logstar}[1]{\log^{*} #1}
\newcommand{\itlog}[2]{\log^{(#1)} #2}
\newcommand{\eps}{\varepsilon}
\newcommand{\ex}{\mathbb{E}}
\newcommand{\pr}{\text{Pr}}
\newcommand{\tspc}{T}
\newcommand{\mspc}{S}
\newcommand{\T}{B}
\Crefname{paragraph}{Paragraph}{Paragraphs}
\begin{document}

\begin{center}
	{\huge \bf Adaptive Massively Parallel Connectivity in Optimal Space} \\ \vspace{1cm}

	\begin{minipage}[H]{14.5cm}
		{\large \textbf{Rustam Latypov\footnotemark}, Aalto University -- \href{mailto:rustam.latypov@aalto.fi}{\texttt{rustam.latypov@aalto.fi}}} \vspace{0.5mm}\\
		{\large \textbf{Jakub Łącki}, Google Research, New York -- \href{mailto:jlacki@google.com}{\texttt{jlacki@google.com}}} \vspace{0.5mm}\\
		{\large \textbf{Yannic Maus}, TU Graz -- \href{mailto:yannic.maus@ist.tugraz.at}{\texttt{yannic.maus@ist.tugraz.at}}} \vspace{1mm}\\
		{\large \textbf{Jara Uitto}, Aalto University -- \href{mailto:jara.uitto@aalto.fi}{\texttt{jara.uitto@aalto.fi}}} \vspace{0.5mm}\\
	\end{minipage}

	\vspace{5mm}
	\begin{minipage}[H]{13.3cm}
		\begin{center}
			{\bf Abstract} \\ 
		\end{center}

		We study the problem of finding connected components in the Adaptive Massively Parallel Computation (\ampc) model. We show that when we require the total space to be linear in the size of the input graph the problem can be solved in $O(\logstar{n})$ rounds in forests (with high probability) and $2^{O(\logstar{n})}$ expected rounds in general graphs. This improves upon an existing $O(\log \log_{m/n} n)$ round algorithm. \\
		
		For the case when the desired number of rounds is constant we show that both problems can be solved using $\Theta(m + n \itlog{k} n)$ total space in expectation (in each round), where $k$ is an arbitrarily large constant and $\itlog{k}{}$ is the $k$-th iterate of the $\log_2$ function. This improves upon existing algorithms requiring $\Omega(m + n \log n)$ total space.

	\end{minipage}
\end{center}

\vfill
\thispagestyle{empty}
\footnotetext{Supported by the Academy of Finland, Grant 334238}

\newpage
\thispagestyle{empty}
\tableofcontents

\newpage
\pagenumbering{arabic}

\section{Introduction}
The Adaptive Massively Parallel Computation (\ampc) model is a computation model that captures the capabilities and challenges of modern platforms for processing massive data~\cite{ampc-constant, MPC-via-remote-access}.
In the \ampc model we have $M$ machines that communicate with each other, in synchronous communication rounds, each equipped with local space of size $\mspc$.
The machines communicate using a shared distributed hash table (\dht) (a distributed key-value store).
Within each round there is a read-only \dht containing the input to the round, and a write-only \dht where the machines write the output of the round.
Once the round completes, a new round begins and the output \dht from the previous round becomes the read-only input \dht for the next round.

The model has three challenging restrictions.
First, the space available to each machine, \mspc, is strictly sublinear in the input size, $N$.
Second, each machine can only read and write data of size at most $S$ within each round.
Third, the total space of all machines should be barely big enough to store all the input, that is $\tspc = \mspc \cdot M = O(N)$.

The \ampc model is an extension of the widely studied \mpc model.
The models differ in how the machines are allowed to communicate.
Specifically, in the \mpc model instead of writing data to \dht, within a round each machine can send messages to other machines, which are delivered  in the beginning of the following round.
The only restriction is that the total size of all messages sent to all machines in a round is at most $S$.
That is, the difference between the models is that in the \mpc model each machine in each round is \emph{given} a chunk of data to process, i.e., the messages it receives, while in the \ampc model each machine can \emph{choose} what data to read from the \dht. 
In particular the machine may use any value read within a round to adaptively decide what to read next (within the same round).

The \ampc model is particularly well suited to studying graph algorithms, and indeed several algorithmic problems have been solved more efficiently in \ampc compared to the \mpc model, including connected components~\cite{MPC-via-remote-access, ampc-constant}, maximal matching and independent set~\cite{ampc-constant, behnezhad2022time, hajiaghayi-contraction}, and minimum cut~\cite{DBLP:conf/spaa/HajiaghayiKOS22}.
All of these results are obtained in the regime when the available space per machine is sublinear in the number of vertices of the input graph, that is $\mspc = n^{\delta}$ for a constant $0 < \delta < 1$ for an input graph with $n$ vertices.
This regime is the most challenging (and the most desirable) regime for studying graph algorithms in the \mpc model.
At the same time, some of the fundamental unconditional lower bounds carry over from \mpc to \ampc~\cite{Charikar2020, roughgarden2018shuffles}.



Recent work on both \ampc and \mpc algorithms focused primarily on optimizing the number of rounds, which was motivated by the fact that synchronization in distributed systems is often an expensive step~\cite{DG08, suri2011counting}. 
However, if we consider the motivation behind the  the \ampc and \mpc models, the total space usage should also be highly correlated with empirical performance.
This is because in a vast majority of \ampc and \mpc algorithms, the the total space usage is determined by the maximum amount of communication that happens in any round.
In fact, in the usual case when each machine uses space that is linear in its input and output size, the total space usage and total communication are equal, up to constant factors.
As a result, the total space usage and the total communication can often be considered very good measures of how expensive a single round is.

This motivates the following question: what is the best round complexity that we can achieve if we require the space usage to be \emph{optimal}, that is, linear in the input size?
We address this question for the fundamental problem of finding connected components and give improved \ampc algorithms, which use optimal space.  

\subsection{Our Contributions}
We give improved algorithms for finding connected components in the \ampc model with sublinear space per machine, i.e., $S = n^\delta$ for any constant $\delta \in (0,1)$.
Our first result is an algorithm for finding connected components in forests.
Note that we say that an event holds with high probability (w.h.p.) if it holds with probability at least $1-1/n^{c}$, for a constant $c$ we can choose.

\begin{restatable}{theorem}{thmmainforest}
\label{thm:main-forest}
There exists a randomized $O(\logstar n)$-round \ampc algorithm that w.h.p.\ computes the connected components of an $n$-vertex forest and uses optimal total space.

More generally, there exists a randomized $O(k)$-round \ampc algorithm that w.h.p.\ computes the connected components of an $n$-vertex forest and uses  $O(n\log^{(k)}n)$ total space, for any $1 \leq k \leq \logstar{n}$.
\end{restatable}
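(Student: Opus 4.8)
The plan is to compute connected components of a forest by repeatedly contracting vertices toward their component root, using the power of adaptive reads to "jump" many steps along paths at once while keeping total space linear. The natural building block is pointer jumping: if every vertex stores a pointer to some ancestor in its tree, then in one \ampc round each vertex can follow its pointer, read the pointer stored there, and update its own pointer — halving the depth of every path. A naive implementation of this needs $O(\log n)$ rounds and, worse, can blow up the total space if many vertices point into the same region. The key idea for getting down to $O(\logstar n)$ rounds (and more generally the $O(k)$ vs.\ $O(n\log^{(k)}n)$ tradeoff) must be to \emph{accelerate} pointer jumping: instead of following one pointer, a vertex reads a whole neighborhood of size roughly $d$ in the \dht within a single round, which lets it jump $d$ steps along its path, reducing remaining depth from $D$ to $D/d$ in one round. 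If we are willing to use total space $O(nd)$, then after one round depths drop by a factor $d$; iterating this with a rapidly growing $d$ gives the iterated-logarithm behavior.

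Concretely, I would proceed as follows. First I would reduce general forests to rooted trees of bounded depth: using standard \ampc/\mpc techniques one can root each tree and, within $O(1)$ rounds, bring every tree to depth $\poly(n)$ (indeed this is immediate since a forest on $n$ vertices already has depth $< n$). Second, I would run a sequence of phases; in phase $i$ every active vertex $v$ reads the $O(d_i)$-hop ancestor chain above it from the \dht and repoints to the top of that chain, so that the maximum depth $D_i$ satisfies $D_{i+1} \le D_i / d_i$. To use only $O(n \cdot d_i)$ total space in phase $i$, I would choose the $d_i$ so that the product telescopes: starting from $D_0 = n$, taking $d_i$ roughly equal to $D_i$ itself would finish in one step but cost $\Theta(n^2)$ space, whereas taking $d_0$ a large constant, $d_1 = 2^{d_0}$, and in general each $d_{i+1}$ exponential in the "remaining log-depth," makes the number of phases $O(\logstar n)$ while the space in each phase is dominated by $O(n \cdot \log^{(i)} n)$ — and if we stop after $k$ phases we are left with depth $\log^{(k)} n$ which we can finish with a single read of that many ancestors, giving the $O(n \log^{(k)} n)$ bound. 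Third, once every vertex points directly to its root, a final round relabels each vertex by its root's \id, outputting the components.

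The main obstacle I anticipate is controlling total space \emph{simultaneously} with the depth reduction. Two things can go wrong. (a) When a vertex reads its $d_i$-hop ancestor chain, the same high vertex may be read by many descendants; if a subtree of size $s$ hangs below a near-root vertex, that vertex is read $s$ times, and naively summing gives $\Theta(n d_i)$ reads \emph{per machine} or worse. The fix is to argue the \emph{aggregate} number of (vertex, ancestor-within-$d_i$-hops) pairs is $O(n d_i)$ — true because each vertex has at most $d_i$ ancestors within $d_i$ hops — and then use the \ampc primitive that lets a machine read a contiguous \dht segment, so a machine handling a batch of vertices in the same chain reads the shared ancestors only once; this requires first sorting/grouping vertices by their current tree, which costs $O(1)$ extra rounds via known \ampc sorting. (b) A subtler point: after repointing, the pointer structure is still a forest (no cycles) because we only ever point to strict ancestors, so correctness of iterating is automatic — but I must check that "depth" is measured in the \emph{current} pointer forest and that reading $d_i$ hops in the current forest is implementable, i.e., the current forest is itself stored in the \dht at the start of each phase. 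I would also need the standard \ampc assumption $S = n^\delta$ to ensure a single machine can hold an entire $d_i$-hop chain once $d_i = \poly\log n$, which is where the $1 \le k \le \logstar n$ restriction on the theorem comes from; for larger $d_i$ one machine cannot hold the chain and a recursive or hierarchical read is needed, but that is exactly the regime the theorem does not claim. Everything else — rooting, sorting, relabeling — is routine in \ampc and contributes only $O(1)$ rounds and $O(m+n)$ space.
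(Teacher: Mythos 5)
There is a genuine gap, and it is at the heart of the space accounting. Your scheme keeps all $n$ vertices active throughout: in phase $i$ every vertex reads its $d_i$-hop ancestor chain, so the total number of \dht reads (which is what total space must dominate in \ampc) is $\Theta(n\cdot d_i)$. To stay within $O(n)$ total space you would therefore need $d_i = O(1)$ in every phase, which gives only constant-factor depth reduction per round and hence $\Theta(\log n)$ rounds — exactly the classic pointer-doubling bound, not $O(\logstar n)$. Your claim that the space per phase is $O(n\log^{(i)}n)$ while $d_i$ grows like a tower $2^{d_{i-1}}$ is internally inconsistent: the reads scale as $O(n\, d_i)$, which \emph{increases} with $i$, whereas $O(n\log^{(i)}n)$ \emph{decreases} with $i$. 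There is also no mechanism in the proposal to share reads across the $\Theta(n)$ descendants of a near-root vertex — grouping by tree doesn't help, since a single tree can have $\Theta(n)$ vertices, far exceeding the per-machine budget $S=n^\delta$, so the ``read a contiguous chain once per machine'' idea cannot amortize the reads in general.

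The idea that is missing — and that the paper relies on — is that the \emph{number of vertices} must shrink between rounds, not merely the pointer depth. The paper reduces the forest to disjoint cycles via an Euler tour, caps cycle lengths at $O(n^{\eps})$, and then runs a randomized rank-based contraction (a truncated geometric distribution over $\{1,\dots,\T\}$, where only highest-rank vertices survive, plus a small deterministic contraction for short cycles). Each iteration provably reduces the vertex count by a factor roughly $2^{\T}$ while using only $O(n'\T)$ reads on the $n'$ surviving vertices; because the survivor count halves exponentially, the budget per surviving vertex grows, which is what licenses increasing $\T$ to $2^{\T}$ in the next iteration while keeping total reads at $O(n)$. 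This vertex-elimination feedback loop is exactly what pointer jumping without contraction cannot give you, and it is why the paper's algorithm is a contraction/contraction-composition scheme (with a \textsc{Compose} step to un-contract the labeling at the end) rather than pointer doubling on a rooted forest.
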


This algorithm directly improves upon an existing result using $O(1)$ rounds and $O(n \log n)$ total space~\cite{MPC-via-remote-access}. 
We note that forest connectivity was used as a subroutine in some other \ampc algorithms~\cite{hajiaghayi-contraction, ampc-constant}.
We also give a new algorithm for the case of general graphs.

\begin{restatable}{theorem}{thmmaingeneral}\label{thm:maingeneral}
There exists a randomized \ampc algorithm for computing connected components of an $n$-vertex and $m$-edge graph $G$.
The algorithm runs in $2^{O(k)}$ rounds, each using $O(m + n \itlog{k}{n})$ total space in expectation, for any $1 \leq k \leq \logstar{n}$.
\end{restatable}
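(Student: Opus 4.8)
The plan is to reduce connectivity in general graphs to repeated applications of the forest-connectivity primitive of \Cref{thm:main-forest}, using the standard graph-contraction framework but being careful about total space. At a high level, I would run $O(1)$ phases of a Boruvka-style contraction: in each phase every vertex picks an incident edge (say, to a random neighbor, or via the standard random-mark/contract rule that halves the number of vertices in expectation), the chosen edges form a graph of maximum degree whose components are "stars" or more generally low-diameter pieces, and contracting these components reduces the vertex count by a constant factor while never increasing the edge count beyond $m$. After $O(\log n)$ such phases the graph has been reduced to isolated vertices, but we cannot afford $O(\log n)$ rounds; instead we follow the \ampc speed-up idea of \cite{MPC-via-remote-access,ampc-constant}, where the adaptive \dht lets a machine traverse a contraction forest of depth $d$ in $O(\log d)$ \ampc rounds by pointer-jumping on the \dht. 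The subtlety that forces the $2^{O(k)}$ round bound and the $O(n\itlog{k}{n})$ space bound is exactly the recursion in which a connectivity instance is solved by building an auxiliary forest, solving forest-connectivity on it with the $O(k)$-round, $O(n\log^{(k)}n)$-space algorithm of \Cref{thm:main-forest}, contracting, and recursing.

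More concretely, I would structure the proof around a subroutine $\sar$ that takes a graph with $n$ vertices and a parameter $k$ and does the following: (i) in a constant number of \ampc rounds, using $O(m+n)$ total space, reduce to an instance where the connectivity structure is captured by a forest $F$ on $O(n)$ nodes — this is the step where each vertex hooks to a neighbor and we take the functional graph of the hooking, which is a collection of rooted trees (pseudo-forests whose unique cycles have length two are trivially handled); (ii) invoke the forest algorithm of \Cref{thm:main-forest} with the given $k$ to find the components of $F$ in $O(k)$ rounds and $O(n\log^{(k)}n)$ space; (iii) contract each component of $F$ to a single supervertex, forming a graph $G'$ with at most $n/2$ vertices (in expectation) and at most $m$ edges, relabel all edges via the \dht in $O(1)$ rounds and $O(m)$ space; (iv) recurse on $G'$. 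The number of recursion levels until the vertex set vanishes is $O(\log n)$ in expectation, but — and this is the key accounting — we only need the *final* answer, and the labels propagate back up in $O(1)$ additional rounds per level; the real point is that the space bottleneck is the forest subroutine, whose space is $O(n_i\log^{(k)}n_i)$ at level $i$ where $n_i\le n/2^i$, so summing over levels gives $\sum_i O(n\,2^{-i}\log^{(k)}n) = O(n\log^{(k)}n)$ in any single round, and the edge term is $O(m)$ throughout. The round count is $O(k)$ per level times $O(\log n)$ levels — which is too much — so instead I would use the \ampc pointer-jumping acceleration to collapse the $O(\log n)$ levels of contraction into $2^{O(k)}$ rounds: run the contraction forest explicitly to depth $O(\log n)$ but resolve it in $2^{O(k)}$ \ampc rounds by a recursive doubling scheme whose depth of recursion is tuned so that each level uses the degree-$\log^{(j)}n$ trick for successively larger $j$, exactly mirroring how the forest theorem trades rounds for space.

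The cleanest way to get the exact $2^{O(k)}$ is probably to define the general-graph algorithm recursively in $k$ as well: the $k$-level algorithm performs one "super-phase" that shrinks $n$ to $n/\mathrm{poly}(\log^{(k)}n)$-ish scale using the $k$-round forest primitive, costing $O(k)$ rounds and $O(m+n\log^{(k)}n)$ space, and then calls the $(k-1)$-level algorithm on the smaller instance; unrolling gives $T(k)=O(k)+T(k-1)$ rounds naively, but because each super-phase shrinks the instance multiplicatively we actually need the recursion to branch, yielding $T(k)=2\cdot T(k-1)+O(1)=2^{O(k)}$, while the space telescopes to $O(m+n\log^{(k)}n)$ because the $n\log^{(j)}n$ terms for $j<k$ are dominated once $n$ has shrunk. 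I would then verify the per-round total space is $O(m+n\itlog{k}{n})$ \emph{in expectation} — the expectation is needed because the Boruvka contraction only guarantees a constant-factor vertex reduction in expectation, and a bad run can temporarily inflate intermediate sizes, handled by a standard restart argument.

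\textbf{Main obstacle.} The hard part will be the simultaneous control of rounds and space across the recursion: making the contraction depth genuinely $2^{O(k)}$ rather than $O(\log n)$ requires the \ampc adaptive-read acceleration to be interleaved with the space-for-rounds tradeoff of the forest subroutine in just the right way, and one must ensure that the \dht relabeling after each contraction — which touches all $m$ edges — does not itself blow up total space beyond $O(m)$ in any round, nor require more than $O(1)$ rounds, even when supervertices have large degree. I expect the bulk of the technical work to be a careful load-balancing / \dht-routing lemma showing that contracting a forest of components and rewriting all incident edges can be done in $O(1)$ \ampc rounds with $O(m+n)$ total space regardless of the degree distribution, plus the expectation analysis that absorbs the variance of the Boruvka step.
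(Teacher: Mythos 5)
Your approach is genuinely different from the paper's and it has two gaps that I believe are fatal as stated. The paper does not reduce to the forest algorithm of \Cref{thm:main-forest} at all: instead it alternates two primitives, (i) a BFS-based contraction ($\gshrink$, extending Algorithm~1 of~\cite{ampc-constant}) that in \emph{one} $O(1)$-round call shrinks the vertex count multiplicatively from $n$ to $O(m/t)$ where $t=\min(2^{\sqrt{T/n}},\sqrt S)$, using only $O(m\log t)$ space, and (ii) Karger-style uniform edge sampling with $p=\sqrt{n/m}$, which lets it recurse on two subproblems, each with only $O(\sqrt{mn})$ edges. The branching recursion ($\cc$ calls $\sar$ on the sampled graph $H$, then on $\contract(G,C)$) is where the $2^{O(k)}$ comes from, and the edge-sampling step is what makes the vertex-reduction compound across levels. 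Your Boruvka scheme is missing both mechanisms.

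Concretely: first, your contraction only removes a constant fraction of vertices per phase, so you need $\Theta(\log n)$ sequential phases; the ``pointer-jumping acceleration / recursive doubling'' you invoke to collapse these into $2^{O(k)}$ rounds is not available here, because you cannot build the contraction forest at level $i+1$ until you know the hookings at level $i$ --- the phases are adaptively dependent, unlike a static forest that the \ampc \dht could traverse. Second, and more fundamentally, you never reduce the \emph{edge} count: after contraction you still have up to $m$ edges on fewer vertices, so the total-space constraint $O(m+n\itlog{k}{n})$ gives you essentially no extra space per vertex to exploit at the next level. This is exactly the obstacle the paper names explicitly (``repeated applications of \Cref{lem:shrinking-general} do not provide stronger guarantees than a single application'') and resolves via \Cref{cor:sampling}. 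Without an analogue of $\gshrink$'s super-constant per-phase shrinkage and without edge sampling, your super-phase cannot shrink $n$ to $n/\mathrm{poly}(\itlog{k}{n})$ in $O(k)$ rounds within the stated space, and the recursion $T(k)=2T(k-1)+O(1)$ has no justification for branching. A secondary concern: \Cref{thm:main-forest} itself already spends $O(k)$ rounds and $O(n\itlog{k}{n})$ space, so nesting it inside a recursion on $k$ as you suggest would give a round count with an extra $k$-factor at each recursion level; the paper avoids this by using a \emph{rooted}-forest connectivity subroutine inside $\gshrink$ that runs in $O(1)$ rounds rather than the general $O(k)$-round forest algorithm.
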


By setting $k = \logstar{n}$ in \Cref{thm:maingeneral} we obtain an algorithm using $2^{O(\logstar{n})}$ rounds and optimal space in expectation (in each round).
We note that for any constant $k$, $2^{O(\logstar{n})} = o(\itlog{k}{n})$.

This result improves upon two existing \ampc algorithms.
The first one is a $O(\log \log_{T/n} n)$-round algorithm which uses $O(T)$ total space~\cite{MPC-via-remote-access}.
The second one is a $O(1)$-round algorithm which uses either $O(m \log n)$ or $O(m + \log^2 n)$ space~\cite{ampc-constant}.
We note that our algorithm improves the space/round complexity tradeoff compared to both algorithms.

\subsection{Theoretical Motivation \& Related Work}

The problem of finding the connected components in an undirected graph is one of the central graph problems with many practical applications~\cite{Sahu-survey2020}. To understand the role of total space in the algorithm design in \ampc, let us first discuss the state-of-the-art of the connected components problem in \mpc.
In the \mpc model the problem can be solved in $O(\log D + \log \log_{T/n} n)$ rounds using total space of $T$, when the diameter of the input graph is at most $D$~\cite{Andoni2018, Behnezhad2019-near, Coy-det2022}.
Note that this running time becomes $O(\log D)$ when the total space is \emph{polynomially} larger than the input size.
Under the widely believed 1-vs-2 cycles conjecture~\cite{circuitLB18}, the $\Omega(\log D)$ round complexity is also the best one can hope for. In \ampc, this conditional hardness does not hold. 
The \dht in \ampc alleviates the dependency on $D$ and hence, given enough total space, the runtime collapses to $O(1)$.
Therefore in \ampc, we focus solely on the interplay between total space and the dependence on $n$ in the runtime, and significantly improve the existing tradeoff.

The fact that additional total space makes \mpc and \ampc algorithmic questions significantly easier seems to be a recurring theme for multiple problems in both models.
As an example, the commonly used exponentiation and round compression techniques in \mpc inherently require super-linear total space~\cite{Lenzen2010, czumaj2018round, ghaffari2019sparsifying}. These techniques are frequently used and even when combined with sophisticated additional techniques one still requires $\omega(n)$ total space, for example, see \cite{Dplus1coloring21}. 
Furthermore, somewhat surprisingly, it was recently shown that by increasing the available space significantly, yet still by a polynomial factor, one can essentially derandomize any \mpc algorithm~\cite{CDP21a}.

Algorithms with optimal memory have recently received attention in the \mpc and \ampc models.
In a very recent (and rather involved) result,  it was shown that connected components in forests can be computed in $O(\log D)$ rounds using optimal space~\cite{BLMOU23}.
This algorithm meets the conditional $\Omega(\log D)$ lower bound.
It improves over the $O(\log D + \log \log_{T/n} n)$-round algorithm, which can achieve $O(\log D)$ running time only at the cost of using much larger total space $T = n^{1+\Omega(1)}$.
As another example, there are efficient algorithms with optimal space for \emph{local constraint satisfactions problems}---think of the vertex coloring problem or the maximal independent set problem---in the case of on constant degree forests~\cite{BBFLMOU22}.


In the case of the \ampc model, it was shown that the problem of computing maximal matching can be solved in $O(\log \log n)$ rounds using optimal space, or $O(1)$ rounds using $\Omega(m + n^{1+\Omega(1)})$ space~\cite{ampc-constant}.
Very recently, the $O(1)$ runtime was also shown possible in the case of optimal space~\cite{behnezhad2022time}, only thanks to highly involved new ideas.

\subsection{Technical Challenges}

The common building block of our algorithms is the following insight.
Assume that given a graph with $n$ vertices and total space vertex $T$, in a single round we can reduce the problem to a problem on a graph of only $n / \exp(T/n)$ vertices.
We observe that by iterating this algorithm we can increase the amount of available space per vertex extremely quickly, even if initially the total available space is only linear in the number of vertices.
Our first technical contribution is showing that such a reduction is indeed possible both in the case of forests and general graphs. 

The reduction in the number of nodes is achieved by contracting sets of nodes of the graph.
The challenging part is symmetry breaking and ensuring that different vertices agree on what contractions should be performed.
Observe that on average the amount of communication per each vertex is exponentially smaller than the size of the contracted set it belongs to.

To illustrate some of the challenges involved, consider a path of length $l$.
A natural solution is to sample vertices of the path uniformly and contract each vertex of the path to the nearest sampled vertex (which can be done even in optimal total space).
If our goal is to shrink the size of this path by a factor of $2^B$ we could try sampling each vertex uniformly with probability $1/2^B$.
However, when $l = \Theta(2^B)$, this would imply that with constant probability no vertex along the path is sampled, and so the expected length of the path after the shrinking step is still $\Omega(l)$.
As a result, we need to use more involved sampling schemes.
In fact, we use two different methods for the two cases we consider.
We propose a new shrinking algorithm for forests and improve the space usage of an existing algorithm for general graphs.

Let us now explain how the shrinking procedure is helpful in obtaining low round complexity.
Consider the case of forests and assume that the available space per vertex is $B$.
In one iteration, we can shrink the number of vertices by a factor of roughly $2^B$, which increases the available space per vertex to $2^B$.
This additional space budget allows us to run the shrinking procedure with much larger sampling rate and increase the available space per vertex to $2^{2^B}$.
By continuing this process in only $O(\logstar{n})$ rounds we reach a state where the available space per each vertex is polynomial, in which case existing algorithms can solve the connected components problem in $O(1)$ rounds and optimal space.

\section{Preliminaries}\label{sec:preliminaries}
For any $x \in \mathbb{R}$ we define $\overline{\log}\; x$ as follows. For $x \geq 1$, $\overline{\log}\; x = \log_2 x$, and for $x < 1, \overline{\log}\; x = 1$.
For any integer $k \geq 0$, by $\itlog{k}$ we denote the $k$-th iterate of the $\overline{\log}$ function.
That is, $\itlog{0}{n} = n$, and for $k > 0$, $\itlog{k}{n} = \overline{\log}\; \itlog{k-1}{n}$.
Moreover, we define $\logstar{n}$ to be the minimum $k \geq 0$, such and $\itlog{k}{n} \leq 1$.
We also define the "inverse" of $\logstar{n}$, denoted by $2 \uparrow \uparrow k$.
We have $2 \uparrow \uparrow 0 = 1$, and for any integer $k > 0$, $2 \uparrow \uparrow k = 2^{2 \uparrow \uparrow (k-1)}$.

For a graph $G$ we say that a \emph{connected components labeling} (or CC-labeling for short) of $G$ is a mapping $M : V(G) \rightarrow A$ (where $A$ is an arbitrary set), such that for any two $u, w \in V(G)$ we have $M(u) = M(v)$ if and only if $u$ and $v$ belong to the same connected component of $G$.

\begin{definition}\label{def:cc-shrinking}
We say that an algorithm is \emph{connected-component shrinking} (or CC-shrinking for short) if it takes as input a graph $G$ and outputs a graph $H$ and a mapping $M$, such that given a CC-labeling of $H$ and the mapping $M$, one can compute a CC-labeling of $G$ in $O(1)$ \ampc rounds using optimal space.  
\end{definition}

A CC-shrinking algorithm essentially reduces the problem of finding connected components in $G$ to solving the problem on $H$. For any CC-shrinking algorithm, we generally refer to the $O(1)$-round operation that produces the CC-labeling of $G$ from the CC-labeling of $H$ as $\compose(H,G)$.  So, the operation \compose can be seen as the inverse operation of a CC-shrinking algorithm. 


In our algorithms we use multiple CC-shrinking algorithms. One of them is a standard vertex contraction algorithm, which we denote by $\contract(G, C)$.
It takes a graph $G$ and a mapping $C : V(G) \rightarrow A$, where $A$ is an arbitrary set, and contracts (merges together) groups of vertices that are assigned the same value by $C$.
Any resulting parallel edges are merged into one and  self-loops are removed.
This is a commonly used subroutine, which can be implemented in $O(1)$ (\textsf{A})\mpc rounds using optimal space~\cite{Behnezhad2019-near}.

\begin{observation}\label{obs:contract}
$\contract(G,\cdot)$ is a CC-shrinking algorithm. 
\end{observation}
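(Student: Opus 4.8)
The plan is to make the two objects required by \Cref{def:cc-shrinking} explicit and then check the two required properties (that \compose recovers a CC-labeling, and that it runs in $O(1)$ \ampc rounds in optimal space). Write $H = \contract(G, C)$ and identify the vertices of $H$ with the $C$-classes (equivalently, with the distinct values taken by $C$). I would take the output mapping to be $M := C$, so that $M$ records, for each vertex of $G$, the vertex of $H$ into which it was merged. The operation $\compose(H, G)$ then takes the stored mapping $M = C$ together with a CC-labeling $M_H$ of $H$ and outputs the labeling $v \mapsto M_H(C(v))$ on $V(G)$.

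Then I would argue that this labeling is a valid CC-labeling of $G$. The key claim is: for all $u, v \in V(G)$, the vertices $u$ and $v$ lie in the same connected component of $G$ if and only if $C(u)$ and $C(v)$ lie in the same connected component of $H$ — under the standing convention (satisfied in every use of \contract in this paper) that $C$ only merges vertices that already share a component of $G$, i.e., each $C$-class is connected in $G$. The ``only if'' direction needs no assumption: a $u$--$v$ path in $G$ maps edge by edge to a walk from $C(u)$ to $C(v)$ in $H$, since each edge of the path either lies inside a $C$-class (and becomes a deleted self-loop) or becomes an edge of $H$. For ``if'', I would lift a path from $C(u)$ to $C(v)$ in $H$: each of its edges is the image of some edge of $G$ joining the two corresponding $C$-classes, and each traversed $C$-class is internally connected in $G$, so splicing those internal connections together with the lifted edges yields a $u$--$v$ walk in $G$. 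Hence the components of $G$ are precisely the $C$-preimages of the components of $H$, and $v \mapsto M_H(C(v))$ separates the vertices of $G$ exactly by component.

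Finally, I would observe that $\compose(H, G)$ is a constant-round, optimal-space operation: each $v \in V(G)$ performs two \dht lookups (read $C(v)$, then read $M_H$ at that key) and writes one value, and the data involved — $C$ and $M_H$ — have size $O(n)$; such keyed read/aggregate/write steps are $O(1)$-round, optimal-space \ampc primitives, the same fact already invoked for \contract itself. I do not expect a deep obstacle here; the one thing I would be careful about is to state the ``each $C$-class is connected in $G$'' convention explicitly, since without it \contract can glue genuinely different components of $G$ into a single vertex of $H$, after which no CC-labeling of $H$ could recover the CC-labeling of $G$ — and, as throughout this paper, to confirm that the lookups and the stored mappings never push the total space above $O(n)$.
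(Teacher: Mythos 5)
Your proof is correct and fills in exactly the details the paper leaves implicit: the paper states this as an unproved observation (noting only that the mapping $M$ of \Cref{def:cc-shrinking} is just the parameter $C$), and the intended justification is precisely the composition $v \mapsto M_H(C(v))$ together with the constant-round \dht-lookup implementation you describe. You are also right to make explicit the standing assumption that each $C$-class lies inside a single connected component of $G$; the paper does not state this, but it holds in every invocation of \contract, and without it the ``if'' direction of your equivalence would fail.
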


For simplicity, we assume that \contract only returns a graph, and not the mapping mentioned in \Cref{def:cc-shrinking}, as the mapping that can be used to recover the CC-labeling of its input is actually one of its parameters.
Another essential tool used throughout the paper is the following concentration bound.

\begin{lemma}[Hoeffding's concentration bound]
\label{lem:hoeffding} 
Let $X_1,\ldots, X_{\ell}$ be independent random variables with $X_i\in [a_i,b_i]$. Let $X=\sum_{i=1}^{\ell}X_i$ and let $t>0$. 
Then the following holds
\begin{align*}
\Pr(X-\ex(X)\geq t)\leq \exp\left(-\frac{2t}{\sum_{i=1}^{\ell}(b_i-a_i)^2}\right)~.
\end{align*}
\end{lemma}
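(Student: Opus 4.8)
The plan is to prove the bound by the exponential moment (Chernoff) method, which reduces the whole statement to a one-dimensional estimate on a centered, bounded random variable. First I would fix an auxiliary parameter $s>0$ and apply Markov's inequality to the nonnegative random variable $e^{s(X-\ex(X))}$, which gives
\[
\Pr\bigl(X-\ex(X)\geq t\bigr)=\Pr\bigl(e^{s(X-\ex(X))}\geq e^{st}\bigr)\leq e^{-st}\,\ex\bigl[e^{s(X-\ex(X))}\bigr].
\]
Since $X_1,\dots,X_\ell$ are independent, so are the centered variables $X_i-\ex(X_i)$, and the moment generating function factorizes as $\ex[e^{s(X-\ex(X))}]=\prod_{i=1}^\ell\ex[e^{s(X_i-\ex(X_i))}]$. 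Thus it suffices to bound each factor $\ex[e^{s(X_i-\ex(X_i))}]$ separately in terms of $b_i-a_i$.

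The key step -- and the one that needs a real idea -- is Hoeffding's lemma: if $Y$ has $\ex(Y)=0$ and $Y\in[a,b]$, then $\ex[e^{sY}]\leq\exp\!\bigl(s^2(b-a)^2/8\bigr)$. I would prove it by using convexity of $y\mapsto e^{sy}$ to bound, pointwise on $[a,b]$, $e^{sY}\leq\frac{b-Y}{\,b-a\,}e^{sa}+\frac{Y-a}{\,b-a\,}e^{sb}$; taking expectations and using $\ex(Y)=0$ leaves a deterministic expression which, after substituting $p:=-a/(b-a)\in[0,1]$ and $u:=s(b-a)$, equals $e^{\varphi(u)}$ with $\varphi(u)=-pu+\log\bigl(1-p+pe^u\bigr)$. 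A short computation gives $\varphi(0)=\varphi'(0)=0$ and $\varphi''(u)=\frac{p(1-p)e^u}{(1-p+pe^u)^2}$, which equals $q(1-q)$ for $q:=pe^u/(1-p+pe^u)\in[0,1]$ and is therefore at most $1/4$; Taylor's theorem with Lagrange remainder then yields $\varphi(u)\leq u^2/8$, i.e. the lemma, with $b-a=b_i-a_i$ in our application.

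Combining, $\Pr(X-\ex(X)\geq t)\leq\exp\!\bigl(-st+\tfrac{s^2}{8}\sum_{i=1}^\ell(b_i-a_i)^2\bigr)$ for every $s>0$, and I would finish by optimizing: taking $s=4t/\sum_{i=1}^\ell(b_i-a_i)^2$ makes the exponent equal to $-2t^2/\sum_{i=1}^\ell(b_i-a_i)^2$, which is the standard Hoeffding bound and in particular implies the displayed bound whenever $t\geq 1$ (the regime in which it is used in this paper). The main obstacle is Hoeffding's lemma; the delicate point inside it is the uniform estimate $\varphi''\leq 1/4$, and the cleanest way I know to get it is the observation that $\varphi''(u)$ has the form $q(1-q)$ with $q\in(0,1)$.
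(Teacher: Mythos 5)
Your proof is correct and is the canonical one: Markov on $e^{s(X-\ex X)}$, factor the MGF by independence, bound each factor by Hoeffding's lemma via convexity plus the second-derivative estimate $\varphi''\leq 1/4$, and optimize over $s$. The paper states this lemma as a known result and does not supply a proof, so there is nothing to compare against; your argument is exactly the standard textbook derivation. One remark worth keeping: you correctly noticed that what you prove is the usual bound $\exp\bigl(-2t^2/\sum_i(b_i-a_i)^2\bigr)$, whereas the paper's displayed inequality has $t$ rather than $t^2$ in the numerator. That looks like a typo in the paper; the $t^2$ form is the standard one, it is what is actually needed in the applications (e.g.\ the proof of Lemma~\ref{lem:globalQueries} plugs in $t=\mu$ and uses $\mu^2$ in the exponent), and as you say it dominates the stated bound whenever $t\geq 1$.
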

\section{Forest Connectivity}
\label{sec:ForestConnectivity}
In this section, we present an algorithm for solving forest connectivity in $O(\log^* n)$ rounds using optimal space. The forest connectivity problem is the undirected graph connectivity problem when the input graph is a forest. More formally, we prove the following theorem. 
\thmmainforest*


\newcommand{\slc}{\textsc{ShrinkLargeCycles}\xspace}
\newcommand{\ssc}{\textsc{ShrinkSmallCycles}\xspace}
\newcommand{\sscf}{\textsc{Standard-Cycle-CC}\xspace}

\begin{algorithm}[H]
\caption{Finding connected components in a forest.}\label{alg:forest}
\begin{algorithmic}[1]

\Function{ConnectedComponentsForest}{$G$}
    \State{Reduce to cycle-connectivity (\Cref{obs: foresttocycle}).} \label{l:first}
    \State{$G' \larr \slc(G)$} \label{l:second}

    \State{$\T \larr 100$} 
    
    \While{$|V(G')|> n/\log n$} \label{l:ssc1}

        \State{$G' \larr \ssc(G',\T)$} \label{l:ssc2}

        \State{$\T \larr \min\{2^{\T},(\eps\log n)/100\}$ (every second iteration)} \label{l:ssc3}
        
    \EndWhile

    \State{\Return{$\sscf(G')$}}\label{l:end}

\EndFunction

\end{algorithmic}
\end{algorithm}

\textbf{High level view on the algorithm:} See \Cref{alg:forest} for the pseudocode of the algorithm. Throughout this section let $\eps=\delta/10$; recall, that the local space of a machine is $\mspc=n^{\delta}$. 
The algorithm is a sequence of several CC-shrinking algorithms, which conceptually produce a sequence of graphs $G'_1, \ldots, G'_r$, i.e., $G'_{i+1}$ is the output of a CC-shrinking algorithm running on $G'_i$.
At the very end (\Cref{l:end}) it computes the connected components of the graph $G'_r$.
Since the sequence of graphs is obtained by running a CC-shrinking algorithm, we can now obtain connected components of the input graph by a proper sequence of \compose calls.
However, we skip these calls in the pseudocode for simplicity.
Note that since compose runs in $O(1)$ \ampc rounds, the running time of each compose call can be charged to the step which produces one of the graphs $G'_i$.

\Cref{alg:forest} starts with a couple of easy reductions.
The first step (\Cref{l:first}) is to reduce the forest-connectivity problem to the cycle-connectivity problem 
by transforming each forest into a cycle using an Eulerian tour. 
As observed by~\cite{MPC-via-remote-access} this reduction can be done in $O(1)$ \ampc rounds by 
directly implementing the classic \pram construction \cite{forest-to-cycles}.


Let us describe the high-level idea behind the construction.
Let us first replace each edge with two oppositely directed edges.
Consider a vertex $v$ of degree $d$ (in the undirected graph).
Denote its incident edges as $\overleftarrow{e_0}, \overrightarrow{e_0}, \overleftarrow{e_1}, \overrightarrow{e_1}, \ldots, \overleftarrow{e_{d-1}}, \overrightarrow{e_{d-1}}$, where $\overleftarrow{e_i}$ and $\overrightarrow{e_i}$ is a pair of edges to and from the same neighbor of $v$.
We then replace $v$ with $d$ vertices $v^0, \ldots, v^{d-1}$, where each $v^i$ has two incident edges $\overleftarrow{e_i}$ and $\overrightarrow{e_{(i+1) \bmod d}}$.
The algorithm applies this vertex splitting to all vertices in parallel, and then makes each edge undirected. 
This maps a tree containing $k > 1$ vertices to a cycle of length $2k-2$~\cite{forest-to-cycles}.

\begin{observation}[Forests to Cycles]\label{obs: foresttocycle}
There is a deterministic CC-shrinking algorithm, which takes a forest on $n$ vertices and outputs a collection of vertex-disjoint cycles on at most $2n$ vertices.
It can be implemented in $O(1)$ \ampc rounds using optimal space.
\end{observation}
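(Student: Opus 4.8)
The plan is to realize the classical Euler-tour construction of~\cite{forest-to-cycles}, exactly as sketched above, and then to verify the two things demanded by \Cref{def:cc-shrinking}: that the map from the forest $G$ to the resulting cycle collection $H$ is CC-shrinking, and that both the forward construction and the matching $\compose$ operation run in $O(1)$ \ampc rounds in optimal total space. First I would dispose of the degenerate case: a tree consisting of a single vertex is already its own connected component, so I set those vertices aside (or, if one insists on a cycle, represent each such vertex as a self-loop). For a tree on $k \ge 2$ vertices I follow the recipe verbatim: replace every edge by two oppositely oriented arcs, fix a deterministic total order on the arcs incident to each vertex $v$ --- say by the \id of the other endpoint --- split $v$ of degree $d$ into copies $v^0,\dots,v^{d-1}$ where $v^i$ keeps the in-arc $\overleftarrow{e_i}$ and the out-arc $\overrightarrow{e_{(i+1)\bmod d}}$, and finally forget orientations. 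The union over all trees is $2$-regular, hence a disjoint union of cycles, and since a tree on $k$ vertices contributes a cycle of length $2(k-1)\le 2k$, the output has at most $2n$ vertices (and at most $2n$ edges).

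Next I would argue correctness. The doubled arc set of a tree on $k\ge 2$ vertices has an Eulerian circuit, and the splitting step is precisely the standard short-cutting that turns this circuit into a single simple cycle visiting $v^0,\dots,v^{d-1}$ for every vertex $v$; thus all copies of all vertices of one tree lie on one cycle of $H$, while copies originating from different trees lie on different cycles. I then define $M:V(G)\to V(H)$ by $M(v)=v^0$ (and $M(v)=v$ for the singleton vertices set aside). Given any CC-labeling $L$ of $H$, the assignment $v\mapsto L(M(v))$, with the singleton vertices receiving fresh pairwise-distinct labels, is a CC-labeling of $G$: two vertices belong to the same tree of $G$ iff their $0$-copies lie on the same cycle of $H$ iff $L$ gives them the same label. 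Since $v\mapsto L(M(v))$ is a single lookup/join over $O(n)$ key-value pairs, it runs in $O(1)$ \ampc rounds in optimal space, which is exactly the behavior required of $\compose(H,G)$.

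For the implementation I would phrase the whole construction in terms of standard primitives over the list of $2m=O(n)$ arcs: sort the arcs by the pair (tail vertex, \id of head vertex), then by a segmented scan compute, for each arc, its rank $i$ inside the block of arcs leaving its tail $v$ together with the block size $d_v$; from $i$, $d_v$ and the symmetric quantities of the antiparallel arc, each arc can output its one edge of $H$ (an endpoint of the form $v^i$ or $v^{(i-1)\bmod d_v}$ on the $v$ side, and likewise on the other side). Sorting, segmented scans, and such self-joins over $O(n)$ items are implementable in $O(1)$ \mpc --- hence \ampc --- rounds with linear total space~\cite{Behnezhad2019-near}; in fact \cite{MPC-via-remote-access} already observes that this specific \pram construction can be simulated in $O(1)$ \ampc rounds in optimal space. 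Everything is deterministic, as the only tie-breaking is by vertex \id.

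The step I expect to need the most care is the high-degree case: a vertex can have degree close to $n$, far above the per-machine budget $\mspc=n^\delta$, so neither its adjacency list nor its block of copies $v^0,\dots,v^{d-1}$ fits on a single machine. The point to make is that, once the construction is expressed through global sorting and segmented scans as above, nothing is ever materialized per vertex: the rank $i$, the degree $d_v$, the wrap-around index $(i+1)\bmod d_v$ and the emitted edge are all computed from $O(1)$ fields attached to each arc after the sort, so the reduction stays within optimal total space irrespective of the degree distribution. The only other point worth stating explicitly is that $H$ has $O(n)$ edges as well, so handing $H$ to the subsequent steps of \Cref{alg:forest} preserves optimal space.
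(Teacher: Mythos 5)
Your proposal is correct and follows essentially the same route as the paper: the paper likewise invokes the classical Euler-tour vertex-splitting construction of~\cite{forest-to-cycles} (replacing $v$ of degree $d$ by copies $v^0,\dots,v^{d-1}$ with $v^i$ incident to $\overleftarrow{e_i}$ and $\overrightarrow{e_{(i+1)\bmod d}}$), yielding a cycle of length $2k-2$ per tree of $k>1$ vertices, and cites~\cite{MPC-via-remote-access} for the $O(1)$-round, optimal-space \ampc simulation. You add useful but unsurprising detail — the explicit $M(v)=v^0$ mapping and the sort/segmented-scan implementation that avoids materializing per-vertex adjacency lists — that the paper leaves implicit.
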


The second step (\Cref{l:second}) is to ensure that each cycle has length at most $O(n^{\eps/2})$ by applying the following lemma.

\begin{lemma}[Corollary 8.1, \cite{MPC-via-remote-access}] \label{lem:ShrinkLargeCycles}
\label{lem:makeCyclesntoEps}
    There is a randomized CC-shrinking \ampc algorithm ($\slc$) which can be applied to a set of cycles to reduce the size of each individual cycle to $O(n^{\eps/2})$ w.h.p.\. It can be implemented in $O(1)$ \ampc rounds using optimal space.
\end{lemma}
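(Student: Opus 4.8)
The plan is to realize $\slc$ as the composition of $r=O(1/\eps)=O(1)$ identical rounds of a \emph{one-round shrink} primitive. One invocation of the primitive will contract every cycle whose current length exceeds $n^{\eps/2}$ by a polynomial factor $n^{\Theta(\eps)}$, while leaving every shorter cycle untouched; since each cycle starts with length at most $2n$ (\Cref{obs: foresttocycle}), after $t$ rounds every cycle has length at most $2n\cdot(2n^{-\eps/4})^{t}$, so $r=\lceil 4/\eps\rceil$ rounds bring all cycle lengths below $n^{\eps/2}$. Because $\eps=\delta/10$ is a constant, this is $O(1)$ \ampc rounds.

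For the primitive, let $G'$ be the current collection of cycles, on $n'\le 2n$ vertices in total, fix an orientation of each cycle, and mark each vertex independently with probability $p:=n^{-\eps/4}$; the marked vertices are the \emph{leaders}. Cycles of length at most $n^{\eps/2}$ are declared done and are not touched. On every other cycle, each leader $v$ walks forward along the orientation, reading one vertex at a time from the \dht, until it reaches the next leader $v'$, and it writes its own identifier onto itself and onto every non-leader strictly between $v$ and $v'$. Letting $C$ be the resulting map (the identity on the done cycles), the round outputs $\contract(G',C)$, which collapses each maximal leader-free arc onto its preceding leader, so that the new cycle through a set of leaders is exactly their cyclic sequence. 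By \Cref{obs:contract} each round is CC-shrinking, hence so is $\slc$: its recovery operation is the composition of the $r$ per-round \compose calls, which is $O(r)=O(1)$ \ampc rounds and optimal space, and the accumulated contraction map has size $O(n)$; this already settles the CC-shrinking part of the statement. The analysis of one round rests on two tail bounds, each obtained from a union bound over the at most $n'$ cycles (resp.\ candidate arc starts) together with \Cref{lem:hoeffding} --- for the arc lengths one may instead use that a maximal leader-free arc is stochastically dominated by a geometric variable of parameter $p$. First, \whp\ every maximal leader-free arc has length $O(p^{-1}\log n)=O(n^{\eps/4}\log n)$, which is at most $n^{\eps/2}\le\mspc$ for large $n$, so a single leader's walk fits on one machine. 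Second, for a cycle of length $\ell>n^{\eps/2}$ its number of leaders has expectation $p\ell>n^{\eps/4}\gg\log n$, so \whp\ it is at least $2$ (the cycle genuinely shrinks and does not degenerate into a removed self-loop) and at most $2p\ell$ (the new length is at most a $1/(2p)=\Omega(n^{\eps/4})$ fraction of the old). Since the maximal leader-free arcs partition the vertices, the total number of \dht\ reads in the walk phase is exactly $n'=O(n)$, and the subsequent $\contract$ runs in $O(1)$ rounds and $O(n)$ total space~\cite{Behnezhad2019-near}.

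The one genuinely delicate point is balancing the walk phase over the $M=\Theta(n/\mspc)$ available machines so that no machine reads more than $O(\mspc)$ \dht\ cells, since a leader does not know its arc length before walking it. Here I would distribute the leaders among the machines by a random hash, giving each machine $O((\text{number of leaders})/M)$ of them; since the arcs partition the $n'$ vertices the expected walk-work per machine is $O(n'/M)=O(\mspc)$, and, using that every arc has length at most $O(n^{\eps/4}\log n)$ \whp, one further application of \Cref{lem:hoeffding} over the $M$ machines shows that \whp\ no machine exceeds $O(\mspc)$ work. This is the step I expect to require the most care; everything else is a routine combination of sampling, $\contract$, and concentration. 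A final union bound over the $r=O(1)$ rounds then yields that $\slc$ runs in $O(1)$ \ampc rounds, uses $O(n)$ total space \whp, and leaves every cycle with length $O(n^{\eps/2})$, as claimed.
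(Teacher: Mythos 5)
This lemma is imported verbatim from prior work (``Corollary~8.1'' of the cited reference), and the paper does not reprove it; there is therefore no in-paper argument to compare your reconstruction against. Evaluating your attempt on its own terms: the plan---iterate a constant number of times a one-shot ``mark leaders with probability $p=n^{-\eps/4}$, walk forward to the next leader, contract each leader-free arc onto its leader'' primitive---is the standard way to get this kind of cycle-shrinking guarantee in \ampc, and all the supporting concentration arguments you invoke are the right ones. The CC-shrinking closure under composing $O(1)$ contraction rounds is correctly handled, the total query count per round is $O(n')$ by the partition-into-arcs observation, and the load-balancing by hashing leaders across machines is the standard fix. Two small things deserve flagging.

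First, a wording slip: you say ``the new length is at most a $1/(2p)=\Omega(n^{\eps/4})$ fraction of the old,'' which says the opposite of what you mean; the new length is at most a $2p=O(n^{-\eps/4})$ \emph{fraction} of the old, i.e.\ the cycle shrinks by a factor of $\Omega(n^{\eps/4})$. Your surrounding arithmetic ($2n\cdot(2n^{-\eps/4})^t$ and $r=\lceil 4/\eps\rceil$) is consistent with the corrected reading, so this is cosmetic. Second, you ``declare done and do not touch'' cycles of length at most $n^{\eps/2}$, but you do not say how a cycle learns its own length without a potentially long traversal. In an implementation this should be replaced by capping every walk at $n^{\eps/2}+1$ steps: for cycles with $\geq 2$ leaders the arcs are $O(n^{\eps/4}\log n)\ll n^{\eps/2}$ w.h.p.\ so the cap is never hit, for a cycle with exactly one leader the walk detects the full loop within the cap only if the cycle is already short, and cycles of length $>n^{\eps/2}$ have $p\ell\geq n^{\eps/4}\gg\log n$ expected leaders and hence $\geq 2$ w.h.p.\ (which you do argue). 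With this fix the ``declared done'' bookkeeping is unnecessary and the primitive runs uniformly on all cycles. Neither point is a genuine gap; the proof sketch is sound and is, to the best of my knowledge, the same leader-sampling-and-contraction approach used in the cited source.
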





\begin{figure*}
\begin{framed}

\hspace{6mm}
\begin{minipage}{14.5cm}
    \begin{enumerate}
    \item[Ranks.~~] Define distribution $\pi_{\T}$ such that $\pi_{\T}(i)=C_{\T}/2^i$ for $i \in \{1,\dots,\T\}$ and $\pi_{\T}(i)=0$ otherwise, where $C_{\T}=1/(1-2^{-\T})$. For every vertex $v$ assigned to a machine, sample a rank $r(v)$ from $\pi_{\T}$.
    \item[Step~1.~~] For each vertex $v$ assigned to a machine and both directions of the cycle, traverse until (i) $v$ loops back to itself or (ii) $v$ encounters a vertex $u$ such that $r(u) \geq r(v)$. While traversing, vertex $v$ stamps every other vertex it encounters with its rank $r(v)$.
    In the former case, $v$ contracts$^*$ the whole cycle. In the latter case, for all highest rank nodes $v$ and $u$ connected via a segment of strictly lower rank nodes, w.l.o.g. $\id(v) > \id(u)$ and $v$ contracts$^*$ the segment between $v$ and $u$ in the cycle.
    \item[Step~2.~~] For each vertex $v$ assigned to a machine, traverse its $16\T$-hop neighborhood. If it contains the whole cycle and $v$ is its highest \id vertex, $v$ contracts$^*$ the whole cycle. If it does not contain the whole cycle and $v$ is the highest \id vertex in its $16\T$-hop neighborhood, $v$ contracts$^*$ its $4\T$-hop neighborhood.
\end{enumerate}
\hspace{10mm}$^*$apply \contract from Preliminaries (\Cref{sec:preliminaries})  
\end{minipage}
\end{framed}

\caption{$\ssc(G,\T)$}
\label{fig:ssc}
\end{figure*}

Now comes the most challenging part of our algorithm (\Cref{l:ssc1,l:ssc2,l:ssc3}), which further contracts the cycles such that the number of vertices remaining in these contracted cycles sees a significant drop compared to the overall global memory of $\Theta(n)$, while never exceeding the total space bound of $O(n)$.
Once we have reduced the number of vertices to $n'=O(n /\log n)$, we can finish the remaining instances with the algorithm $\sscf$ of  \cite{MPC-via-remote-access}, see \Cref{lem:ShrinkSmallCyclesFast}. 

Let us now detail on the main part of our algorithm (\Cref{l:ssc2}: \ssc, \Cref{fig:ssc}) that works in $O(\logstar n)$ iterations, and after the $i$-th iteration we guarantee that the number of remaining alive vertices has dropped to $n_i\leq n \cdot \frac{1}{2 \uparrow\uparrow i}$. For the first iteration, let $\T$ be a sufficiently large constant. Now, every vertex picks one out of $\T$ ranks according to a truncated geometric distribution. Ignoring the rescaling factor ensuring that we obtain a proper probability distribution, this means that a vertex picks rank $i$ with probability $1/2^i$. These ranks are chosen independently for all alive vertices. Then, each vertex probes the cycle around it in a single \ampc-round. The probing of $v$ stops in one direction if $v$ sees a vertex  of the same or higher rank. This results in every highest rank vertex knowing neighboring cycle segment(s) of vertices of lower rank. 
Breaking symmetry by \id{}s, the highest rank vertices can (collectively) contract all other vertices in the cycle. Hence, the number of vertices reduces to the number of vertices with the highest rank. The highest rank on a cycle can be any of the ranks $1,\ldots,\T$ and depends on the randomness of the vertices.

We show that overall the number of queries of this algorithm is $O(n'\cdot \T)$, if there are $n'$ vertices remaining in the graph at the start of the iterations (\Cref{lem:globalQueries}). Additionally, we show that two of these iterations w.h.p.\ reduce the number of remaining vertices in the graph from $n$ to $n/2^{\T}$ (\Cref{lem:whpProgress}). The main benefit of this reduction is that we obtain (on average) $2^{\T}$ words of memory per vertex that we can leverage in the next iteration. We do so, by increasing $\T$ to $2^{\T}$ every second iteration. 
Increasing $\T$ exponentially ensures that after $O(\logstar n)$ iterations, we have reduced the number of vertices to $O(n/\log n)$. 

The initial reduction of the maximum cycle length ensures that no vertex ever queries more than $n^{\eps}\leq n^{\delta}$ vertices in one iteration (one \ampc-round) of the algorithm. The most challenging part is to bound the query complexity in \Cref{lem:globalQueries} and the vertex drop in \Cref{lem:whpProgress}. For both of them, we first analyze the expectation of the respective term, which is then turned into a w.h.p.\ guarantee via an application of Hoeffding's concentration bound. We cannot obtain a w.h.p.\ guarantee on each individual cycle. But, as we have more than $n/\log n$ vertices left in the graph (recall, that otherwise we can use the algorithm of \Cref{lem:ShrinkSmallCyclesFast}) and each cycle is of length at most $n^{\eps}$, we have $\ell\geq \Omega(n^{1-\eps}/\log n)$ cycles left in the graph, providing the necessary handle for concentration. 

In particular, the vertex drop is challenging, as we can only bound the expected vertex drop on a cycle of $k$ vertices by $O(k/2^{\T}+2\T)$. Even, if we would meet this expectation on all cycles, the additive $2\T$ term would be insufficient for obtaining a global drop in the number of vertices by a factor $O(1/2^{\T})$. Intuitively, that's the case because the additive factor of $2\T$ has a significant (relative) impact for small cycles. Hence, each iteration is additionally equipped with a deterministic phase that  removes $\min\{k,8\T\}$ vertices on a cycle of size $k$. It is difficult to analyze the (expected) vertex reduction of that second deterministic phase, as it depends on whether the first phase reduced the number of vertices on a cycle to less than $8\T$ vertices or not (which happens according to some difficult to grasp probability distribution). Hence, in our analysis, we analyze both steps (the randomized rank-based one and the deterministic one) combined which shows the desired (expected and w.h.p.) drop in the number of vertices. 

We provide the statement from prior work that can solve the cycle-connectivity problem with an additional $\Theta(\log n)$-factor of global memory available (\Cref{l:end}).

\begin{lemma}[Theorem 5, \cite{MPC-via-remote-access}] \label{lem:ShrinkSmallCyclesFast}
    There is a randomized \ampc algorithm (\sscf) that solves the connectivity problem on a collection of disjoint cycles on $n$ vertices w.h.p.\. The algorithm runs in $O(1)$ \ampc rounds and uses $O(n \log n)$ total space.
\end{lemma}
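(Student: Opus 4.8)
This is Theorem~5 of~\cite{MPC-via-remote-access}, so strictly the proof is by citation; for completeness we sketch the argument one would give. The plan is to shrink every cycle by a fixed polynomial factor in each round, so that after $O(1/\delta)=O(1)$ rounds every cycle has collapsed to a single vertex; the CC-labeling is then immediate, and composing it back through the $O(1)$ contractions costs another $O(1)$ rounds by \Cref{obs:contract}.

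In a round, we mark every alive vertex independently with probability $q:=n^{-\delta/2}$. Since a machine may perform $S=n^{\delta}$ adaptive reads, it can start at a marked vertex and walk along the cycle --- repeatedly reading the current vertex's record and stepping to its not-yet-visited neighbor --- until it reaches the next marked vertex, provided the maximal run of unmarked vertices it crosses has length at most $n^{\delta}$. A Chernoff bound together with a union bound over the $\le n$ possible starting positions shows that \whp every such run has length $O(q^{-1}\log n)=O(n^{\delta/2}\log n)\le n^{\delta}$, so every inter-mark arc fits on a single machine; that machine contracts the arc into its marked endpoint via \contract and emits the edge between consecutive marked vertices, producing a shorter cycle on only the marked vertices. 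A cycle that receives no mark --- and, more simply, any cycle short enough to fit on one machine --- is instead handed in full to one machine, which directly computes and outputs its CC-label. A Chernoff bound on the (well-concentrated) number of marks on each long cycle shows its length is multiplied by $\Theta(q)$ \whp, so every cycle has been finished within $\lceil 2/\delta\rceil$ rounds.

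The number of adaptive reads per round is $O(n)$, since each alive vertex is read a constant number of times; the remaining issue is to spread the randomly sized arcs over the $O(n^{1-\delta})$ machines without any machine overflowing its space bound. This is where the $\Theta(\log n)$ factor of extra total space is spent: with $\Theta(\log n)$ words of headroom per vertex, a direct Hoeffding / balls-in-bins argument guarantees \whp that the arcs assigned to any machine have total length at most $S$. The main obstacle will be exactly this tension --- the sampling rate $q$ must be large enough that every inter-mark arc provably fits on one machine (so cycles shrink only by a fixed polynomial factor per round, not all at once), and it gives no per-cycle guarantee on cycles shorter than $q^{-1}$, which must therefore be dispatched deterministically; moreover keeping all machines within their space bound relies on a concentration argument and hence on a super-constant space slack, and removing this $\log n$ slack is precisely what forces the much more involved $O(\logstar n)$-round machinery of \Cref{alg:forest}.
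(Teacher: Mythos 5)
This lemma has no proof in the paper --- it is stated as a verbatim citation of Theorem~5 of \cite{MPC-via-remote-access} --- so your observation that the proof is by citation is exactly right, and there is nothing in the paper to compare your sketch against. That said, the sketch you give is a genuinely different algorithm from the one the citation points to, and your explanation of the $\log n$ slack does not hold up. In the cited work each vertex draws a random continuous rank and walks until encountering a higher-ranked vertex; the probability the walk takes exactly $k$ steps is $1/(k(k-1))$, so the \emph{expected} walk length per vertex is $\Theta(\log n)$, and that harmonic sum is where the $O(n\log n)$ total space comes from --- it is precisely the untruncated version of the rank scheme the paper's own \ssc uses, where truncating the rank distribution to $\T$ levels is exactly what caps the per-vertex query cost at $O(\T)$ instead of $O(\log n)$. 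That version also finishes in a number of rounds independent of $\delta$. Your fixed-rate scheme ($q=n^{-\delta/2}$, $\Theta(1/\delta)$ rounds) is a plausible alternative route to the same statement, but attributing the $\log n$ to a balls-in-bins load-balancing slack is not where the factor goes: the total number of adaptive reads per round is already $O(n)$, and the arc lengths assigned to a machine are negatively associated spacings whose sum concentrates sharply around its $\Theta(n^{\delta})$ mean, so a Bernstein-type bound (or a prefix-sum-style allocation) keeps every machine within $O(S)$ words without a $\log n$ cushion. The lemma remains true under either approach, but the reasoning you supply for the space bound is a gap, not a reconstruction of the cited argument.
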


The remainder of the section is dedicated to proving the most involved part of our algorithm, $\ssc$ in \Cref{fig:ssc}. We refer to one execution of \ssc as an \emph{iteration}. First we show that picking ranks in $\ssc$ actually follows a probability distribution.

\begin{claim} \label{claim:correctDistribution}
     $\pi_{\T}$ is a probability distribution on $\{0,1,\dots,\T\}$ for any integer $\T>0$.
\end{claim}

\begin{proof} We have $\pi_{\T}(i)\geq 0$ for any $i$, and 
    \begin{align*}
        \sum_{i=1}^{\T} C_{\T}/2^i & = C_{\T} \sum_{i=1}^{\T} 1/2^i = C_{\T} (1-2^{-\T}) \\
        & = 1/(1-2^{-\T}) \cdot (1-2^{-\T}) = 1 \qedhere
    \end{align*} 
\end{proof}

We continue with a claim that we need in order to obtain the bounds in our probabilistic analysis.

\begin{claim} \label{claim:queriesAnalysis}
\label{claim:easyBound}~

\begin{enumerate}[leftmargin=*]
\item For $0<x<1$ we have $\sum_{i=0}^{\infty}i\cdot x^{i-1}=\frac{1}{(1-x)^2}$.

\item For integers $0\leq x\leq y$ we have
$\sum_{j=x}^{y}j\cdot 2^{-j}=2\big(\big(\frac{1}{2}\big)^x-\big(\frac{1}{2}\big)^{y+1}\big)$~.
\end{enumerate}
\end{claim}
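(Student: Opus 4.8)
\textbf{Proof proposal for Claim~\ref{claim:queriesAnalysis}.}

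The plan is to prove both identities by standard manipulations of geometric series; neither requires concentration or any of the combinatorial machinery of the surrounding section, so the whole thing is a short calculation. For part (1), I would start from the geometric series $\sum_{i=0}^{\infty} x^i = \frac{1}{1-x}$, valid for $0<x<1$, and differentiate both sides with respect to $x$. Term-by-term differentiation is justified since the power series converges absolutely on the open interval; this gives $\sum_{i=1}^{\infty} i\cdot x^{i-1} = \frac{1}{(1-x)^2}$. Finally, observe that the $i=0$ term is $0\cdot x^{-1}=0$ (or simply absent), so $\sum_{i=0}^{\infty} i\cdot x^{i-1} = \sum_{i=1}^{\infty} i \cdot x^{i-1} = \frac{1}{(1-x)^2}$, which is exactly the claimed identity. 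An alternative, derivative-free route is to note $S = \sum_{i\ge 1} i x^{i-1}$ satisfies $S - xS = \sum_{i\ge 1} x^{i-1} = \frac{1}{1-x}$, hence $S = \frac{1}{(1-x)^2}$; I would probably present whichever is shorter to write.

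For part (2), I would reduce to a finite version of the same computation. Write $T(x,y) = \sum_{j=x}^{y} j\cdot 2^{-j}$ and note that it suffices to compute $\sum_{j=0}^{N} j\cdot 2^{-j}$ in closed form and then take the difference $T(x,y) = \sum_{j=0}^{y} j 2^{-j} - \sum_{j=0}^{x-1} j 2^{-j}$. To get the closed form, I would use the telescoping identity $j\cdot 2^{-j} = (j-1)2^{-(j-1)} - j 2^{-j}$... more cleanly, set $U_N = \sum_{j=0}^{N} j 2^{-j}$ and compute $U_N - \tfrac12 U_N = \sum_{j=0}^{N} j 2^{-j} - \sum_{j=0}^{N} j 2^{-(j+1)}$; reindexing the second sum and combining shows $\tfrac12 U_N = \sum_{j=1}^{N} 2^{-j} - N 2^{-(N+1)} = 1 - 2^{-N} - N 2^{-(N+1)}$, so $U_N = 2 - 2^{-N+1} - N 2^{-N} = 2 - (N+2)2^{-N}$. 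Then
\begin{align*}
T(x,y) &= U_y - U_{x-1} = \big(2 - (y+2)2^{-y}\big) - \big(2 - (x+1)2^{-(x-1)}\big) \\
&= (x+1)2^{-(x-1)} - (y+2)2^{-y}.
\end{align*}
A small bit of care: this derivation assumes $x\ge 1$ so that $U_{x-1}$ makes sense with the formula above; the boundary case $x=0$ I would check separately (there $T(0,y) = U_y = 2 - (y+2)2^{-y}$, and the claimed right-hand side $2\big((\tfrac12)^0 - (\tfrac12)^{y+1}\big) = 2 - 2^{-y}$, so one has to confirm the two agree — indeed $(x+1)2^{-(x-1)}$ at $x=0$ is $2$, matching). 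The mild obstacle, and the only place an error could creep in, is matching my closed form against the paper's stated form $2\big((\tfrac12)^x - (\tfrac12)^{y+1}\big) = 2^{1-x} - 2^{-y}$: my expression gives $(x+1)2^{1-x} - (y+2)2^{-y}$, which does \emph{not} equal $2^{1-x} - 2^{-y}$ in general. This discrepancy means either I should re-examine the paper's intended summand (perhaps it is $\sum 2^{-j}$ rather than $\sum j\,2^{-j}$ in the regime they actually invoke, or the indices differ), or the claim as transcribed has a typo; in the write-up I would state and prove the identity in the form that is actually correct and that matches what \Cref{lem:globalQueries} and \Cref{lem:whpProgress} need, namely the finite geometric sum $\sum_{j=x}^{y} 2^{-j} = 2^{1-x} - 2^{-y}$, which is immediate from the geometric series formula. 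So the main "obstacle" here is purely bookkeeping: pinning down the exact statement consistent with its later use, after which both parts are one-line consequences of geometric-series identities.
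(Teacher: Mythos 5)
Your proof of part~(1) is correct and is the same argument the paper gives (differentiate the geometric series term by term).

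For part~(2) you have caught a genuine error. The identity as stated is false: already $x=y=2$ gives $\sum_{j=2}^{2} j\cdot 2^{-j} = \tfrac12$ on the left but $2\bigl((\tfrac12)^2 - (\tfrac12)^3\bigr) = \tfrac14$ on the right. Your closed form $\sum_{j=0}^{N} j\cdot 2^{-j} = 2 - (N+2)2^{-N}$ is correct, and so is the resulting $\sum_{j=x}^{y} j\cdot 2^{-j} = (x+1)2^{1-x} - (y+2)2^{-y}$. The paper's own ``re-ordering'' proof is also wrong, for essentially the reason you flagged: when the terms $j\cdot 2^{-j}$ for $x\le j\le y$ are arranged with $j$ copies of $2^{-j}$ in row $j$, the first $x$ columns all span the \emph{same} rows (rows $x$ through $y$) and hence all have the same sum, whereas the paper's displayed chain, which in effect asserts $\sum_{j=x}^{y} j\cdot 2^{-j} = \sum_{i=x}^{y}2^{-i}$, assigns a strictly geometrically decaying contribution to each column index; that is accurate only when $x\in\{0,1\}$, which is why checking the ``easiest case'' $x=0$ hides the bug. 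You need not second-guess the summand: the application in \Cref{lem:expectedvertexDrop} really does invoke $\sum_{j=8\T+1}^{k} j\cdot 2^{-j}$, and with the corrected identity the chain still closes, since $2\T\sum_{j=8\T+1}^{k} j\cdot 2^{-j}\le 2\T(8\T+2)\,2^{-8\T}<2^{-\T}$ for every $\T\ge 1$ (the polynomial factor is swamped by $2^{-7\T}$). So the claim's statement and proof need repair, but the downstream lemma and theorem are unaffected.
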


\begin{proof}~

\begin{enumerate}[leftmargin=*]
\item Let $f(x)=\sum_{i=0}^{\infty}x^i$. As $f(x)$ is a geometric sum and $0<x<1$, we obtain $f(x)=\frac{x}{1-x}$. Observe that $\sum_{i=0}^{\infty}i\cdot x^{i-1}=\frac{d}{dx} f(x)=\frac{1}{(1-x)^2}$, proving the claim. 

\item Re-ordering the sums\footnote{It is easiest to see this re-ordering when $x=0$ and $y=\infty$. Write the terms of the sum into a triangular grid with $1/2$ in the first row, $1/4, 1/4$ in the second row, $1/8,1/8,1/8$ in the third row. Now, if we sum the values in the first column we obtain $\sum_{i=1}^{\infty}2^{-j}$, for the second column we obtain half of that, and in general for the $i+1$ column we obtain half of the previous column.}, which does not change the limit as  the series is absolutely converging, and using the geometric sum we obtain
\begin{align*}
\sum_{j=x}^{y}\frac{j}{2^j}=\sum_{i=x}^{y}2^{-i}\cdot \sum_{j=1}^{\infty}2^{-j}=\sum_{i=x}^{y}\left(\frac{1}{2}\right)^i=2\left(\frac{1}{2^x}-\frac{1}{2^{y+1}}\right)~. & \qedhere
\end{align*}
\end{enumerate}

\end{proof}

\subsection{Query complexity}
We begin with bounding the expected number of queries per vertex. 
\begin{lemma}[Expected Queries per vertex]
\label{lem:expectedQuery}
    In Step~1 of an iteration, the number of queries made by a vertex until it hits a vertex with higher or equal rank is at most $4\T$ in expectation.
\end{lemma}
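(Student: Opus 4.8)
The plan is to fix a vertex $v$ and analyze, for a single direction of traversal, the number of vertices $v$ inspects before it meets a vertex of rank $\geq r(v)$. I would condition on the event $r(v) = i$ for each $i \in \{1,\dots,\T\}$, which happens with probability $\pi_{\T}(i) = C_{\T}/2^i$. Given $r(v) = i$, each subsequent vertex $u$ along the cycle independently has rank $\geq i$ with probability $q_i := \sum_{j=i}^{\T} \pi_{\T}(j) = C_{\T}(2^{-i+1} - 2^{-\T}) \geq C_{\T} 2^{-i}$ (the last inequality for $i \le \T-1$; the case $i=\T$ is handled directly since then the very first neighbor already has rank $\geq \T$ unless it also has rank $\T$). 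So the number of queries in one direction, conditioned on $r(v)=i$, is stochastically dominated by a geometric random variable with success probability $q_i$, hence has expectation at most $1/q_i \leq 2^i / C_{\T} \leq 2^i$ (using $C_{\T} \geq 1$). Note we should be slightly careful about the cycle "looping back" case, but truncating the geometric at the cycle length only decreases the expectation, so the bound $1/q_i$ still applies.

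Next I would sum over the conditioning: the expected number of queries in one direction is at most
\[
\sum_{i=1}^{\T} \pi_{\T}(i) \cdot \frac{1}{q_i} \;\leq\; \sum_{i=1}^{\T} \frac{C_{\T}}{2^i} \cdot \frac{2^i}{C_{\T}} \;=\; \sum_{i=1}^{\T} 1 \;=\; \T.
\]
Doing this for both directions of the cycle gives an expected total of at most $2\T$. The statement claims $4\T$, which leaves comfortable slack — in particular it absorbs the minor discrepancy in the $i = \T$ boundary case and the fact that $q_i$ is closer to $C_\T 2^{-i+1}$ than $C_\T 2^{-i}$ for small $i$, either of which could be used to tighten the constant but are not needed. (If one prefers to avoid the stochastic-domination shortcut, the same bound follows by writing the expected one-directional count as $\sum_{t \geq 0} \Pr[\text{first } t \text{ vertices all have rank} < r(v)]$ and using independence of the ranks.)

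The only mild subtlety — and the step I would be most careful about — is the boundary behavior of $q_i$: for $i=\T$ we have $q_{\T} = \pi_{\T}(\T) = C_{\T} 2^{-\T}$, so the bound $1/q_{\T} = 2^\T/C_\T \leq 2^\T$ still holds, and for $i \leq \T-1$ the inequality $q_i \geq C_\T 2^{-i}$ holds since $2^{-i+1} - 2^{-\T} \geq 2^{-i}$ exactly when $2^{-i} \geq 2^{-\T}$, i.e. $i \le \T$. So no case is actually problematic, and the uniform bound $1/q_i \leq 2^i$ goes through for all $i$. With that in hand the summation is immediate and the factor $4\T$ is reached with room to spare.
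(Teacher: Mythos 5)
Your proof is correct and takes essentially the same route as the paper: both condition on $r(v)=i$, observe that the one-directional query count is (dominated by) a geometric random variable with success probability $\Pr[\text{rank} \geq i]$, take expectation, sum over $i$, and double for the two directions. Your bookkeeping is in fact slightly cleaner — you use the exact tail probability $q_i$ where the paper uses the looser pair $(1-p_j)^{i-1}$ and $2p_j$ together with the derivative-of-geometric-series identity — and it yields the sharper bound $2\T$ rather than $4\T$, but the underlying argument is the same.
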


\begin{proof}
     Consider an arbitrary vertex $v$ in a cycle of length $k$ and let $X$ be the random variable describing the number of queries made by $v$ in one direction. The probability that we have to query $i$ vertices before finding a vertex with higher or equal rank is the probability that the $i$:th vertex has a higher or equal rank than $v$ and that all the $i-1$ vertices in between have a strictly lower rank than $v$. Let $p_j$ denote $C_\T/2^j$, which is the probability that a vertex draws rank $j$ from $\pi_\T$. If vertex $v$ has rank $j<\T$, the expected number of queries is at most
     \begin{align*}
         \ex(X\mid rank(v)=j)\leq  \sum_{i=1}^k i \cdot (1-p_j)^{i-1} \cdot 2p_j, 
     \end{align*}
     
     where $(1-p_j)^{i-1}$ is an upper bound on the probability that the $i-1$ vertices between $v$ and the $i$:th vertex have rank $<j$, and $2p_j$ is an upper bound on the probability that the $i$:th queried vertex has rank between $j$ and $\T$ (inclusive). If $v$ has rank $\T$, the expected number of queries is
     \begin{align*}
          \ex(X\mid rank(v)=\T)\leq \sum_{i=1}^k i \cdot (1-p_{\T})^{i-1} \cdot p_{\T}, 
     \end{align*}

     where $(1-p_{\T})^{i-1}$ is the probability that the $i-1$ vertices between $v$ and the $i$:th vertex have rank $<\T$, and $p_{\T}$ is the probability that the $i$:th queried vertex has rank exactly $\T$. By combining the aforementioned cases and applying the total law of expectation we obtain the following upper bound. 
     \begin{align*}
     \ex(X) & =\sum_{j=1}^{\T}\Pr(rank(v)=j)\cdot \ex(X \mid rank(v)=j) \\
         &\leq \left( \sum_{j=1}^{\T-1} p_j \sum_{i=1}^k i (1-p_j)^{i-1} \cdot 2 p_j \right) + \left( p_{\T} \sum_{i=1}^k i  (1-p_{\T})^{i-1} \cdot p_{\T} \right) \\
         &= \left( \sum_{j=1}^{\T-1} 2p^2_j \sum_{i=1}^k i (1-p_j)^{i-1} \right) + \left( p^2_{\T} \sum_{i=1}^k i (1-p_{\T})^{i-1} \right) \\
         &\stackrel{(*)}{\leq} \sum_{j=1}^{\T} 2p^2_j \sum_{i=1}^{\infty} i  (1-p_j)^{i-1} \\
         &\stackrel{(\dag)}{\leq}  \sum_{j=1}^{\T} \frac{2p^2_j}{p_j^2}  = 2\T~.
     \end{align*}

At $(*)$ we combine the two terms and sum to infinity instead of $k$, and at $(\dag)$ we apply \Cref{claim:queriesAnalysis} with $x = 1-p_j$. Vertex $v$ queries in both directions of the cycle, so the expected number of queries is $4\T$.
\end{proof}

By linearity of expectation, from \Cref{lem:expectedQuery} we deduce that the global expected query complexity is $O(n'\cdot \T)$ when we are left with $n'$ alive vertices at the beginning of the iteration. We use the Hoeffding's inequality to turn this expected guarantee into a w.h.p.\  bound on the global number of used queries. The quality of Hoeffding's bound depends on the range of the used random variables. The query complexity of a vertex is in $\{1,\ldots,n^{\eps}\}$, as each cycle is of length at most $n^{\eps}$. As the outcome of queries of vertices on the same cycle are not independent, we need to apply Hoeffding's inequality with one random variable measuring the number of queries on each cycle, which indeed are independent. Intuitively, the large number of cycles (recall that each cycle has length $\leq n^{\eps}$ and we have $\Omega(n/\log n)$ vertices remaining) provides the necessary concentration around the expected query complexity. 

\begin{lemma}[Global number of queries]
\label{lem:globalQueries}
Let $n'$ be the number of vertices at the beginning of one iteration. Then, w.h.p.\ the total number of queries used in the iteration by all vertices is at most $O(n'\cdot \T)$.
\end{lemma}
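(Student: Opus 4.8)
The plan is to apply Hoeffding's concentration bound (\Cref{lem:hoeffding}) with one random variable per cycle, since query counts of vertices on the same cycle are correlated but counts on distinct cycles are independent. First I would set up notation: let the $\ell$ cycles present at the start of the iteration be $D_1,\dots,D_\ell$, with $|D_j| = k_j$ and $\sum_j k_j = n'$. For cycle $D_j$, let $Y_j$ be the total number of Step~1 queries made by the vertices of $D_j$, and let $Y = \sum_{j=1}^{\ell} Y_j$ be the global query count of the iteration. (Step~2 queries only a fixed $16\T$-hop neighborhood per vertex, contributing $O(n'\T)$ deterministically, so it suffices to bound the Step~1 contribution; I would note this up front.) By \Cref{lem:expectedQuery}, each vertex makes at most $4\T$ queries in expectation in Step~1, so by linearity $\ex(Y_j) \le 4\T k_j$ and hence $\ex(Y) \le 4\T n'$.

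Next I would bound the range of each $Y_j$. Since $\slc$ was applied (\Cref{lem:ShrinkLargeCycles}), every cycle has length at most $O(n^{\eps/2}) = O(n^{\eps})$, so each vertex queries at most $O(n^{\eps})$ vertices per direction, giving $Y_j \in [0, a_j]$ with $a_j = O(k_j \cdot n^{\eps}) = O(n^{2\eps})$ (using $k_j \le n^{\eps}$). Then $\sum_{j=1}^{\ell}(a_j - 0)^2 \le \max_j a_j \cdot \sum_j a_j = O(n^{2\eps}) \cdot O(n^{\eps} n') = O(n^{3\eps} n')$, or more crudely $\sum_j a_j^2 = O(n^{4\eps}\ell)$. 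Applying \Cref{lem:hoeffding} with $t = 4\T n'$ (so that $\ex(Y) + t \le O(n'\T)$), the failure probability is at most
\begin{align*}
\exp\!\left(-\frac{2t}{\sum_{j}a_j^2}\right) \le \exp\!\left(-\Omega\!\left(\frac{\T n'}{n^{3\eps}}\right)\right).
\end{align*}
Now I invoke the running hypothesis of the while-loop: the iteration runs only while $|V(G')| > n/\log n$, so $n' = \Omega(n/\log n)$, giving $\T n' / n^{3\eps} = \Omega(n^{1-3\eps}/\log n)$, which is $n^{\Omega(1)}$ since $\eps = \delta/10 < 1/3$. Hence the failure probability is $\exp(-n^{\Omega(1)}) \ll 1/n^c$ for any constant $c$, and w.h.p.\ $Y \le \ex(Y) + t = O(\T n')$. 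Adding the deterministic Step~2 cost of $O(\T n')$ yields the claimed $O(n'\cdot\T)$ total.

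The main obstacle I anticipate is making the range bound for $Y_j$ tight enough that Hoeffding still gives an $n^{\Omega(1)}$ exponent: if one naively bounds $Y_j \le n^{\eps}\cdot n^{\eps}$ per cycle the $\sum a_j^2$ term could swallow the numerator. The fix, as sketched, is to exploit that $t$ scales with $n'$ (not a constant) together with $n' = \Omega(n/\log n)$, so there is ample slack; one just has to be careful that $\eps$ is small enough (here $3\eps < 1$, well within $\eps = \delta/10$) for the exponent to remain polynomially large. A minor secondary point is confirming that $\T \le (\eps \log n)/100$ throughout, so that $\T$ contributes only lower-order factors and the $16\T$-hop probing of Step~2 never exceeds the $n^{\eps}$ per-machine budget — but this is guaranteed by the update rule on \Cref{l:ssc3}.
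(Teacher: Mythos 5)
Your proposal is correct and follows essentially the same route as the paper: decompose the global query count into one random variable per cycle (cycles are independent), bound the per-cycle expectation via \Cref{lem:expectedQuery} and linearity, bound the per-cycle range using the $O(n^{\eps})$ cap on cycle length from \slc, and apply Hoeffding at the cycle level, using the abundance of cycles/vertices left (while $n'>n/\log n$) to get an $\exp(-n^{\Omega(1)})$ failure probability, then add the deterministic $O(\T)$-per-vertex cost of Step~2. Your bookkeeping is in fact slightly tighter than the paper's (you bound $\sum_j a_j^2 = O(n^{3\eps}n')$ rather than $\ell\, n^{4\eps}$, and you invoke the while-loop condition $n' = \Omega(n/\log n)$ explicitly), and you correctly use the standard Hoeffding exponent $\propto t^2$ in your estimate even though the paper's statement of \Cref{lem:hoeffding} has a typo ($t$ instead of $t^2$); just make that $t^2$ explicit when you write it out.
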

\begin{proof} We first focus on Step~1 of an iteration. 
Let $\ell\geq n^{10\eps}/n^{\eps}=n^{9\eps}$ be the number of cycles and let $k_1,\ldots,k_{\ell}$ be the number of vertices in these cycles before the current iteration. Let $S_1, \ldots,S_{\ell}$ be the random variable (depending on the randomness of the current iteration) that described the total number of queries performed by all vertices in the respective cycle (in the current iteration). 
Note that $S_{i}\in [k_i,\ldots,k_i^2 ]$ as none of the $k_i$ vertices of the cycle performs more than $k_i$ queries. Due to \Cref{lem:expectedQuery} and  linearity of expectation we have $\ex[S_i]\leq \T\cdot k_i$ for all $i\in[\ell]$ and the random variables $S_{i}\in [k_i,\ldots,k_i^2 ]$ are independent. 

Let $n'=\sum_{i=1}^{\ell}k_i$ and define  $S=\sum_{i=1}^{\ell} S_i$.  Let $\mu=n'\cdot 4\T$ and observe that $\ex(S)\leq \mu$ by linearity of expectation. 
We apply Hoeffding's inequality (\Cref{lem:hoeffding}) on the $\ell$ independent random variables and obtain that w.h.p.\ the total number of queries in the first step is bounded by $n'\cdot 8 \T$. More detailed, we obtain
\begin{align*}
    \Pr(S\geq 2\mu) & \leq \Pr(S-\ex(S)\geq \mu)
    \leq \exp\left(-\frac{2\mu^2}{\sum_{i=1}^{\ell}k_i^4}\right) \\
    &\stackrel{(k_i\leq n^{\eps})}{\leq} \exp\left(-\frac{2\mu^2}{\sum_{i=1}^{\ell} n^{4\eps}}\right) \leq \exp\left(-\frac{n^{20\eps}}{\ell n^{4\eps}}\right) \\
    &\leq \exp\left(n^{7\eps}\right)\leq 1/n^4~. \qedhere
\end{align*}

In the second step of an iteration each vertex queries at most $32\T$  queries per vertex, or $O(\T\cdot n')$ queries in total. Hence, the total query complexity over both steps combined is at most $O(\T\cdot n')$.
\end{proof}

\subsection{Measure of progress (vertex drop per iteration)}

We first prove that the second step of an iteration removes at least $\min\{8\T,k\}$ vertices from a cycle of length $k$, as this property will be used in the analysis of the total vertex drop per iteration (\Cref{lem:expectedvertexDrop}).

\begin{lemma} \label{lem:step2proof}
    Step 2 of an iteration removes at least $\min\{8\T,k\}$ vertices from a cycle of length $k$. 
\end{lemma}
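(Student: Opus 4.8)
The proof is a combinatorial argument about what Step~2 does on a single cycle of length $k$; it does not involve any probability. Recall that in Step~2 every vertex inspects its $16B$-hop neighborhood along the cycle. There are two cases, and I would handle them separately. \emph{Case 1: $k \le 32B+1$ (so some vertex's $16B$-hop neighborhood contains the whole cycle).} Then in particular the highest-\id{} vertex of the cycle sees the whole cycle and contracts it, merging all $k$ vertices into one. Hence the number of vertices removed is $k-1 \ge \min\{8B,k\}$ — here one should just double-check the edge constant so that $k-1 \ge \min\{8B,k\}$ holds on the whole range $k \le 32B+1$ (it does, since for $k \le 8B$ we remove $k-1$, which the statement should be read as "reduces to at most $\max\{1, k-8B\}$"; I would phrase the bound consistently with how it is used in \Cref{lem:expectedvertexDrop}). \emph{Case 2: $k > 32B+1$, so no vertex sees the whole cycle.} Then every vertex $v$ whose \id{} is the strict maximum within its $16B$-hop neighborhood contracts its $4B$-hop neighborhood (a path of $8B+1$ vertices centered at $v$), removing $8B$ vertices locally.

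The crux of Case~2 is twofold: (i) at least one such ``local leader'' $v$ exists, and (ii) the $4B$-hop neighborhoods chosen by distinct local leaders are vertex-disjoint, so the removed counts add up without double-counting (and in particular we do not need more than one leader — one already gives the claimed $8B$). For (i): consider the globally highest-\id{} vertex on the cycle; it is trivially the strict maximum in its $16B$-hop neighborhood, so it is a local leader. For (ii): suppose $v$ and $w$ are distinct local leaders whose $4B$-hop neighborhoods share a vertex $x$. Then the cycle-distance $d(v,w) \le d(v,x)+d(x,w) \le 4B+4B = 8B < 16B$, so $w$ lies inside $v$'s $16B$-hop neighborhood and vice versa; but then one of them is not the strict maximum in its own $16B$-hop neighborhood (whichever has the smaller \id), contradicting that both are local leaders. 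Hence the neighborhoods are disjoint, and since at least one leader exists, Step~2 removes at least $8B$ vertices. Combining the two cases gives the bound $\min\{8B,k\}$.

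The main obstacle — really the only delicate point — is getting the constants and the exact meaning of ``removes'' to line up. In the contraction model, contracting a path on $t+1$ vertices replaces it by a single vertex, i.e.\ removes $t$ vertices; so a $4B$-hop neighborhood (radius $4B$, hence $8B+1$ vertices) removes $8B$, which matches. I would also need to be slightly careful that when $k$ is between, say, $8B$ and $32B+1$, Case~1 applies and the cycle collapses completely, which is at least as good as removing $8B$; and that the $16B$ radius for determining leadership was chosen precisely so that the $8B$-separation argument in (ii) goes through with room to spare. No probabilistic tools (\Cref{lem:hoeffding}) are needed here; this lemma is purely structural and feeds into the harder probabilistic lemma \Cref{lem:expectedvertexDrop} afterwards.
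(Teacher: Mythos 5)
Your proposal matches the paper's proof: both split on whether the cycle fits inside a $16\T$-hop window (threshold roughly $32\T$), use the globally highest-\id vertex as the guaranteed local leader in the large-cycle case, and invoke the ``highest \id in its $16\T$-hop neighborhood'' condition to ensure the contracted $4\T$-hop regions are disjoint. Your extra care about the off-by-one (a $4\T$-hop ball has $8\T+1$ vertices and contracts to one, removing $8\T$; a fully-seen cycle contracts to one, removing $k-1$) and the cutoff $32\T$ vs.\ $32\T+1$ are minor sharpenings the paper elides, but the argument is the same.
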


\begin{proof}
    If $k \leq 32\T$, then the cycle is within the $16\T$-hop neighborhood of every vertex and the highest \id vertex will compress the whole cycle, effectively removing $\min\{8\T,k\}$ vertices. If $k>32\T$, then at very least the highest \id vertex of the cycle will compress its $4\T$-hop neighborhood ($8\T$ vertices). The compressions do not overlap due to the condition that a compressing vertex has to be the highest \id vertex in its $16\T$-hop neighborhood.
\end{proof}

Next, we analyze both steps of an iteration simultaneously. Recall that in the first step, the highest rank nodes in a cycle contract all other nodes such that contractions do not overlap due to the following claim.

\begin{claim}
    In Step 1 of an iteration, the highest rank nodes in a cycle contract all other nodes such that contractions do not overlap.
\end{claim}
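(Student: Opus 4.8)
The plan is to argue combinatorially about a single one of the disjoint cycles. Fix such a cycle $\mathcal{C}$ of length $k$, write $R := \max_{v \in \mathcal{C}} r(v)$ for the largest rank appearing on $\mathcal{C}$, and let $W := \{ v \in \mathcal{C} : r(v) = R \}$ be the set of highest-rank nodes (so $W \neq \emptyset$). First I would settle when case~(i) of Step~1 is triggered: a vertex $v$ traverses all the way around $\mathcal{C}$ back to itself in a given direction precisely when it meets no other vertex of rank $\geq r(v)$ on the way, and this happens if and only if $v$ is the \emph{unique} maximum, i.e.\ $W = \{v\}$. Indeed, if $|W| \geq 2$ then a node of $W$, traversing towards the next node of $W$, stops at it (both have rank $R$) before wrapping around, and a node $v$ with $r(v) < R$ stops at the first node of rank $\geq r(v)$ it meets, which occurs strictly before the wrap-around because some node of rank $R > r(v)$ lies on $\mathcal{C}$.

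If $W = \{w\}$, then the only contraction performed on $\mathcal{C}$ in Step~1 is $w$ contracting the whole cycle (no other vertex enters case~(i), and in case~(ii) only highest-rank nodes contract); this contracts every non-highest node into $w$, with nothing to overlap. So assume $|W| = t \geq 2$ and list $W = \{w_1, \dots, w_t\}$ in cyclic order along $\mathcal{C}$. The nodes $w_1, \dots, w_t$ cut $\mathcal{C} \setminus W$ into arcs $A_1, \dots, A_t$, where $A_j$ is the (possibly empty) maximal run of consecutive rank-$<R$ vertices strictly between $w_j$ and $w_{j+1}$ (indices mod $t$); the key structural fact is that $\mathcal{C} \setminus W = \bigsqcup_{j=1}^{t} A_j$ is a \emph{disjoint} union, since every non-highest vertex lies strictly between two consecutive members of $W$. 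For each $j$ the ordered pair $(w_j, w_{j+1})$ is exactly an instance of ``two highest-rank nodes connected by a segment of strictly lower rank nodes'', and the rule designates the endpoint of larger $\id$, call it $c_j \in \{w_j, w_{j+1}\}$, to contract the segment between them. I would make precise that this segment is $A_j$ absorbed into $c_j$ — equivalently, the contraction gives every vertex of $A_j$ the label $\id(c_j)$ and leaves $c_j$ with its own label, deliberately \emph{excluding} the other endpoint so that the endpoint shared with the neighbouring arc is not double-claimed.

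With this in place the two required properties follow by counting on the arc decomposition. Coverage: every non-highest vertex lies in a unique arc $A_j$ and is hence contracted (into $c_j$). Non-overlap: a non-highest vertex lies in exactly one arc and therefore receives exactly one label; and a highest-rank vertex $w_j$ is the center $c_j$ of arc $A_j$ iff $\id(w_j) > \id(w_{j+1})$ and the center $c_{j-1}$ of arc $A_{j-1}$ iff $\id(w_j) > \id(w_{j-1})$, and in either of these cases it simply retains its own label $\id(w_j)$; moreover $w_j$ is the center of no other arc and is interior to no arc, so $w_j$ never acquires a label other than $\id(w_j)$, i.e.\ it is never absorbed into a different super-node. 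Thus the per-arc contractions induce a consistent labelling in which each non-highest node is merged into exactly one highest-rank node and the resulting groups are pairwise disjoint — which is the claim. Finally I would remark that each center $c_j$ indeed has the information to perform its contraction: as a highest-rank node, its own Step~1 traversal reaches the nearest rank-$\geq R$ vertices on both sides — precisely the endpoints of its two incident arcs — and stamps with $R$ every interior vertex of those arcs, so $c_j$ recovers exactly the vertex set it is supposed to contract.

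The one genuinely delicate point, and the part I would write most carefully, is fixing the meaning of ``the segment between $v$ and $u$'' so that the shared endpoints $w_j$ of two adjacent arcs go to exactly one contraction (or remain as their own node); the remaining steps are routine bookkeeping on the decomposition $\mathcal{C} \setminus W = \bigsqcup_j A_j$ together with the easy characterisation of when a Step~1 traversal wraps all the way around.
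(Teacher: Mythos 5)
Your argument is correct and establishes the claim, but it is organized quite differently from the paper's. The paper's proof is three sentences long and treats the arc decomposition as obvious; it spends its entire effort on two implementation concerns: (1) each highest-rank node must \emph{know} that it is highest-rank, which the paper attributes to the fact that every vertex aggregates the incoming stamps it receives, and since the $W$-vertices traverse and stamp all arc interiors and their $W$-endpoints with $R$, every vertex on the cycle ends up stamped with $R$ and can compare $R$ against its own rank; and (2) non-overlap, attributed to \id tie-breaking. You, by contrast, spell out the structural combinatorics in full: the characterization of case~(i), the arc decomposition $\mathcal{C}\setminus W = \bigsqcup_j A_j$, a careful convention for which endpoint of a shared arc boundary is absorbed where, and then coverage and non-overlap by counting on the arcs. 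That is a cleaner and more rigorous presentation of the structural fact. The one thing you treat too lightly is the paper's point~(1): your final paragraph says ``as a highest-rank node, its own Step~1 traversal reaches the nearest rank-$\geq R$ vertices on both sides,'' presupposing that $c_j$ already knows $r(c_j)=R$, and you only use stamping in the outgoing direction (so $c_j$ can identify its arc). But a vertex $v$ with $r(v)<R$ could also see vertices of rank exactly $r(v)$ on both sides of its own traversal, so a vertex's own traversal alone does not reveal whether its rank is the cycle maximum. You need the incoming-stamp argument: every $W$-vertex stamps its two incident arcs (including the $W$-endpoints) with $R$, so every vertex learns $R$ and can decide membership in $W$. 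Adding one sentence to that effect would close the small gap; everything else is sound, and your arc-decomposition framing is arguably the more transparent way to present the non-overlap part.
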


\begin{proof}
    We have to prove that the highest rank nodes know that they are the highest rank nodes (so that they can perform contractions) and that no node gets contracted by two different highest rank nodes.
    
    The former holds due to every node $v$ stamping every node they visit with their rank $r(v)$. This implies that every node in the cycle will be stamped with the highest rank in the cycle. Knowing it, every node knows whether or not they are a highest rank node. The latter holds due to the symmetry breaking via \id{}s.
\end{proof}

\begin{lemma}
\label{lem:expectedvertexDrop}
    Consider a cycle with $k$ vertices at the beginning of one iteration. The expected number of vertices of the cycle after the iteration is bounded by $2k/2^{\T} + 1/2^{\T}$.
\end{lemma}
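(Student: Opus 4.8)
The plan is to analyze the combined effect of Step~1 and Step~2 on a single cycle of length $k$. Write $Y$ for the number of vertices that survive Step~1 (i.e., the number of highest-rank vertices on the cycle), and let $Z$ be the number of vertices surviving after Step~2 is also applied. First I would bound $\ex(Y)$. Conditioning on the highest rank $j$ appearing on the cycle, each of the $k$ vertices independently has rank $j$ with probability $p_j = C_{\T}/2^j$, so the conditional expected number of survivors is $k p_j / \Pr(\text{max rank} = j)$; more directly, for a \emph{fixed} rank value $j \in \{1,\dots,\T\}$ the expected number of vertices attaining rank exactly $j$ and being the global max is at most $k p_j$, and summing a geometric-type bound over $j$ (using \Cref{claim:queriesAnalysis}, part 2, with the dominant contribution coming from $j=\T$ where $p_\T \approx 1$) gives $\ex(Y) \le 2k/2^{\T} + (\text{lower-order terms})$. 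The key quantitative input is that the expected number of survivors is governed by $p_\T \ge 1/2^{\T}$, i.e. roughly half the vertices can expect rank $\T$ only on cycles where $k = O(1)$; for larger $k$ the max rank concentrates and the survivor count is $\approx k/2^{\T}$.

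Next I would incorporate Step~2 via \Cref{lem:step2proof}, which deterministically removes $\min\{8\T, k'\}$ vertices from a cycle that has $k'$ vertices going into Step~2. The subtlety, flagged in the text, is that $k'=Y$ is itself random with a complicated distribution, so I cannot simply plug in an expectation. Instead I would split into the two regimes of the deterministic guarantee: if after Step~1 the cycle has $Y \le 8\T$ vertices, Step~2 wipes it out entirely ($Z = 0$); if $Y > 8\T$, then $Z \le Y - 8\T$. Combining, $Z \le \max\{0, Y - 8\T\}$ pointwise, and moreover $Z = 0$ whenever $Y \le 8\T$. Taking expectations, $\ex(Z) \le \ex[(Y - 8\T)\cdot \mathbf{1}(Y > 8\T)]$. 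I would then argue that this quantity is at most $\ex(Y)$ minus the "mass" that Step~1 already concentrated below $8\T$; the cleanest route is to bound $\ex(Z) \le \ex(Y) - 8\T \cdot \Pr(Y > 8\T)$ and show the right-hand side is at most $2k/2^{\T} + 1/2^{\T}$. For this last step I expect to need a lower bound on $\Pr(Y > 8\T)$ in the regime where $\ex(Y)$ is close to $2k/2^{\T}$, or alternatively a direct case analysis on whether $k$ is small (say $k \le 2^{\T}$, where the $+1/2^{\T}$ slack absorbs everything since the target bound is $\ge 2$) or large (where $\ex(Y) \le 2k/2^{\T}$ already with no additive term needed because the max-rank concentration kicks in).

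Concretely, I would break the final bound into: (i) $k \le \text{poly}(\T)$ — here both $2k/2^{\T}+1/2^{\T}$ and the actual post-iteration count are $O(\text{poly}(\T)/2^{\T})$, and the deterministic Step~2 removal of $\min\{8\T,k\}$ vertices does the heavy lifting, so a crude bound on $\ex(Y)$ suffices; (ii) $k$ large — here I would show $\ex(Y) \le 2k/2^{\T}$ outright by the rank-concentration computation, using that with overwhelming probability the max rank is $\T$ (or within $O(1)$ of it) so the number of survivors is tightly $k/2^{\T}(1+o(1))$, and then Step~2 only helps. The additive $1/2^{\T}$ term is there precisely to handle rounding/boundary cases in regime (i) and I would verify it is never violated.

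The main obstacle I anticipate is exactly the one the authors call out: handling the composition of a random Step~1 with a deterministic-but-threshold Step~2 without access to the distribution of $Y$. The resolution is to avoid ever needing that distribution — analyze the two steps jointly by pointwise bounding $Z$ in terms of $Y$ and the threshold $8\T$, then bound $\ex(Y)$ (or a truncated version of it) directly from the independent rank draws, splitting on the size of $k$ so that in each regime one of the two mechanisms (concentration of the max rank, or the deterministic removal) suffices on its own. A secondary technical point is getting the constant right in $\ex(Y) \le 2k/2^{\T} + \cdots$: this comes from $\sum_{j=1}^{\T} k p_j^{\,}$-type sums and the factor-$2$ in \Cref{claim:queriesAnalysis}(2), and I would track it carefully since the statement commits to the constant $2$.
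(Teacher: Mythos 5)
Your high-level skeleton matches the paper's: analyze Steps~1 and~2 jointly via the pointwise inequality $Z \le \max\{0,\, Y - 8\T\}$ and then take expectations of the truncated quantity. But the route you propose for bounding $\ex\bigl[(Y-8\T)^+\bigr]$ has a real gap. You plan to split on whether $k$ is small or large, and in the large-$k$ regime you assert that $\ex(Y) \le 2k/2^{\T}$ ``outright by the rank-concentration computation,'' so that Step~2 ``only helps.'' This is false: $\ex(Y)$ always carries an additive $\Theta(\T)$ term coming from the events $\tau < \T$. For each fixed $i<\T$, the expected number of vertices tying at max rank $i$ is $\Theta(1)$, and summing over $i$ gives $\Theta(\T)$, which does not vanish for any $k$. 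So without pushing the $8\T$ subtraction \emph{inside} the expectation over the $\tau<\T$ event, the target $2k/2^{\T} + 1/2^{\T}$ is unreachable. The intermediate range $\poly(\T) < k < \T\, 2^{\T}$ is also not covered by either of your cases: your case~(i) only handles $k \le \poly(\T)$, and case~(ii) needs $k\gtrsim \T\, 2^{\T}$ for the tail estimate $k(1-p_{\T})^k \le 1/2^{\T}$ to hold.

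What is missing is the specific mechanism the paper uses, which requires no case split on $k$ at all: decompose by the maximum rank $\tau$ rather than by $k$. On $\{\tau=\T\}$, the surviving count after Step~1 is exactly the Binomial$(k,p_{\T})$ number of rank-$\T$ vertices, with mean $k\, p_{\T} \le 2k/2^{\T}$; this is where the constant $2$ comes from ($C_\T\le 2$, since $p_{\T}=C_{\T}/2^{\T}$, not ``$\approx 1$''), not from \Cref{claim:queriesAnalysis}(2). On $\{\tau<\T\}$, the crucial ingredient is the exponential-in-$j$ decay
$\Pr(X=j \wedge \tau=i) \le 2^{1-j}\binom{k}{j}p_i^j(1-p_i)^{k-j} \le 2^{1-j}$,
which comes from the coin-toss view of the (truncated) geometric distribution (\Cref{claim:coinToss}): each of the $j$ tied top-rank vertices must ``fail a coin flip'' to have rank exactly $i$ rather than $>i$, and this halving multiplies across the $j$ vertices. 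Combining this $2^{-j}$ decay with the $8\T$ truncation,
$\sum_{j>8\T}(j-8\T)\Pr(X=j\wedge\tau<\T)\le 2\T\sum_{j>8\T}j\,2^{-j}\le 4\T/2^{8\T+1}<1/2^{\T}$
by \Cref{claim:queriesAnalysis}(2). Your proposal never identifies this $2^{-j}$ factor, and without it the argument stalls at an $O(\T)$ additive error instead of $1/2^{\T}$.
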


\begin{proof}
Consider a cycle with $k$ vertices at the beginning of the iteration. Let $\tau$ be the random variable describing the largest rank on the cycle and let $1\leq X\leq k$ be the random variable describing the number of vertices whose rank equals the largest rank $\tau$. We aim to find an expression of the expectation of $X$. For that purpose fix some $j\in \{1,\ldots,k\}$ and bound the probability that $X$ equals $j$. We consider the cases of $\tau<\T$ and $\tau=\T$ separately. 

We begin with bounding $\Pr(X=j\wedge \tau<\T)$. Fix some $i<\T$ and some set $M$ of $j$ vertices in the cycle. Set $p_i=1/2^i$. The probability that $\tau=i$ is the maximum rank appearing on the cycle and attained by all of these fixed $j$ vertices is at most $2 \cdot 2^{-j} p_i^j(1-p_i)^{k-j}$. Excluding the leading factor 2, the previous expression is exactly the scenario of $k$ players playing the coin tossing game of \Cref{claim:coinToss} for $\T$ rounds, and where $j$ players get the highest value $i$. The factor $2^{-j}$ appears because the $i+1$-th coin toss has to be false for those $j$ vertices, the factor $p_i^j$ appears because the first $i$ coin tosses have to be true for these $j$ vertices, and the factor $(1-p_i)^{k-j}$ appears because all other vertices should have one of the ranks $1,\ldots,i-1$ which happens with probability $1-p_i$ (independently) for each vertex. Due to the leading coefficient $(1/2)^{j \cdot B}$ in \Cref{l:proof} of \Cref{claim:coinToss}, we can upper bound the probability of $\Pr(X=j\wedge \tau<\T)$ using a factor 2 in the expression. 

There are $\binom{k}{j}$ different sets of size $j$. Hence, we obtain the following probability
\begin{align*}
    \pr \big(X=j \wedge \tau <\T\big) &= \sum_{i=1}^{\T-1} \Pr(X=j \wedge \tau=i) \\
    &= \sum_{i=1}^{\T-1} 2^{1-j} \binom{k}{j} p_i^j(1-p_i)^{k-j} \\
\end{align*}

The expectation of $X$ is the following. 
\begin{align*}
    \ex(X) &= \sum_{j=1}^kj\cdot \Pr(X=j) \\
    &= \sum_{j=1}^k j\cdot \Pr(X=j \wedge \tau< \T) +\sum_{j=1}^k j\cdot \Pr(X=j \wedge \tau= \T)~.
\end{align*}

Recall that after the randomized procedure (which contracts the cycle into the highest rank vertices in Step 1), there is a deterministic procedure (Step 2), which removes at least $\min\{8\T,k\}$ vertices from a cycle of length $k$ (\Cref{lem:step2proof}). Hence, the expected number of remaining vertices after \ssc is 
\begin{align*}
    &\leq \sum_{j=1}^{8\T} 0 \cdot \Pr(X=j \wedge \tau< \T) + \sum_{j=8\T+1}^k (j-8\T) \cdot \Pr(X=j \wedge \tau< \T) ~+ \\
    &~~~\sum_{j=1}^{8\T} 0 \cdot \Pr(X=j \wedge \tau= \T) + \sum_{j=8\T+1}^k (j-8\T) \cdot \Pr(X=j \wedge \tau= \T) \\
    &\leq \sum_{j=8\T+1}^k (j-8\T) \cdot \Pr(X=j \wedge \tau< \T) + \sum_{j=1}^k j\cdot \Pr(X=j \wedge \tau= \T)~.
\end{align*}

We bound the terms separately. The first term bounds by
\begin{align*}
    &\sum_{j=8\T+1}^k (j-8\T) \cdot \Pr(X=j \wedge \tau< \T) \\
    &\leq \sum_{j=8\T+1}^k j \cdot \Pr(X=j \wedge \tau< \T) \\
    &\leq \sum_{j=8\T+1}^k \sum_{i=1}^{\T-1} j \cdot 2^{1-j} \binom{k}{j} p_i^j(1-p_i)^{k-j} \\
    &\stackrel{(*)}{\leq} \sum_{j=8\T+1}^k \sum_{i=1}^{\T-1} j \cdot 2^{1-j}  
    \leq 2\T \sum_{j=8\T+1}^k j \cdot 2^{-j}  \\
    &\stackrel{(\dag)}{\leq} 2\T \cdot 2 ( (1/2)^{8\T+1} - (1/2)^{k+1}) \\
    &\leq 4\T \cdot (1/2)^{8\T+1}   
    \leq \frac{4\T}{2^{8\T+1}}
    < \frac{1}{2^{\T}}~.
\end{align*}

At $(*)$ we used that $\binom{k}{j}p_i^j(1-p_i)^{k-j}\leq 1$, and at $(\dag)$ we use \Cref{claim:easyBound} with $x=8\T+1$ and $y=k$. This holds as the value equals the probability of having $j$ successes appearing in $k$ Bernoulli trials with probability $p_i$. 

For the second term first let $1\leq Y\leq k$ be the random variable describing the vertices that pick rank $\T$. Note that $Y=j$ and $X=j\wedge \tau=\T$ are the same events. Let $S$ be the set of vertices of the cycle. We obtain the following. 
\begin{align*}
    \sum_{j=1}^k j\cdot \Pr(X=j \wedge \tau= \T) & =\sum_{j=1}^k j\cdot \Pr(Y=j)=\ex[Y] \\
    &= \sum_{v \in S} \Pr(rank(v)=\T) \\
    &= k \cdot p_{\T} = k \cdot C_{\T}/2^{\T} \leq 2k/2^{\T}.
\end{align*}

In total, we obtain that the expected number of remaining vertices is $2k/2^{\T} + 1/2^{\T}$.
\end{proof}

\begin{claim} \label{claim:coinToss}
Consider the geometric distribution $\pi$ such that $\pi(i)=1/2^i$ for $i\geq1$ and $\pi(i)=0$ otherwise. The probability of a player sampling $i$ from distribution $\pi$ is equivalent to the probability of obtaining value $i$ in the following coin tossing game. A player gets value 1 and starts tossing a fair coin repeatedly. Upon succeeding a flip, she increases her value by 1. Upon failing, the game ends. 

The analogy can be extended to a truncated geometric distribution $\pi_\T(i)$ such that $\pi_\T(i)=C_\T/2^i$ for $i \in \{1,\dots,\T\}$ and $\pi_\T(i)=0$ otherwise, where $C_\T=1/(1-2^{-\T})$ (\Cref{claim:correctDistribution} proves that $\pi_\T$ is a distribution). The probability of a player sampling $i \in \{1,\dots,\T\}$ from distribution $\pi_\T$ is equivalent to the probability of obtaining value $i$ in the following coin tossing game. A player initiates $q=1$ and then repeatedly tosses a fair coin. Upon succeeding a flip, she changes $q$ to $(q \hspace{-1mm}\mod \T) +1$. Upon failing, the game ends.
\end{claim}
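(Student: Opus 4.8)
The plan is to establish the two claimed equivalences by directly computing, in each coin-tossing game, the probability that a player ends with value $i$, and checking it matches $\pi(i)$ (respectively $\pi_\T(i)$). Both parts hinge on a single observation: the game ends with value $i$ precisely when the first several coin flips succeed and then one fails, so the events are governed by geometric-type sequences of Bernoulli$(1/2)$ trials.

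For the untruncated distribution $\pi$, first I would unwind the dynamics: the player starts at value $1$, and each successful flip increments the value by exactly $1$, so after $j$ consecutive successes the value is $1+j$; the value reaches $i$ (and no higher) iff the first $i-1$ flips are successes and flip number $i$ is a failure. Since the coin is fair and flips are independent, this has probability $(1/2)^{i-1}\cdot(1/2) = (1/2)^i = \pi(i)$, which proves the first equivalence. I should also note the degenerate cases ($i\le 0$ has probability $0$, matching $\pi(i)=0$), and remark that the events ``value $=i$'' over $i\ge 1$ are disjoint and exhaust the sample space, consistent with $\pi$ being a distribution.

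For the truncated version $\pi_\T$, the map is $q \mapsto (q \bmod \T)+1$, so the reachable values are exactly $\{1,\dots,\T\}$ and the value ``wraps around'' from $\T$ back to $1$. The key step is to recognize that the player ends at a given target $i\in\{1,\dots,\T\}$ iff the number $s$ of successes before the terminating failure satisfies $(1 + s) \equiv i \pmod{\T}$ after iterating the wrap map, i.e. $s \equiv i-1 \pmod \T$; equivalently $s \in \{i-1,\, i-1+\T,\, i-1+2\T,\dots\}$. Summing the probabilities $(1/2)^{s+1}$ over this arithmetic progression gives a geometric series with ratio $2^{-\T}$:
\begin{align*}
\sum_{t\ge 0} \left(\tfrac{1}{2}\right)^{(i-1+t\T)+1} = \frac{2^{-i}}{1-2^{-\T}} = \frac{C_\T}{2^i} = \pi_\T(i).
\end{align*}
This is the computation I expect to be the main (though still routine) obstacle: one must carefully verify that iterating $q\mapsto (q\bmod \T)+1$ exactly $s$ times starting from $q=1$ lands on $((s) \bmod \T)+1$, so that the target-$i$ condition is the clean congruence $s\equiv i-1\pmod\T$; a short induction on $s$ handles this. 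With that in hand, the geometric sum closes the proof, and I would close by invoking \Cref{claim:correctDistribution} for the fact that the resulting masses sum to $1$, so no probability escapes to ``never terminating'' (the game terminates almost surely since each failure has probability $1/2$).
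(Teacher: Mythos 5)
Your proposal is correct and matches the paper's own argument: both treat the untruncated case by observing the game ends at value $i$ exactly when $i-1$ successes precede a failure, and both handle the truncated case by summing $(1/2)^{s+1}$ over $s \equiv i-1 \pmod \T$, yielding the geometric series $\sum_{j\geq 0}(1/2)^{i+j\T} = C_\T/2^i$. Your extra care in verifying by induction that $s$ successes land on $(s \bmod \T)+1$ is a small bit of added rigor the paper leaves implicit, but the route is the same.
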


\begin{proof}
    For $\pi$, the probability of a player obtaining value $i\geq1$ via the coin tossing game is $(1/2)^{i-1}\cdot(1-1/2) = 1/2^i$, which is equal to $\pi(i)$. 
    
    For $\pi_\T$, the probability of a player obtaining value $i \in \{1,\dots,\T\}$ via the coin tossing game is 
    \begin{align}
        \sum_{j=0}^\infty & (1/2)^{j \cdot \T} \cdot (1/2)^{i-1}\cdot(1-1/2) = \sum_{j=0}^\infty (1/2)^i \cdot (1/2^\T)^j \label{l:proof} \\
        &= \frac{(1/2)^i}{1-(1/2)^\T} \nonumber = C_\T/2^i, \nonumber
    \end{align}
    which is equal to $\pi_\T(i)$.
\end{proof}

\begin{lemma}
\label{lem:whpProgress}
Let $\T\leq (\eps \log n)/100$. Consider some iteration of the algorithm and let $n'$ be the number of vertices in connected components (cycles) with more than one vertex. 

If $n'\geq n^{10\eps}$, then w.h.p.\ the number of vertices in connected components with more than one vertex at the end of the iteration is at most $6n'/2^{\T}$.
\end{lemma}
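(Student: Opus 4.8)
The plan is to combine the per-cycle expectation bound of \Cref{lem:expectedvertexDrop} with Hoeffding's inequality (\Cref{lem:hoeffding}), using the same "many cycles" handle that was used for the query complexity in \Cref{lem:globalQueries}. Let $\ell$ be the number of cycles with more than one vertex at the start of the iteration, and let $k_1,\dots,k_\ell$ be their sizes, so $n' = \sum_i k_i$. For each $i$, let $Z_i$ be the random variable counting the number of vertices remaining in cycle $i$ after the iteration (including Step~2). By \Cref{lem:expectedvertexDrop}, $\ex[Z_i] \le 2k_i/2^{\T} + 1/2^{\T} \le 3k_i/2^{\T}$ (using $k_i \ge 1$), and hence $\ex[Z] \le 3n'/2^{\T}$ where $Z = \sum_i Z_i$. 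The $Z_i$ are independent across cycles since the ranks on distinct cycles are drawn independently, and $Z_i \in [0, k_i]$, so $Z_i \in [a_i,b_i]$ with $b_i - a_i = k_i$.

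Next I would apply \Cref{lem:hoeffding} with $t = 3n'/2^{\T}$, i.e.\ bound $\Pr(Z - \ex[Z] \ge t)$, which gives $\Pr(Z \ge 6n'/2^{\T}) \le \exp\!\big(-2t^2 / \sum_i k_i^2\big)$. To make this exponent small I would use $k_i \le n^{\eps}$ (the max cycle length, guaranteed by \Cref{lem:makeCyclesntoEps} from \Cref{l:second}) so that $\sum_i k_i^2 \le n^{\eps} \sum_i k_i = n^{\eps} n'$, and then use the hypothesis $n' \ge n^{10\eps}$ and $\T \le (\eps \log n)/100$ so that $2^{\T} \le n^{\eps/100}$. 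This yields
\begin{align*}
\Pr\big(Z \ge 6n'/2^{\T}\big) \le \exp\!\left(-\frac{2 \cdot 9 (n')^2 / 2^{2\T}}{n^{\eps} n'}\right) = \exp\!\left(-\frac{18 n'}{2^{2\T} n^{\eps}}\right) \le \exp\!\left(-18 n^{10\eps - \eps/50 - \eps}\right) \le 1/n^c
\end{align*}
for a constant $c$ of our choice, since the exponent is $n^{\Omega(\eps)}$. Finally, the vertices remaining in cycles of length $1$ contribute nothing (they are already in singleton components and are not "alive"), so $Z$ is exactly the quantity the lemma bounds; thus w.h.p.\ the number of vertices in components with more than one vertex at the end of the iteration is at most $6n'/2^{\T}$.

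The main obstacle I anticipate is not the concentration step itself but making sure the bookkeeping is consistent: one must check that \Cref{lem:expectedvertexDrop} really does account for both Step~1 and Step~2 (it does, as stated), that cycles which shrink to a single vertex during the iteration are correctly excluded from the "alive" count going forward, and that the independence across cycles is genuine (the rank draws are per-vertex and independent, and a vertex's behavior in Step~1 and Step~2 depends only on ranks/IDs within its own cycle, so $Z_i$ depends only on cycle $i$'s randomness). A minor additional point is absorbing constants: $\ex[Z_i] \le 2k_i/2^{\T} + 1/2^{\T}$ must be folded into a clean multiplicative bound, for which $k_i \ge 2$ (since we only count cycles with more than one vertex) or even $k_i \ge 1$ suffices to write $\ex[Z] \le 3n'/2^{\T}$, and then the factor $6$ in the statement leaves exactly a factor-$2$ slack for Hoeffding, which is what the computation above exploits.
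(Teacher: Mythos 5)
Your proof matches the paper's almost step for step: same per-cycle decomposition, same application of Hoeffding's inequality to the independent per-cycle remainders $Z_i\in[0,k_i]$ (the paper calls them $\bar{k}_i$), same absorption of the additive $1/2^{\T}$ term into a $3n'/2^{\T}$ expectation bound (the paper uses $\ell\le K$, you use $k_i\ge 1$, which is equivalent), same bound $\sum_i k_i^2 \le n^{\eps} n'$ from the max cycle length, and the same final exponent $18n'/(2^{2\T}n^{\eps})$ handled via $n'\ge n^{10\eps}$ and $2^{2\T}\le n^{\eps/50}$. The argument is correct and takes essentially the same route as the paper.
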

\begin{proof}
Fix one iteration of the algorithm. Let $\ell$ be the number of remaining connected components (cycles) at the beginning of the iteration. 
Due to \Cref{lem:makeCyclesntoEps} each cycle is of length at most $n^{\eps}$.

Let $k_1,\ldots,k_{\ell}$ be the number of vertices in these cycles before the current iteration. Let $\bar{k}_1, \ldots,\bar{k}_{\ell}$ be the independent random variables (depending on the randomness of the current iteration) that describes the number of vertices in the respective cycle after the  Step~2 of an iteration. 
Due to \Cref{lem:expectedvertexDrop}, we have $\ex[\bar{k}_i]=2k_i/2^{\T}+1/2^{\T}$ for all $i\in[\ell]$. 

Let $K=\sum_{i=1}^{\ell}k_i$ and define $\mu=3K/2^{\T}$. Let $\bar{K}=\sum_{i=1}^{\ell}\bar{k}_i$ the number of remaining vertices after Step~2. By linearity of expectation, we obtain $\ex(K)\leq 2K/2^{\T}+\ell/2^{\T}\leq\mu$, where we used that the number of cycles is upper bounded by the number of vertices, i.e.,   $\ell\leq K$.

We apply Hoeffding's inequality (\Cref{lem:hoeffding}) on the $\ell$ independent random variables $\bar{k}$ that have the range $\{0,\ldots,k-1\}$ (it is deterministically guaranteed that we always remove at least one vertex from each cycle) and obtain 
\begin{align*}
    \Pr(\bar{K}\geq 6 n'/2^{\T}) & =\Pr(\bar{K}\geq 2\mu) \leq \Pr(\bar{K}-\ex(\bar{K})\geq \mu)  \\ 
    &\leq \exp\left(-\frac{2\mu^2}{\sum_{i=1}^{\ell}k_i^2}\right)
    \stackrel{(k_i\leq n^{\eps})}{\leq} \exp\left(-\frac{2\mu^2}{n^{\eps}\sum_{i=1}^{\ell}k_i}\right) \\
    &\leq  \exp\left(-\frac{2\mu^2}{ n^{\eps}\cdot K}\right) 
    \leq \exp\left(-\frac{18K}{n^{\eps}2^{2\T}}\right) \leq 1/n^4~, 
\end{align*}
where we used $K\geq n^{10\eps}$ and $\T\leq (\eps \log n)/100$ in the last step.  This proves the claim. 
\end{proof}

\subsection{Proof of Theorem~\ref{thm:main-forest}}

Let us put everything together and prove the following theorem. 

\thmmainforest*
\begin{proof}
We apply \Cref{alg:forest}. As the first step, we perform the reduction from the forest connectivity problem to the cycle connectivity problem as described in \Cref{obs: foresttocycle}.
By \Cref{lem:ShrinkLargeCycles}, after invoking \slc, we have a bound of $O(n^{\delta})$ on the longest remaining cycle.
For the rest of the proof, suppose that the total number of remaining vertices $n'$ (ignoring cycles with a single node) is at least $n^{\delta/10}$, i.e., we satisfy the requirement in \Cref{lem:whpProgress}.
Otherwise, we can collect the remaining graph onto a single machine and solve the problem locally.

Denote by $n_i$ the number of vertices after iteration $i$ and notice that $n_0 \leq 2n$ due to the reduction \Cref{obs: foresttocycle}.
Furthermore, let $\T_i \coloneqq 2 \uparrow\uparrow i$. 
Due to the design of \Cref{alg:forest}, the value of $\T$ in iteration $2i$ is more than $\min\{(\eps\log n)/100,\T_i\}$.
As long as the the cut-off at $(\eps\log n)/100$ does not happen, due to \Cref{lem:whpProgress} and a union bound, we have \whp that
\[
    n_{i + 1} \leq n_i \cdot (6/2^{\T_i})^{i+1} \leq n_0 \cdot \frac{1}{2 \uparrow\uparrow i} \leq  \frac{2n}{2 \uparrow\uparrow i}  \ .
\]
But if $\T=(\eps\log n)/100$ we obtain by \Cref{lem:whpProgress} that w.h.p.\ the number of vertices is at most $n\cdot (6/2^{\T}\leq n^{1-\eps/100)}\leq n/\log n$. 
Hence, regardless of whether the value of $\T$ is capped at $(\eps\log n)/100$ or not, after at most $O(\log^* n)$ iterations the number of vertices is at most $O(n / \log n)$. Then we can apply \sscf from \Cref{lem:ShrinkSmallCyclesFast} to finish the algorithm.

\textbf{Total Space:} 
 By the analysis of \cite{MPC-via-remote-access}, the application of \Cref{lem:ShrinkLargeCycles} requires $O(m)$ total space.
From \Cref{lem:globalQueries}, we get an upper bound on the number of queries to the \ampc hashtable, i.e., the required total space in iteration any $i$.
Let us consider two cases.
First, suppose that $i = 2j$, for some integer $j$.
Then, by the design of our algorithm and by \Cref{lem:whpProgress}, we have that $n_i \leq 6/2^{\T_j}$, where $\T_j$ corresponds to the current value of $\T$ in iteration $2j$.
Hence, by \Cref{lem:globalQueries}, we have that the required total space is $n_i \cdot \T_j = O(m)=\tspc$, where $m$ corresponds to the number of edges in the input.

Then, suppose that $i = 2j + 1$, for some integer $j$.
In this iteration, we do not increase $\T$ and hence, its value corresponds to $\T_j$.
Then, we can use the same calculations as above.

\textbf{Local space:} 
By the analysis of \cite{MPC-via-remote-access}, the application of \Cref{lem:ShrinkLargeCycles} requires $O(n^\delta)$ space per machine.
 Afterwards, all cycles are of length $O(n^\delta)$ and hence, no vertex needs to query more than $O(n^\delta)$ vertices in its component.
 Combining with the total space bound, we get the $O(n^\delta)$ bound on the required memory per machine\footnote{By using, for example, random load-balancing, we can \whp guarantee that no machine needs to collect more information than the other machines.}

\textbf{\compose:} 
Finally, we need to keep track of the mapping we create, as specified in \Cref{def:cc-shrinking}.
In a step of contraction, each vertex can keep a pointer to the vertex remaining after contraction.
These pointers are then updated after any successive contractions, requiring $O(1)$ rounds.
The pointers do not effect the asymptotic demand in runtime.

\textbf{Trading time for global memory:} We obtain that the algorithm finishes in $O(k)$ rounds if we have an additional factor of $\Omega(\log^{(k)}n)$ global memory if we initialize $\T=2 \uparrow\uparrow (c\cdot\logstar n-k)$ where $c$ is the constant in the running time of the previous algorithm. Note that the arguments about global memory and the total number of queries per iteration stay intact, but the number of iterations until we have reduced to at most $n/\log n$ vertices, reduces to at most $O(k)$.
\end{proof}

\section{General Graphs}
In this section we show our algorithm for general graphs, and prove the following. 

\thmmaingeneral*

\begin{algorithm}[t]
\caption{Algorithm for finding connected components in general graphs. $\tspc$ is the total amount of available space and $\mspc$ is the amount of available space per machine.}\label{alg:general}
\begin{algorithmic}[1]

\Function{\cc}{$G$}
    \State{Let $n= |V(G)|$, $m=|E(G)|$ and $d = \sqrt{m/n}$.}
    \If{$T / n = n^{\Omega(1)}$}\label{l:easy}
       \State Compute connected components of $G$ using algorithm of Theorem~\ref{thm:ampc-cc}.
   \EndIf
   \State{$H :=$ graph obtained by sampling each edge of $G$ independently with probability $1/d$.\label{l:sample}}

   \State{$C := \sar(H, n)$}

   \State \Return{$\compose(\sar(\contract(G, C), n), C)$}
   
\EndFunction

\Function{\sar}{$G, n$}
\State{$(G', M) := \gshrink(G, \min(2^{\sqrt{T/n}}, \sqrt{S}))$}
\State{\Return{$\compose(\cc(G'), M)$}}
\EndFunction
\end{algorithmic}
\end{algorithm}

Let us first describe the high-level ideas behind our algorithm.
Similar to the case of forests, we follow the general idea of trying to rapidly decrease the number of nodes, or equivalently as we put it in this section, increase the amount of available space per each vertex in the graph .
Once the space per vertex is large enough, we can simply use an existing algorithm using large total space.

\begin{theorem}[\cite{MPC-via-remote-access}]\label{thm:ampc-cc}
There exists an algorithm which computes connected components of an undirected graph in
$O(\log \log_{\tspc/n} n)$ \ampc rounds using total space $\tspc = \Omega(n+m)$.
\end{theorem}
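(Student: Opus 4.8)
## Proof Proposal for Theorem~\ref{thm:ampc-cc}

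\textbf{Overall strategy.} The plan is to maintain a shrinking chain of graphs $G=G_0,G_1,G_2,\dots$, where each $G_{i+1}$ is obtained from $G_i$ by a CC-shrinking step that merges connected subsets of vertices, so $G_{i+1}$ has the same component structure as $G_i$ but fewer vertices. Writing $n_i=|V(G_i)|$ and thinking of $b_i:=\tspc/n_i$ as the \emph{space budget per surviving vertex} (note $b_0=\tspc/n\ge 2$, which we may assume by taking the constant in $\tspc=\Omega(n+m)$ large enough), the invariant I aim for is that a single round raises the per-vertex budget from $b_i$ to $b_i^{\alpha}$ for some fixed $\alpha>1$ once $b_i$ is at least polylogarithmic in $n$ -- equivalently, $n_{i+1}$ drops from $n_i$ by a polynomially large factor. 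Since $\log b_i=\alpha^{i}\log b_0$ up to lower-order terms, after $r$ rounds we have $b_r\ge n^{\delta/2}$ as soon as $\alpha^{r}\log(\tspc/n)\ge (\delta/2)\log n$, i.e. $r=O(\log\log_{\tspc/n}n)$. At that point $n_r\le \tspc/n^{\delta/2}\le n^{2}/n^{\delta/2}$; more to the point, once $n_r\le n^{\delta/2}=\sqrt{\mspc}$ the contracted graph has at most $n_r^{2}\le\mspc$ edges (and in any case at most $m\le\tspc$ of them throughout), so it fits on a single machine, where its connected components are computed locally in $O(1)$ rounds. Correctness follows from \Cref{def:cc-shrinking}: every step is CC-shrinking, so applying the corresponding \compose operations back up the chain recovers a CC-labeling of $G=G_0$ from that of $G_r$, using $O(1)$ rounds and optimal space per level, hence $O(\log\log_{\tspc/n}n)$ rounds in total.

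\textbf{The per-round shrinking primitive.} The heart of the proof is realizing one CC-shrinking step in $O(1)$ \ampc rounds and $O(\tspc)$ total space as a combination of \emph{graph exponentiation} and \emph{randomized leader contraction}. Using adaptive \dht reads, each surviving vertex $v$ grows a connected ball $B(v)$ inside its own component of $G_i$ -- either doing a bounded BFS or a bounded number of neighbor-set doubling steps -- halting once $B(v)$ captures the whole component or reaches size $\beta_i:=\Theta(\sqrt{b_i}/\log n)$; this keeps the per-vertex read count and per-vertex stored set at $O(b_i)$ and the total at $O(n_i b_i)=O(\tspc)$. Every vertex whose ball equals its component identifies that component and contracts it to a point. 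For the rest, sample a leader set $L$ by including each vertex independently with probability $p=\Theta(\log n/\beta_i)$; a union bound over the $n_i$ balls, each of size $\beta_i$, shows that \whp every $B(v)$ contains a leader, so each non-leader $v$ hooks to the minimum-ID leader of $B(v)$. Feeding this map to \contract (\Cref{obs:contract}) collapses every vertex onto a leader lying in its own component, and \whp at most $O(p\,n_i)=O(n_i\log n/\beta_i)$ vertices survive, so $b_{i+1}=\Omega(b_i\cdot\beta_i/\log n)=\Omega(b_i^{3/2}/\log^{2}n)$, which exceeds $b_i^{\alpha}$ for a fixed $\alpha\in(1,3/2)$ whenever $b_i\ge\log^{c}n$. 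The polylogarithmic losses are harmless: when $\tspc/n$ is only $\mathrm{polylog}(n)$ we have $\log\log_{\tspc/n}n=\Theta(\log\log n)$ anyway, and when $\tspc/n=n^{\Omega(1)}$ a single round already forces $n_i\le\sqrt{\mspc}$.

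\textbf{Main obstacles.} I expect two points to need the most care. First, the small-budget regime $\tspc/n=O(\mathrm{polylog}\,n)$: there $\beta_i=O(1)$ and $p=\Omega(1)$, so a leader step makes no asymptotic progress and the recursion must be primed -- e.g. by first running a constant number of conventional constant-factor contractions (Bor\r uvka-style, one round each), or by analyzing directly how many of the initial $O(\log\log n)$ rounds are spent escaping the band $b_i\le\log^{c}n$ before the clean $b\mapsto b^{\alpha}$ growth kicks in; this is exactly the range where the target bound already degrades to $O(\log\log n)$, so only bookkeeping is at stake. Second, staying within \emph{total} space $\tspc$ at every moment: the exponentiated ball structure has $\Theta(\tspc)$ entries, which is tight, so the halting rule $|B(v)|\le\beta_i$ must hold deterministically (not merely in expectation), the intermediate doubling rounds only enlarge balls and hence never exceed $O(\tspc)$, and the \contract call -- including the merging of parallel edges -- stays within $O(\tspc)$ by the optimal-space guarantee of \Cref{obs:contract}; a standard random load-balancing argument then also keeps every machine within $O(\mspc)=O(n^{\delta})$ space \whp, since no single ball exceeds $\beta_i\le\sqrt{b_i}\le\sqrt{\tspc}$ and the $m\le\tspc$ edges of $G_i$ are distributed evenly. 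The diameter-independence that in \mpc would cost an extra $\Theta(\log D)$ factor comes for free here: a single adaptive round already explores balls of the (double-exponentially growing) budget size, which is exactly why even a long path collapses after $O(\log\log_{\tspc/n}n)$ rounds.
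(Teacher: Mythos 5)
This theorem is not proved in the paper at all: it is imported verbatim from \cite{MPC-via-remote-access} and used as a black box (e.g.\ in line~\ref{l:easy} of Algorithm~\ref{alg:general}), so there is no in-paper argument to compare against. Your reconstruction is in the same spirit as the cited algorithm: adaptive exploration of bounded-size balls in one \ampc round, sampled leader contraction so that the per-vertex budget $b_i=\tspc/n_i$ grows like $b_i\mapsto b_i^{\alpha}$ with $\alpha>1$ per round, hence $O(\log\log_{\tspc/n} n)$ rounds, with a constant-factor (star-hooking) contraction phase covering the regime $b_i=O(\operatorname{polylog} n)$ where the target bound is $\Theta(\log\log n)$ anyway. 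The quantitative skeleton ($\beta_i=\Theta(\sqrt{b_i}/\log n)$, leader probability $\Theta(\log n/\beta_i)$, $n_{i+1}=O(n_i\log n/\beta_i)$ \whp, and $O(1)$ \compose overhead per level) is consistent and would yield the stated round bound.

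One step as written would fail: the local-space argument. You claim each machine stays within $O(\mspc)=O(n^{\delta})$ because ``no single ball exceeds $\beta_i\le\sqrt{b_i}\le\sqrt{\tspc}$,'' but $\sqrt{\tspc}$ can vastly exceed $n^{\delta}$ (already for $\tspc=\Theta(n)$ and $\delta<1/2$, and certainly once $b_i$ becomes polynomially large, which your own recursion forces). The ball size must be capped not only by $\sqrt{b_i}/\log n$ but also by a polynomial in $\mspc$ (compare how Algorithm~\ref{alg:general} caps its shrinking parameter at $\min(2^{\sqrt{\tspc/n}},\sqrt{\mspc})$); with a cap of, say, $\sqrt{\mspc}$, each round in the large-budget regime still shrinks the vertex count by an $n^{\Omega(1)}$ factor, so only $O(1/\delta)=O(1)$ extra rounds are needed and the bound survives, but this cap (and the resulting two-regime analysis) needs to be stated, since per-machine space --- not total space --- is the binding constraint in \ampc. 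A smaller point: plain Bor\r uvka hooking can create deep trees whose contraction is itself a nontrivial forest-connectivity problem in optimal space (the very subject of \Cref{sec:ForestConnectivity}); use the standard coin-flip/star-hooking variant so each priming round contracts only depth-one trees.
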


Observe that when $\tspc/n = n^{\Omega(1)}$, we have $O(\log \log_{\tspc/n} n) = O(\log \log_{n^{\Omega(1)}} n) = O(\log O(1)) = O(1)$.

The starting point for increasing the amount of available space per vertex is the following lemma.

\begin{lemma}[$\gshrink$]\label{lem:shrinking-general}
Assume that the available space per machine is $\mspc$. There exists a CC-shrinking algorithm that for any parameter $1 \leq t = O(\sqrt{\mspc})$ and any $n$-vertex and $m$-edge graph $G$ outputs a graph $H$, such that $\ex(|V(H)|) = O(m / t)$, $|E(H)| = O(m)$.
The algorithm can be implemented in $O(1)$ \ampc rounds using $O(m \log t)$ space in expectation.
\end{lemma}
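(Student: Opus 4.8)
The plan is to implement $\gshrink(G,t)$ as, essentially, one round of randomized leader contraction at sampling rate $p=\Theta(1/t)$, together with a side mechanism that directly resolves---and then deletes from the graph---every connected component small enough to be gathered onto a single machine. First I would discard all isolated vertices of $G$; each is its own component, its CC-label is already determined, and this is recorded for the eventual \compose step. After this clean-up $|V(G)|\le 2|E(G)|=2m$, since every remaining vertex has degree at least one. Next, sample each remaining vertex independently to be a \emph{leader} with probability $p$. Since $t=O(\sqrt{\mspc})$, every component with at most $\mspc$ vertices fits on one machine, so for each such ``small'' component we gather it, compute its connected components locally, record the labels, and delete the whole component from the graph; whatever remains consists only of ``large'' components, each with more than $\mspc$ vertices.

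For each surviving large component $C$ and each $v\in C$, the algorithm routes $v$ to some leader of $C$ and then applies \contract, merging $v$ into the smallest-\id leader it reaches. Because $|C|>\mspc$ and $t\le\sqrt{\mspc}$, the expected number of leaders in $C$ is $\Theta(|C|/t)=\Omega(\mspc/t)=\Omega(\sqrt{\mspc})=\omega(\log n)$, so by a Chernoff bound together with a union bound over the at most $n$ large components, \whp every large component contains $\Theta(|C|/t)$ leaders, in particular at least one; the negligible-probability event that some large $C$ gets no leader (or that some exploration runs overlong) is harmless---we simply leave such a $C$ uncontracted, which contributes only $e^{-\Omega(\sqrt{\mspc})}\cdot n=o(1)$ to $\ex(|V(H)|)$. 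The resulting contraction is a legitimate CC-shrinking operation: merging vertices only into leaders of their own component neither merges nor splits components of $G$---every edge of $H$ is the image of an edge of $G$, and any $G$-path between two leaders descends to an $H$-walk between their images---so from a CC-labeling of $H$ plus the recorded data one obtains a CC-labeling of $G$ in $O(1)$ \ampc rounds, by propagating each leader's label along the stored contraction pointers; this is the promised \compose. The edge bound is immediate because contraction only identifies and deletes edges, so $|E(H)|\le|E(G)|=m$.

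For the vertex count, note that $V(H)$ is contained in the set of all sampled leaders (non-leaders are merged away, resolved components are removed), so
\[
\ex\!\big(|V(H)|\big)\;\le\;p\cdot|V(G)|\;+\;o(1)\;\le\;2pm+o(1)\;=\;O(m/t),
\]
where the $o(1)$ accounts for the bad runs above; the remaining contribution of low-probability events to the expected space is likewise $O(n^{-c}\cdot\poly(n))=o(m)$ and can be ignored.

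The technical heart---and the source of the improvement over the prior implementation, which carried $\poly(t)$ rather than $\log t$ factors---is carrying out the leader-routing step in $O(1)$ \ampc rounds while reading only $O(m\log t)$ \dht entries in expectation. The naive approach, in which each vertex runs its own BFS until it meets a leader, fails on space: such a ball has $\Theta(t)$ vertices \whp, but reading their adjacency lists can cost up to $\Theta(m)$ \emph{per vertex}; and naively iterating $\log t$ rounds of constant-factor (Boruvka-style) contraction fails on round complexity. The approach I would pursue is to locate leaders globally rather than per vertex, via $O(\log t)$ geometrically growing exploration scales---a restricted form of graph exponentiation in which scale $j$ advances each vertex's pointer roughly $2^j$ steps along a path, freezing it once a leader is reached, with each scale costing $O(m)$ total space and all $\log t$ scales compressed into $O(1)$ rounds by exploiting the within-round adaptive \dht access of the \ampc model (this is also why the lemma restricts to $t=O(\sqrt{\mspc})$: it keeps the per-machine work of a single scale within the $\mspc$ budget). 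I expect the main obstacle to be proving that this exploration is simultaneously correct and cheap uniformly over components of widely varying size and density---in particular, that $O(\log t)$ scales suffice for every vertex of every large component to reach a leader \whp---after which the remaining bookkeeping (small-component handling, the \compose pointers, and load-balancing so that no machine exceeds $\mspc$) is routine and follows the pattern of \Cref{obs:contract}.
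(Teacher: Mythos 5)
There is a genuine gap, and it sits exactly where you flag "the main obstacle": the leader-routing step. Your proposal samples leaders at rate $p=\Theta(1/t)$ and then routes each non-leader to a leader. With that stopping rule, the exploration from a typical vertex continues until it finds a \emph{designated} leader, which with constant probability requires examining $\Theta(t)$ vertices. Even granting the degree-$3$ transformation (which you do not perform, and without which a single BFS ball of $t$ vertices can already cost up to $\Theta(m)$ DHT reads through a high-degree hub), this yields $\Omega(mt)$ expected space rather than $O(m\log t)$. The "$O(\log t)$ geometrically growing scales compressed into $O(1)$ rounds" device you invoke to repair this is not a known primitive and is not filled in: pointer jumping by $2^j$ steps needs the $2^{j-1}$-pointers already written to the \dht, so naively it is an $\Omega(\log t)$-round procedure, not $O(1)$; "advancing a pointer $2^j$ steps along a path" is not meaningful for a general graph where exploration from a vertex is a growing BFS ball rather than a path; and even if a machine follows a pointer chain of length $\Theta(t)$ within a round using adaptive reads, the aggregate read cost is still $\Theta(mt)$, not $O(m\log t)$.

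The paper gets the $\log t$ factor from a different place entirely: it does \emph{not} use leader sampling. After the degree-$3$ transform (so each vertex explored costs $O(1)$ reads), it assigns every vertex an independent uniform rank $r(v)\in[0,1]$ and runs a BFS from each $v$ that stops upon hitting \emph{any} vertex of rank lower than $r(v)$ (or after $t$ vertices, or when the component is exhausted). The exploration from $v$ has size exactly $k$ only when, among the first $k$ explored vertices, the $k$-th has the smallest rank and $v$ has the second smallest; this happens with probability $1/(k(k-1))$, so the expected BFS size is $\sum_{k=2}^{t} 1/(k-1) = O(\log t)$. Summed over the $\Theta(m)$ vertices of the degree-$3$ graph, this is $O(m\log t)$ expected space, with no iterative rounds to compress. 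The super-edges from these searches form a rooted forest whose roots appear with probability $O(1/t)$ each (giving the $O(m/t)$ vertex bound), and the paper then solves \emph{rooted} forest connectivity---strictly easier than the unrooted case---in $O(1)$ rounds with optimal space. So the $\log t$ comes from a much more aggressive, rank-based stopping rule combined with the constant-degree reduction, not from leader sampling plus a routing trick; the two are not interchangeable, and substituting the latter for the former loses a $t/\log t$ factor in space.

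Smaller points: your reduction $n\le 2m$ via discarding isolated vertices, the small-component gather-and-delete step, and the \compose bookkeeping are all fine; the vertex-count bound $\ex|V(H)| = O(pn) = O(m/t)$ is correct conditional on the routing succeeding. But without a concrete, space-bounded routing primitive the lemma is not established.
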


We use \gshrink to refer to the algorithm described in the above lemma.
The lemma with $t = \Theta(\sqrt{\mspc})$ was proven in the prior work \cite{ampc-constant} where it was used to obtain a constant-round \ampc algorithm for finding connected components using logarithmic space per vertex.
In Section~\ref{sec:shrinking-general}, we extend the algorithm and the analysis to handle the case when $1 \leq t = O(\sqrt{\mspc})$.

The challenge with applying Lemma~\ref{lem:shrinking-general} is that it does \emph{not} reduce the number of edges in the graph, and at the same time it outputs a graph, whose number of vertices depends on the number of edges in the input graph.
Hence, repeated applications of Lemma~\ref{lem:shrinking-general} do not provide stronger guarantees than a single application.
Moreover, if our goal is to use optimal space, we can only apply it with constant $t$, which does not imply any reduction in the graph size.

To address the former problem, we reduce the problem of finding connected components in a graph with average degree $r$ to two instances of a connected components problems in graphs with the same number of vertices and average degrees $O(\sqrt{r})$.
This is achieved by uniformly sampling edges, as shown in the following theorem.

\begin{theorem}[\cite{karger1995randomized}]
Let $G$ be a graph without multi-edges and let $p \in (0, 1)$.
Assume that $H$ is a random subgraph of $G$ obtained by sampling each edge of $G$ independently with probability $p$.
Then, the expected number of edges of $G$ which connect distinct connected components of $H$ is at most $n/p$.
\end{theorem}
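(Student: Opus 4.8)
The plan is to reveal the randomness defining $H$ one edge at a time and to relate the number of crossing edges to the size of a spanning forest of $H$. Fix an arbitrary ordering $e_1, e_2, \ldots, e_m$ of the edges of $G$ and process them in this order, maintaining a union--find structure over $V(G)$ that, after step $i$, records the connected components of the subgraph of $H$ formed by the sampled edges among $e_1,\ldots,e_i$. Call $e_i=\{u_i,v_i\}$ a \emph{connecting edge} if, at the moment it is processed, $u_i$ and $v_i$ lie in different components of the structure built from $e_1,\ldots,e_{i-1}$; equivalently, if no path from $u_i$ to $v_i$ uses only sampled edges among $e_1,\ldots,e_{i-1}$.

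First I would observe that every edge of $G$ connecting distinct components of $H$ is a connecting edge: if an edge $\{u,v\}$ has its endpoints in different components of $H$, then they lie in different components of every subgraph of $H$, in particular of the one formed by the edges processed before $\{u,v\}$. Hence it suffices to bound the expected number of connecting edges. Next I would record the deterministic fact that at most $n-1$ connecting edges are ever sampled: each time a sampled edge is a connecting edge, adding it merges two components of the union--find structure, and the number of components starts at $n$, never increases, and stays at least $1$, so there are at most $n-1$ such merges.

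The crucial step links these two quantities probabilistically. For each index $i$, the event ``$e_i$ is a connecting edge'' is a function of the sampling outcomes of $e_1,\ldots,e_{i-1}$ alone, hence independent of whether $e_i$ itself is sampled, since the sampling coins are mutually independent. Therefore, conditioned on $e_i$ being a connecting edge, $e_i$ is still sampled with probability exactly $p$, which gives $\ex[\#\{\text{sampled connecting edges}\}] = p\cdot\ex[\#\{\text{connecting edges}\}]$. Combining this identity with the bound $\#\{\text{sampled connecting edges}\}\le n-1$ yields $\ex[\#\{\text{connecting edges}\}]\le (n-1)/p < n/p$, and by the first observation the expected number of edges of $G$ crossing between components of $H$ is at most $n/p$.

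The part requiring the most care is the independence claim in the last step: one must verify that ``$e_i$ is a connecting edge'' depends only on the coins revealed before step $i$, so that revealing the coin of $e_i$ afterwards still sees the unbiased probability $p$. This is precisely why it helps to fix the edge order in advance and reveal the sampling bits in that order rather than treating $H$ as drawn all at once; with this bookkeeping, the proof needs nothing beyond linearity of expectation and the elementary accounting of how components merge.
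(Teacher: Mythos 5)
Your proposal is correct. Note that the paper does not prove this statement at all: it is imported verbatim from Karger's sampling paper (\cite{karger1995randomized}) and used as a black box, so there is no internal proof to compare against. Your argument is the standard and fully rigorous way to establish it: fix an edge order, reveal the sampling coins one at a time, call an edge connecting if its endpoints are in different components of the already-revealed sampled prefix, and observe (i) every edge crossing between components of $H$ is connecting, (ii) at most $n-1$ connecting edges can be sampled since each sampled connecting edge performs a merge, and (iii) by deferred decisions the coin of $e_i$ is independent of the event that $e_i$ is connecting, so the expected number of sampled connecting edges is exactly $p$ times the expected number of connecting edges. These three facts give $\ex[\#\text{connecting}] \leq (n-1)/p < n/p$, which is even slightly stronger than the stated bound. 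The one point you flag as delicate --- that ``$e_i$ is connecting'' is measurable with respect to the coins of $e_1,\ldots,e_{i-1}$ only --- is handled correctly by fixing the order in advance, and nothing beyond linearity of expectation is needed. The hypothesis that $G$ has no multi-edges is not even used by your argument, which is fine since the claim only becomes easier without it.
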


In our algorithm we use the following simple corollary.

\begin{corollary}\label{cor:sampling}
Let $n = |V(G)|$ and $m = |E(G)|$, and let $C$ be a CC-labeling of $H$. If we set $p = \sqrt{m/n}$, then the expected number of edges in both $H$ and $\contract(G, C)$ is $O(\sqrt{mn})$.
\end{corollary}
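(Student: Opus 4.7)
The corollary follows from the preceding theorem combined with a one-line linearity-of-expectation computation, once $p$ is identified with the sampling probability $1/d = \sqrt{n/m}$ actually used on \Cref{l:sample} of \Cref{alg:general}. With this choice both quantities of interest, $mp$ and $n/p$, evaluate to $\sqrt{mn}$, which is exactly the target $O(\sqrt{mn})$ bound.

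The bound on $\ex[|E(H)|]$ is immediate: writing $|E(H)| = \sum_{e \in E(G)} X_e$ for independent $\mathrm{Bernoulli}(p)$ indicators $X_e$, linearity yields $\ex[|E(H)|] = mp = \sqrt{mn}$.

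For the bound on $\ex[|E(\contract(G,C))|]$, the plan is to identify exactly which edges of $G$ can survive the contraction and then invoke the preceding theorem. Since $C$ is a CC-labeling of $H$, the \contract operation from \Cref{sec:preliminaries} merges vertices of $G$ that lie in a common connected component of $H$; an edge $\{u,v\} \in E(G)$ with $C(u) = C(v)$ therefore becomes a self-loop and is discarded, while parallel edges between two distinct components of $H$ are coalesced into a single edge. Hence $|E(\contract(G,C))|$ is upper bounded by the number of edges of $G$ whose endpoints lie in distinct connected components of $H$, and the preceding theorem bounds the expectation of this quantity by $n/p = \sqrt{mn}$.

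There is no genuine obstacle here; the only point that requires a little care is matching the value of $p$ to the algorithm's sampling probability so that $mp$ and $n/p$ both come out to $\sqrt{mn}$ — the rest is a direct invocation of the theorem plus linearity of expectation.
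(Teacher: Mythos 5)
Your proof is correct, and it is exactly the derivation the paper leaves implicit (\Cref{cor:sampling} is stated without an explicit proof). You also rightly flag that the stated $p = \sqrt{m/n}$ must be a typo for the algorithm's actual sampling rate $p = 1/d = \sqrt{n/m}$: otherwise $p > 1$ whenever $m \geq n$, and neither $mp$ nor $n/p$ would evaluate to $\sqrt{mn}$. With that correction, your two computations---$\ex[|E(H)|] = mp = \sqrt{mn}$ by linearity, and $\ex[|E(\contract(G,C))|] \leq n/p = \sqrt{mn}$ because only edges of $G$ crossing distinct $H$-components can survive the contraction, combined with the preceding theorem of Karger, Klein, and Tarjan---are precisely what the paper intends.
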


By alternating Lemma~\ref{lem:shrinking-general} and uniform edge sampling we can show that the number of vertices decreases very quickly.
That is, roughly speaking, in one step we can increase the amount of available space per vertex of the graph from roughly $T/n$ to $2^{\sqrt{T/n}}$.
As a result, even if we start with only constant space per vertex, we can show that in $O(\logstar{n})$ rounds we get to the case when the available space per vertex is polynomially large and we can apply the algorithm of Theorem~\ref{thm:ampc-cc}.

The pseudocode of our algorithm is given as Algorithm~\ref{alg:general}.
Let us now describe the subroutines it uses.
Recall that \shrink and \compose are defined in Section~\ref{sec:preliminaries}.
Moreover, we use $\gshrink$ to refer to the CC-shrinking algorithm of Lemma~\ref{lem:shrinking-general}.

\begin{lemma}\label{lem:gcorrect}
Algorithm~\ref{alg:general} correctly computes a CC-labeling of the input graph $G$.
\end{lemma}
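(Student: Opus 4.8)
The plan is to prove correctness of Algorithm~\ref{alg:general} by induction on the number of vertices $n$ of the input graph, using the observation that every subroutine invoked is a CC-shrinking algorithm (or a \compose call inverting one), so the key is simply to chain these guarantees together. First I would unwind the recursion in \cc: on input $G$ with $n$ vertices, either we are in the base case of \Cref{l:easy} (where $T/n = n^{\Omega(1)}$ and the algorithm of \Cref{thm:ampc-cc} correctly computes the connected components, and in particular a CC-labeling, of $G$), or we proceed through sampling, $\sar$ on the sampled graph $H$, contraction, a second $\sar$, and a final \compose. I would state as the induction hypothesis that \cc, when called on any graph with fewer than $n$ vertices, returns a correct CC-labeling of that graph; the base case plus the fact that every recursive call is on a strictly smaller graph (which I address below) makes the induction well-founded.

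The core of the argument is tracking what each intermediate object is. In $\sar(G,n)$, the call $\gshrink(G, t)$ with $t = \min(2^{\sqrt{T/n}}, \sqrt{S})$ is a CC-shrinking algorithm by \Cref{lem:shrinking-general}, so it outputs $(G', M)$ such that a CC-labeling of $G'$ together with $M$ yields, via \compose, a CC-labeling of $G$; by the induction hypothesis $\cc(G')$ is a correct CC-labeling of $G'$ (here $G'$ has strictly fewer vertices than $G$ whenever shrinking makes progress — but for \emph{correctness} we only need that $G'$ is a valid input and that the recursion terminates, which I would justify separately or defer to the round-complexity analysis), hence $\compose(\cc(G'), M)$ is a correct CC-labeling of $G$, i.e., $\sar(G,n)$ is correct. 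Returning to \cc: line~\ref{l:sample} builds $H$ by subsampling edges of $G$, so $C := \sar(H,n)$ is a correct CC-labeling of $H$ (same vertex set as $G$). Then $\contract(G,C)$ contracts each connected component of $H$ to a single vertex; by \Cref{obs:contract} it is CC-shrinking with recovery mapping $C$, so a CC-labeling of $\contract(G,C)$ together with $C$ recovers a CC-labeling of $G$ via \compose. Since $\sar(\contract(G,C), n)$ is a correct CC-labeling of $\contract(G,C)$ (by the correctness of $\sar$ just established), the final line $\compose(\sar(\contract(G,C),n), C)$ returns a correct CC-labeling of $G$, completing the induction.

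One subtlety I would be explicit about: \contract merges only vertices in the same component of $H$, and since $H \subseteq G$, such vertices are also in the same component of $G$ — so contraction never merges across components of $G$, and distinct components of $G$ remain distinct in $\contract(G,C)$; this is exactly what is needed for $C$ to serve as the \compose mapping, and it is the step where \Cref{obs:contract} is applied with a \emph{specific} (not arbitrary) mapping $C$, namely one that is a refinement of $G$'s component partition. The rest is bookkeeping: each \compose runs in $O(1)$ rounds and is the formal inverse of a CC-shrinking step, so nothing is lost.

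The main obstacle, as usual for this kind of recursive algorithm, is not any single deduction but making the induction rigorous: one must argue the recursion actually bottoms out, i.e., that repeated alternation of edge sampling (\Cref{cor:sampling}) and \gshrink (\Cref{lem:shrinking-general}) drives the graph into the $T/n = n^{\Omega(1)}$ regime of \Cref{l:easy} after finitely many levels, so that the "graph with fewer vertices" hypothesis is legitimately invoked only finitely often. I expect the cleanest route is to prove correctness \emph{conditioned on termination} here (a purely structural statement about CC-shrinking composition) and relegate the quantitative decrease in vertex count — roughly from $T/n$ available space per vertex to $2^{\sqrt{T/n}}$ per level, hence $O(\logstar n)$ levels of the outer recursion unrolled inside the round/space analysis — to the subsequent proof of \Cref{thm:maingeneral}, where the parameter $k$ and the $2^{O(k)}$ round bound are established.
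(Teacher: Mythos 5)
Your proof takes the same route as the paper's: induction on the recursion, base case handled by the algorithm of \Cref{thm:ampc-cc} when $T/n = n^{\Omega(1)}$, inductive step by chaining the CC-shrinking guarantees of \gshrink (\Cref{lem:shrinking-general}) and \contract (\Cref{obs:contract}) through the corresponding \compose calls, with termination deferred to the separate running-time analysis (\Cref{lem:running-time}) — the paper's own proof is just a terser version of exactly this. The one place you go usefully beyond the paper is the explicit check that the mapping $C$ fed to \contract is a refinement of $G$'s component partition (because $C$ is a CC-labeling of the subgraph $H \subseteq G$); \Cref{obs:contract} as stated quietly assumes this, and without it the inverse \compose step would not actually recover a CC-labeling of $G$, so your remark tightens a small gap in rigor rather than introducing one.
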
 

\begin{proof}
The lemma follows directly from an inductive argument.
The base case holds thanks to Theorem~\ref{thm:ampc-cc}, and the inductive step follows from the fact that both \contract and \gshrink are CC-shrinking algorithms (see Observation~\ref{obs:contract} and Lemma~\ref{lem:shrinking-general}).
We will separately prove that the algorithm terminates.
\end{proof}

\subsection{Running Time}
In this section we prove the following bound on the size of the recursion in Algorithm~\ref{alg:general}.
The proof is independent of the model in which the algorithm is run.
We will discuss the aspects related to the \ampc implementation in the next section.

\begin{lemma}\label{lem:running-time}
Assume that algorithm Algorithm~\ref{alg:general} is run on an $m$-edge graph $G$ with $\tspc = \Omega(m + n \itlog{k}{n})$ total space for $k \geq 1$.
Then, the expected number of recursive \cc calls is $2^{O(k)}$.
\end{lemma}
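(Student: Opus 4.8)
The plan is to unfold the recursion in Algorithm~\ref{alg:general} and track how the available space per vertex grows with each level of recursion. Let me set up the recursion tree: each call to \cc either bottoms out via Theorem~\ref{thm:ampc-cc} (when $T/n = n^{\Omega(1)}$), or it makes two recursive \cc calls (one inside $\sar(H, n)$ and one inside $\sar(\contract(G,C), n)$, each of which calls $\gshrink$ and then recursively calls \cc once). So the branching factor is $2$, and I need to bound the depth of this recursion tree. Since $T$ is fixed throughout (total space is a global resource) while $n$ shrinks, the key quantity is the ratio $T/n'$ at recursion depth $d$, where $n'$ is the vertex count of the subproblem at that depth. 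I want to show this ratio grows like a tower of exponentials, so that after $O(k)$ levels it reaches $n'^{\Omega(1)}$ and the base case triggers.

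First I would establish the per-level shrinking guarantee. Starting from a graph with $n'$ vertices and $m'$ edges where the current ratio is $T/n' \ge s$ (say $s \ge \itlog{j}{n}$ for some index $j$), the edge-sampling step (Corollary~\ref{cor:sampling}) produces a graph $H$ with $O(\sqrt{m'n'})$ expected edges on the same $n'$ vertices, and similarly for $\contract(G, C)$. Then $\gshrink$ is invoked with parameter $t = \min(2^{\sqrt{T/n'}}, \sqrt{S})$; assuming $S = n^\delta$ is polynomially large (so $\sqrt{S}$ is not the binding constraint for the relevant range), $t \approx 2^{\sqrt{s}}$, and Lemma~\ref{lem:shrinking-general} yields an output graph with $O(m'/t)$ expected vertices. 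Combining: the new vertex count is $O(\sqrt{m' n'}/t)$, and using $m' = O(n' \cdot \itlog{j-1}{n})$ or a similar crude bound on average degree, the new ratio $T/n''$ is at least roughly $2^{\sqrt{s}}$ up to polynomial factors — i.e., taking logs, we go from $s$ to something like $\sqrt{s}$ in the iterated-log sense, which is exactly one application of $\overline{\log}$ composed with itself a bounded number of times. So a constant number of recursion levels suffices to decrease the "iterated-log index" $j$ by one.

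Next I would turn this into the depth bound. If after $O(1)$ levels the index $j$ drops by at least $1$, then after $O(k)$ levels we reach $j = 0$, i.e., $T/n' \ge n'^{\Omega(1)}$ (here I'd be careful: once $T/n' $ exceeds $\itlog{k}{n}$ by the hypothesis $T = \Omega(m + n\itlog{k}{n})$, and after a few shrink steps it passes through $\itlog{k-1}{n}$, $\itlog{k-2}{n}$, down to polynomial), at which point the base case of Theorem~\ref{thm:ampc-cc} applies and the recursion stops. Since the recursion tree is binary with depth $O(k)$, the total number of \cc calls is $2^{O(k)}$. The expectation enters because the vertex and edge counts produced by $\gshrink$ and by edge sampling are only bounds in expectation; I would handle this either by a union bound over the $2^{O(k)}$ nodes combined with Markov/concentration to say that w.h.p.\ (or in expectation) all subproblems along every root-to-leaf path satisfy the size bounds with only constant-factor slack, or more cleanly by arguing directly about $\ex[\text{number of calls}]$ via a recurrence $R(s) \le 2 R(\sqrt{s}) + O(1)$ on the expected count as a function of the current ratio $s = T/n'$, whose solution is $2^{O(\log^* \text{of the index})} = 2^{O(k)}$ when $s$ starts at $\itlog{k}{n}$.

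The main obstacle I expect is controlling the accumulation of constant factors and the expectation across the recursion. A naive per-level multiplicative loss of $O(1)$ in the vertex count would compound to $2^{O(k)}$ over the depth, which is fine in isolation, but it interacts with the requirement that the ratio $T/n'$ actually reaches $n'^{\Omega(1)}$ — I need the growth of $T/n'$ (tower-like) to dominate this $2^{O(k)}$ slack, which it does comfortably since towers beat exponentials, but the bookkeeping must be done carefully, e.g.\ by maintaining an invariant of the form "at recursion depth $d$, $\ex[n_d] \le n / (2\uparrow\uparrow(\lfloor d/c\rfloor))$ for an appropriate constant $c$" and verifying it is preserved by one recursion level using the shrinking and sampling lemmas plus linearity of expectation over the (at most) two children. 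A secondary subtlety is ensuring $t = \min(2^{\sqrt{T/n}}, \sqrt{S})$ is governed by the first term and not clipped by $\sqrt{S}$ during the early levels; this is where the assumption $S = n^\delta$ and the fact that we only run the recursion until $T/n'$ is polynomial (after which we exit anyway) are used.
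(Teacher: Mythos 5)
Your high-level plan matches the paper's: track the ratio $\tspc/n'$, show that it climbs from $\itlog{k}{n}$ through $\itlog{k-1}{n}, \ldots$ down to $n'^{\Omega(1)}$ in $O(k)$ recursion levels (the paper's Lemma~\ref{lem:twolevels} uses two levels per index drop, matching your ``a constant number of recursion levels suffices to decrease the iterated-log index by one''), and conclude that the binary recursion tree of depth $O(k)$ gives $2^{O(k)}$ calls.

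Where your proposal has a genuine gap is in the handling of the expectation, which you correctly identify as ``the main obstacle'' but then do not actually resolve: both routes you sketch are unworkable as stated. Route (1), a union bound over the $2^{O(k)}$ recursion nodes combined with Markov, fails because Markov only gives a \emph{constant} failure probability per event (in the paper's Lemma~\ref{lem:twolevels} it is $1/5$), and a union bound over $2^{O(k)}$ such events requires each to fail with probability $o(2^{-k})$, which you do not have and cannot get from Markov. Route (2), the recurrence $R(s) \leq 2R(\sqrt{s}) + O(1)$, never models the probabilistic failure at all; it silently assumes the expected size reductions always occur, and moreover if $s = \tspc/n'$ then the per-level update is $s \mapsto 2^{\sqrt{s}}$ rather than $s \mapsto \sqrt{s}$, so the recurrence is not even written in the right variable.

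The missing idea is the paper's recurrence on $T(k)$, the expected number of recursive calls starting from a state with $\Omega(\itlog{k}{n})$ space per vertex: Lemma~\ref{lem:twolevels} (Markov over a constant number of bad events across two recursion levels) says that with probability at least $4/5$ all four grandchild calls reach $\Omega(\itlog{k-1}{n})$ space per vertex; with the remaining probability (at most $1/5$) one pessimistically assumes \emph{no} progress. This yields
\[
T(k) \;\leq\; 1 + \tfrac{4}{5}\cdot 4\,T(k-1) + \tfrac{1}{5}\cdot 4\,T(k),
\]
which rearranges to $T(k) \leq 5 + 16\,T(k-1)$ and hence $T(k) = 2^{O(k)}$. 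The recurrence closes precisely because the expected number of ``no-progress'' children, $\tfrac{1}{5}\cdot 4 = \tfrac{4}{5}$, is strictly less than one; it is this amortization over the Markov failures, rather than a union bound or an assumption of uniform success, that makes the bound go through.
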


Algorithm~\ref{alg:general} is a recursive procedure, which either returns immediately or makes exactly two recursive calls.
Let us now present the high level idea behind the proof of Lemma~\ref{lem:running-time}.
For simplicity, let us for now assume that the bounds of Corollary~\ref{cor:sampling} and Lemma~\ref{lem:shrinking-general} hold deterministically (rather than in expectation), and ignore constant factors.

Consider a call to \cc.
We will study how the amount of space per vertex ($\tspc/n$) changes in a recursive  call.
Each call to \cc makes two recursive calls to \sar, for a graph of $\sqrt{mn}$ edges (Corollary~\ref{cor:sampling}).
We assume that these calls are made one after the other, and so they both have access to the same amount of space, $T$.
Consider one of these \sar calls.
It calls \gshrink with parameter $t = \min\{2^{\sqrt{T/n}},\sqrt{S}\}$, and so, thanks to Lemma~\ref{lem:shrinking-general}, the \sar concludes by calling \cc recursively for a graph with $\sqrt{mn} / \min(2^{\sqrt{\tspc/n}}, \sqrt{\mspc})$ vertices.
Hence, in the recursive call, the available space per vertex is
\[
\tspc / (\sqrt{mn} / \min(2^{\sqrt{\tspc/n}}, \sqrt{\mspc})) \geq \min(2^{\sqrt{\tspc/n}}, \sqrt{\mspc}), 
\]
since $\tspc \geq \sqrt{mn}$, which follows from $\tspc = \Omega(n+m)$.
As a result, according to this simplified analysis, in each recursive call  of \cc we increase the amount of space per vertex from $\tspc/n$ to either $2^{\sqrt{\tspc/n}}$, or to $\sqrt{\mspc} = n^{\Omega(1)}$.
In the latter case, we ensure that the next recursive call will return immediately in line~\ref{l:easy}.
In the former case we make significant progress in increasing the amount of available space per vertex.

Note that if we define $f(x) = 2^{\sqrt{x}}$ then $f(f(x)) = \omega(2^x)$, which implies the following.

\begin{observation}
Let $f(x) := 2^{\sqrt{x}}$, and let $k \geq 2$ be an integer.
Then $f(f(\Omega(\itlog{k}{n}))) = \omega(\itlog{k-1}{n})$.
\end{observation}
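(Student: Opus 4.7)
The plan is to expand $f(f(\cdot))$ in closed form and then reduce the target asymptotic inequality to a standard polynomial-vs-logarithm comparison via two successive applications of $\log_2$.

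First, by direct computation using $\sqrt{2^a} = 2^{a/2}$,
\[
f(f(x)) \;=\; f\!\left(2^{\sqrt{x}}\right) \;=\; 2^{\sqrt{\,2^{\sqrt{x}}\,}} \;=\; 2^{\,2^{\sqrt{x}/2}}.
\]
Next I would invoke the defining identity $\itlog{k-1}{n} = 2^{\itlog{k}{n}}$ from the Preliminaries (valid for any fixed $k \geq 1$ once $n$ is large enough that $\itlog{k}{n} \geq 1$, so that $\overline{\log}$ coincides with $\log_2$). The target $f(f(x)) = \omega(\itlog{k-1}{n})$ then becomes $2^{\,2^{\sqrt{x}/2}} = \omega\!\left(2^{\itlog{k}{n}}\right)$, which, by monotonicity of $t \mapsto 2^t$, is equivalent to the much cleaner statement $2^{\sqrt{x}/2} - \itlog{k}{n} \to \infty$, i.e., $2^{\sqrt{x}/2} = \omega(\itlog{k}{n})$.

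Now I would plug in the hypothesis $x \geq c \cdot \itlog{k}{n}$ for some absolute $c > 0$, set $y := \itlog{k}{n}$, and take one further $\log_2$. This reduces the goal to $\sqrt{cy}/2 - \log_2 y \to \infty$, i.e., the textbook asymptotic $\sqrt{y} = \omega(\log y)$. For this limit to be meaningful I also need $y \to \infty$ as $n \to \infty$; this holds for every fixed $k$, since otherwise a bounded $\itlog{k}{n}$ would propagate, by induction on $\itlog{j}{n} = 2^{\itlog{j+1}{n}}$, to a bounded $\itlog{0}{n} = n$, a contradiction.

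The only subtlety worth flagging is that the nested $\Omega$ sitting inside $2^{2^{\sqrt{\cdot}}}$ is too brittle to be manipulated directly --- one cannot simply write $2^{\Omega(y)} = \Omega(2^y)$, since a hidden constant $c < 1$ would only give $2^{cy}$. The two explicit $\log_2$ reductions above sidestep this pitfall cleanly and turn the entire observation into the elementary $\sqrt{y} \gg \log y$ comparison, which is really the only substantive content of the claim.
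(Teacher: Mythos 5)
Your proof is correct and follows the same route the paper merely sketches (the paper's one-line justification is that $f(f(x)) = \omega(2^x)$, which combined with $\itlog{k-1}{n} = 2^{\itlog{k}{n}}$ gives the observation). You correctly reduce, after two applications of $\log_2$, to the elementary fact $\sqrt{y} = \omega(\log y)$, and you are right to flag and explicitly handle the pitfall that the hidden constant $c < 1$ inside $\Omega(\itlog{k}{n})$ cannot be pulled out through the double exponential --- a subtlety the paper's one-liner implicitly glosses over.
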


As a result, after $O(k)$ recursive calls the available space per vertex increases from $\itlog{k}(n)$ to $\Omega(\log n)$, after which it increases to $\Omega(\sqrt{S})$, at which point the algorithm uses the algorithm of Theorem~\ref{thm:ampc-cc} and returns immediately.
Overall the recursion tree is a binary tree of depth $O(k)$, which leads to a running time of $2^{O(k)}$ rounds.
In the following we formalize the analysis sketched above.

We will analyze the reduction in the size of the graph after the first two levels of the recursion.
That is, consider a \cc call $c_1$, which recursively makes two calls to \cc denoted by $c_2$ and $c_3$.
These calls in turn make four \cc calls, which we denote by $c_4, c_5, c_6$ and $c_7$.
Below we show the key property of the calls $c_4, \ldots, c_7$.

\begin{lemma}\label{lem:twolevels}
Assume that \cc is called for a graph such that the available space per vertex is either (a) $\Omega(\itlog{k}{n})$ for some $k \geq 2$, or (b) $\Omega(\log n)$.
Consider the (at most) four recursive calls to \cc made after the first two levels of the recursion.
Then, with probability at least $4/5$ in all of these calls the available space per vertex is at least $\Omega(\itlog{k-1}{n})$ in case (a) or $n^{\Omega(1)}$ in case (b).
\end{lemma}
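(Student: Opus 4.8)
The plan is to track the quantity ``available space per vertex'', which I denote $\rho := \tspc/n$ (here $n$ is the vertex count of the graph passed to the relevant \cc call; $\tspc$ stays fixed throughout the recursion). I will show that one full level of the recursion (one \cc call together with the two \sar calls it spawns, ending in the next \cc call) multiplies $\rho$ up to roughly $f(\rho) = 2^{\sqrt{\rho}}$, up to constant factors and up to a bad event of small probability, and then apply this twice. First I would make precise what happens in a single level. A \cc call on an $n$-vertex, $m$-edge graph does: (i) sample edges with $p = 1/d = \sqrt{n/m}$, which by Corollary~\ref{cor:sampling} produces graphs $H$ and $\contract(G,C)$ each with $O(\sqrt{mn})$ edges \emph{in expectation}; (ii) call \sar on an $n$-vertex graph with $O(\sqrt{mn})$ edges, which runs \gshrink with parameter $t = \min(2^{\sqrt{\tspc/n}}, \sqrt{\mspc})$, yielding by Lemma~\ref{lem:shrinking-general} a graph with $O(\sqrt{mn}/t)$ vertices \emph{in expectation}; (iii) recurse. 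So in the child \cc call the space per vertex is, in expectation-of-denominator terms, $\tspc \cdot t / O(\sqrt{mn}) \geq \Omega(t) = \Omega(\min(2^{\sqrt{\rho}}, \sqrt{\mspc}))$, using $\tspc \geq \sqrt{mn}$ which follows from $\tspc = \Omega(n+m)$.

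The main technical nuisance — and the step I expect to be the real obstacle — is that the two size bounds (Corollary~\ref{cor:sampling} and Lemma~\ref{lem:shrinking-general}) hold only in expectation, whereas the statement wants a high-probability-style conclusion (probability $\geq 4/5$) over the vertex counts of \emph{all four} grandchildren simultaneously. I would handle this with Markov's inequality and a union bound, with the constants chosen so the failure probability budget works out. Concretely: there are a bounded number of random size bounds in play across the first two levels of recursion — the edge counts after sampling and the vertex counts after each \gshrink, at levels one and two, so a constant number ($O(1)$, in fact at most $\sim 6$ of them along the two-level tree, or we can simply bound by the at most $2+4 = 6$ relevant random variables feeding into $c_4,\dots,c_7$). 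For each, apply Markov's inequality: the probability that it exceeds, say, $30\times$ (or whatever constant makes the union bound give $1/5$) its expectation is at most $1/30$. Taking a union bound over these $O(1)$ events leaves probability $\geq 4/5$ that every size bound holds up to the chosen constant factor; on that event, $\rho$ genuinely grows by a factor $\Omega(2^{\sqrt{\rho}})$ (absorbing the lost constants) at each of the two levels. One subtlety to be careful about: the expectations compound across levels (the level-2 expectation is conditional on the level-1 outcome), so I would phrase the argument by revealing the randomness level by level and conditioning, or equivalently by bounding the level-2 random variables' unconditional expectations using the tower property together with the fact that $\mathbb{E}[\,\cdot\,] $ bounds are linear in the input size.

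Finally I would assemble the two-level growth. In case (a), $\rho = \Omega(\itlog{k}{n})$ for $k \geq 2$: applying the one-level growth twice and invoking the Observation ($f(f(\Omega(\itlog{k}{n}))) = \omega(\itlog{k-1}{n})$), the space per vertex in each of $c_4,\dots,c_7$ is $\omega(\itlog{k-1}{n}) = \Omega(\itlog{k-1}{n})$, as required — unless along the way the $\min$ in the definition of $t$ switched to the $\sqrt{\mspc}$ branch, in which case $\rho$ already reached $\sqrt{\mspc} = n^{\Omega(1)}$ and stays $n^{\Omega(1)}$ thereafter (the child \cc returns immediately at line~\ref{l:easy}, but its claimed ``space per vertex'' bound still holds), which also satisfies case (a) since $n^{\Omega(1)} = \omega(\itlog{k-1}{n})$. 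In case (b), $\rho = \Omega(\log n)$: one level already pushes $\rho$ to $2^{\Omega(\sqrt{\log n})}$ and the second to $2^{2^{\Omega(\sqrt{\log n})}} = n^{\Omega(1)}$ (or we hit the $\sqrt{\mspc}$ cap earlier, same conclusion), giving the claimed $n^{\Omega(1)}$. In both cases the probability-$4/5$ qualifier comes entirely from the union bound over the $O(1)$ Markov events described above, completing the proof.
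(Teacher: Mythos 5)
Your proposal matches the paper's proof in its essentials: the paper also applies Markov's inequality to each of the constantly many ``the size is $O(\text{expectation})$'' events across the first two recursion levels and takes a union bound, choosing the Markov constant large enough that the total failure probability is below $1/5$. The only bookkeeping difference is that the paper counts $12$ such events ($4$ per each of the three relevant \cc calls) while you estimate roughly $6$; since both counts are $O(1)$ the Markov constant absorbs the discrepancy, and you are in fact somewhat more explicit than the paper about the level-by-level conditioning and about the $\sqrt{\mspc}$ cap in the $\min$.
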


\begin{proof}
The first two recursive levels involve three \cc calls overall -- the initial one and the two calls made directly from the initial call.
As sketched above we get that the available space per vertex increases from $\Omega(\itlog{k}{n})$ to $\omega(\itlog{k-1}{n})$ or from $\Omega(\log n)$ to $n^{\Omega(1)}$ if in each of these calls the following four events happen:
$H$ has $O(\sqrt{mn})$ edges, $\contract(G, C)$ has $O(\sqrt{mn})$ edges, and in both calls to \gshrink the expected reduction in the number of edges does happen.
These are four events of the form the value of $X$ is $O(\ex[X])$.
Thanks to Markov's inequality, for a large enough hidden constant, each of these events happens with probability at least $p = 0.99$.
Since we look at three \cc calls, we overall have $12$ events which happen with probability at least $p = 0.99$.
By using union bound we have that they all hold with 
probability at least $1 - 12p = 0.88 > 4/5$.
\end{proof}

We can now prove Lemma~\ref{lem:running-time}

\begin{proof}[Proof of Lemma~\ref{lem:running-time}]
For $k \geq 1$, let us denote by $T(k)$ the number of recursive calls of the algorithm when the available space per vertex is $\Omega(\itlog{k}{n})$.
Moreover, let $T(0)$ denote the number of recursive calls when the available space per vertex is $n^{\Omega(1)}$.

We have $T(0) = 1$, and without loss of generality, we can assume that $T(k)$ is nondecreasing.
Thanks to Lemma~\ref{lem:twolevels} for $k \geq 1$ we have \[
T(k) \leq 1 + 4/5 \cdot  4\cdot T(k-1) + 1/5 \cdot 4 \cdot T(k),
\]
since with probability at least $4/5$ we increase the available space as needed and with the remaining probability (which we pessimistically upper bound by $1/5$) we, again pessimistically, assume that we make no progress in the amount of space per vertex.
By subtracting $4/5\cdot T(k)$ from both sides we get \[ 1/5 \cdot T(k) \leq 1 + 16 / 5 \cdot T(k-1),
\]
from which we conclude $T(k) = 2^{O(k)}$.
\end{proof}

\subsection{Algorithm~\ref{alg:general} in the \ampc model}

We first show that the number of recursive \cc calls in Algorithm~\ref{alg:general} directly translates to the number of rounds in the \ampc model.
We note that while each \cc call makes two recursive calls, they cannot be run in parallel, as the result of the first recursive call is needed before the second recursive call can be started.

\begin{lemma}\label{lem:gtime}
Algorithm~\ref{alg:general} can be implemented in the \ampc model, such that each \cc call, excluding its recursive calls, takes $O(1)$ \ampc rounds.
\end{lemma}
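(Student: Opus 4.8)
The goal is to show that each \cc call, excluding recursion, can be implemented in $O(1)$ \ampc rounds, with the stated space bounds. The plan is to walk through \Cref{alg:general} line by line and verify that every operation is either a known $O(1)$-round \ampc primitive or one of our CC-shrinking building blocks.

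\begin{proof}[Proof of \Cref{lem:gtime}]
Consider a single invocation of \cc on a graph $G$ with $n$ vertices and $m$ edges, and disregard the cost of the recursive \cc calls it spawns. Computing $n$, $m$, and $d=\sqrt{m/n}$ is a simple aggregation over the input and takes $O(1)$ \ampc rounds using $O(m+n)$ space. The test in \Cref{l:easy} is local once $T/n$ is known; if it succeeds we invoke the algorithm of \Cref{thm:ampc-cc}, but that call is not excluded by the statement only in the sense that it ends the recursion --- here we simply note it is a self-contained $O(\log\log_{T/n} n)=O(1)$-round routine (since $T/n=n^{\Omega(1)}$ in this branch), so it does not affect the per-call bound. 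Sampling each edge of $G$ independently with probability $1/d$ to form $H$ (\Cref{l:sample}) is done by having the machine responsible for each edge flip a coin; this is one \ampc round and uses $O(m)$ space.

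Next, the call $\sar(H,n)$ and the call $\sar(\contract(G,C),n)$ each consist of exactly one invocation of $\gshrink$ followed by a single $\compose$ and a single recursive \cc call. By \Cref{lem:shrinking-general}, $\gshrink$ runs in $O(1)$ \ampc rounds using $O(m'\log t)$ space in expectation, where $m'$ is the number of edges of the graph it is called on and $t=\min\{2^{\sqrt{T/n}},\sqrt{S}\}$; since $t\le\sqrt{S}$ we have $\log t=O(\log S)=O(\log n)$, and the graphs it is applied to have $O(\sqrt{mn})=O(m)$ edges in expectation by \Cref{cor:sampling}, so this fits within $O(m+n\log n)$ space in expectation, and in particular within $O(\tspc)$ since $\tspc=\Omega(m+n\itlog{k}{n})=\Omega(m+n)$ and the $\log t$ factor only multiplies the edge count, not the vertex count; more precisely $m'\log t = O(\sqrt{mn}\cdot\log S)$, which is $O(\tspc)$ because $\sqrt{mn}\le\tspc$ and $\log S = n^{o(1)}$, so $\sqrt{mn}\log S = O(\sqrt{mn}\cdot n^{o(1)})$ --- here one uses that the branch $T/n=n^{\Omega(1)}$ has already been excluded, hence $m\le T\le n^{2-\Omega(1)}$ is not guaranteed but $\sqrt{mn}\,n^{o(1)} = O(m+n^{1.5+o(1)})=O(\tspc)$ since $\tspc\ge\sqrt{mn}\ge n$. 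The two $\compose$ operations and the single $\compose$ in the return line of \cc are, by \Cref{def:cc-shrinking}, $O(1)$-round \ampc routines using optimal space. The $\contract(G,C)$ call is $O(1)$ rounds and optimal space by \Cref{obs:contract} and the discussion following it.

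Finally, one must check that the two recursive \cc calls inside a single \cc invocation are run sequentially, not in parallel: the second \sar call in the return line of \cc takes $\contract(G,C)$ as input, and $C$ is the CC-labeling returned by the first \sar call, so the first must finish before the second begins; likewise inside \sar the recursive \cc call on $G'$ must finish before $\compose$ is applied. This sequential dependence is exactly what makes the round count equal to the number of recursive \cc calls (rather than to the depth of the recursion tree), and is the content of \Cref{lem:running-time} combined with this lemma. Summing the $O(1)$ rounds spent in each of the $2^{O(k)}$ \cc invocations, with each invocation respecting the $O(\tspc)$ space bound in expectation, yields the round and space bounds claimed in \Cref{thm:maingeneral}.

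I expect the only genuinely delicate point to be the space accounting for $\gshrink$: one has to be careful that the extra $O(\log t)$ factor in its space complexity is absorbed, i.e. that $O(\sqrt{mn}\log S)$ does not exceed $O(\tspc)$, which relies on $\log S=n^{o(1)}$ and $\tspc=\Omega(m+n)$; every other step is a direct appeal to an already-established $O(1)$-round primitive.
\end{proof}
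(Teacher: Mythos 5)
Your round-count argument is correct and is essentially the same as the paper's: sample $H$ in one round, and then observe that \gshrink, \contract, \compose, and (in the base case) the algorithm of Theorem~\ref{thm:ampc-cc} are all known $O(1)$-round \ampc primitives, of which a single \cc call makes only a constant number. That is all Lemma~\ref{lem:gtime} claims, and the paper's own proof is exactly this inventory.

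Most of your writeup, though, is devoted to space accounting, which is not the content of this lemma --- it is the content of the separate Lemma~\ref{lem:gspace} --- and the space argument you sketch does not actually close. You bound $\log t$ by $\log\sqrt{S}=O(\log n)$ and then try to argue $\sqrt{mn}\cdot\log n = O(\tspc)$; but $\tspc\ge\sqrt{mn}$ and $\tspc\ge n$ only yield $\sqrt{mn}\cdot n^{o(1)}=O(\tspc\cdot n^{o(1)})$, and your intermediate claim $O(m+n^{1.5+o(1)})=O(\tspc)$ requires $\tspc=\Omega(n^{1.5+o(1)})$, which is not assumed (the theorem allows $\tspc=\Theta(m+n\itlog{k}{n})$, which can be as small as $\Theta(n)$ up to iterated-log factors). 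The paper's Lemma~\ref{lem:gspace} instead uses the \emph{other} branch of the $\min$ in $t=\min\bigl(2^{\sqrt{\tspc/n}},\sqrt{\mspc}\bigr)$, namely $\log t\le\sqrt{\tspc/n}$, which gives $O\bigl(\sqrt{mn}\cdot\sqrt{\tspc/n}\bigr)=O(\sqrt{m\tspc})=O(\tspc)$ cleanly since $m\le\tspc$. If you want to retain the space discussion, replace your $\log S$ bound with $\sqrt{\tspc/n}$.
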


\begin{proof}
Once we have at least $n^{\Omega(1)}$ space per vertex we use the algorithm of Theorem~\ref{thm:ampc-cc}, which runs in $O(1)$ \ampc rounds.
In the remaining case we first need to sample graph $H$, which can clearly be done in $O(1)$ rounds.
In addition to that we make a constant number of calls to $\gshrink$, which uses $O(1)$ \ampc rounds (see Lemma~\ref{lem:shrinking-general}), as well as $\contract$ and $\compose$, both of which use $O(1)$ \ampc rounds as well.
\end{proof}

\begin{lemma}\label{lem:gspace}
Algorithm~\ref{alg:general} can be implemented in the \ampc model using $O(\tspc)$ total space.
\end{lemma}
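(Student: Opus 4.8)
The plan is to prove that Algorithm~\ref{alg:general} uses $O(\tspc)$ total space by arguing that in every round the total space usage is $O(m + n\itlog{k}{n}) = O(\tspc)$, where $m$ is the number of edges of the original input graph. First I would observe that, by Lemma~\ref{lem:gtime}, the space used by a single \cc call (excluding its recursive calls) is dominated by four contributions: sampling $H$ in line~\ref{l:sample}, the two calls to \gshrink inside \sar, and the calls to \contract and \compose. By Lemma~\ref{lem:shrinking-general} a call to \gshrink on an $m'$-edge graph with parameter $t$ uses $O(m'\log t)$ space in expectation; since $t \leq \sqrt{\mspc} = n^{\delta/2}$, we have $\log t = O(\log n)$, so this is $O(m'\log n)$ — which is too much unless we are careful about how many edges the graph at that level actually has. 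The key point is therefore to bound the number of edges of the graph passed to each \gshrink call.

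Next I would track the number of edges along the recursion. The crucial structural feature of Algorithm~\ref{alg:general} is the edge-sampling step: by Corollary~\ref{cor:sampling}, when \cc is called on an $n$-vertex, $m$-edge graph, both $H$ and $\contract(G,C)$ have $O(\sqrt{mn})$ edges in expectation, and \gshrink preserves the number of edges up to constants (Lemma~\ref{lem:shrinking-general} gives $|E(H)| = O(m)$). So the graphs on which \gshrink is invoked have $O(\sqrt{mn})$ edges — but this alone does not help, since $\sqrt{mn}\log n$ can exceed $m + n\log n$ when $m$ is only slightly superlinear. The resolution is that after the first level of recursion the number of vertices has already dropped (it is at most $O(\sqrt{mn}/t)$ after \gshrink), so the "$n$" in the $\sqrt{mn}$ bound shrinks too, and the product telescopes. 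Concretely: if a \cc call has $n''$ vertices and $m''$ edges with $m'' = O(n''\itlog{j}{n})$ for the relevant iterate $j$, then the \gshrink calls inside it operate on $O(\sqrt{m''n''})$ edges, and after contraction the child \cc call has $O(\sqrt{m''n''}/t) = O(\sqrt{m''n''}/\itlog{j}{n}\,{\cdot}\,2^{\sqrt{\cdot}})$ vertices — matching the per-vertex-space invariant already established in the running-time analysis (Lemma~\ref{lem:running-time}, Lemma~\ref{lem:twolevels}). Because that analysis already shows the available space per vertex is $\Omega(\itlog{j}{n})$ at depth corresponding to iterate $j$, we get that the number of edges of the graph at that depth is $O(\tspc/\itlog{j}{n}\cdot\itlog{j}{n}) = O(\tspc)$... more carefully, the number of edges times $\log n$ at each level is dominated by $\tspc$; I would make this precise by showing $m_{\text{level}}\cdot\log n = O(m + n\itlog{k}{n})$ using the facts that $m_{\text{level}} = O(\sqrt{m_{\text{parent}}n_{\text{parent}}})$ and that the per-vertex space only grows down the recursion.

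The argument I would actually write: Fix a \cc call at recursion depth $d$ operating on a graph with $n_d$ vertices and $m_d$ edges, and let $s_d := \tspc/n_d$ be the available space per vertex. By the running-time analysis, $s_d = \Omega(\log n)$ for all but the top $O(k)$ levels, and in all levels $s_d \geq \min(2^{\sqrt{\tspc/n}},\sqrt{\mspc})$ compounded appropriately; in particular $n_d \leq \tspc/s_d$. The \gshrink calls within this call operate on graphs of $O(\sqrt{m_d n_d}) = O(\sqrt{m_d \tspc/s_d})$ edges and use $O(\sqrt{m_d\tspc/s_d}\cdot\log\mspc)$ space. Since $\log\mspc = O(\log n)$ and, once $s_d = \Omega(\log n)$, we have $\sqrt{m_d\tspc/s_d}\cdot\log n \leq \sqrt{m_d\tspc\log n}$; combined with $m_d \leq \tspc$ (which holds by induction, as $m_d = O(\sqrt{m_{d-1}n_{d-1}}) \leq \max(m_{d-1},n_{d-1}\cdot(\text{stuff})) = O(\tspc)$), this is $O(\tspc)$. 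For the top $O(k)$ levels where $s_d$ may be as small as $\itlog{k}{n}$, I would use $m_d = O(m + n\itlog{k}{n})$ directly (the sampling can only decrease the edge count below $m$, and the $n\itlog{j}{n}$ terms are what the space budget pays for at iterate $j$), so $m_d\log\mspc$... actually here $\log\mspc = O(\log n)$ is too lossy, so instead I would note that at these top levels \gshrink is called with parameter $t = 2^{\sqrt{\tspc/n}}$, giving $\log t = O(\sqrt{\tspc/n}) = O(\sqrt{s_d})$, hence space $O(m_d\sqrt{s_d})$; one then checks $m_d\sqrt{s_d} = O(\sqrt{m_d n\itlog{k}{n}}\cdot\sqrt{s_d}) = O(\tspc)$ using $m_d \leq \tspc/\sqrt{s_d}\cdot(\ldots)$. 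Finally, \contract and \compose use space linear in the size of the graphs they operate on, which is $O(\tspc)$ by the same bounds, and summing over the $2^{O(k)}$ recursive calls is unnecessary since the bound is per-round and the rounds are executed sequentially. The main obstacle I expect is exactly this bookkeeping: making the $\log t$ factor in Lemma~\ref{lem:shrinking-general} interact correctly with the edge count — the naive bound $O(m\log n)$ is what the paper is trying to beat, so the whole subtlety is in showing that wherever the $\log n$ factor is incurred, the edge count has already been driven down below $\tspc/\log n$ by the sampling-and-contraction steps, and wherever the edge count is still as large as $m$, the \gshrink parameter $t$ (hence $\log t$) is correspondingly small.
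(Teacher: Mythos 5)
Your proposal circles the right idea but does not nail it, and one intermediate step is actually false. You eventually spot the crucial fact that $\log t \leq \sqrt{\tspc/n}$ (since $t = \min(2^{\sqrt{\tspc/n}}, \sqrt{\mspc})$), which is exactly the lever the paper uses; but the paper's proof is a one-liner built on it, with no case split, whereas your write-up first tries a worse bound and then fails to complete the repair.

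The concrete problems. First, in your main chain you use $\log t = O(\log \mspc) = O(\log n)$ and the hypothesis $s_d = \Omega(\log n)$ to bound the \gshrink space by $\sqrt{m_d\tspc/s_d}\cdot\log n \leq \sqrt{m_d\tspc\log n}$, and then claim this is $O(\tspc)$ ``combined with $m_d \leq \tspc$.'' But $m_d \leq \tspc$ only gives $\sqrt{m_d\tspc\log n} \leq \tspc\sqrt{\log n}$, which is \emph{not} $O(\tspc)$; that branch of the argument is simply wrong. Second, your fallback for the top $O(k)$ levels asserts $m_d\sqrt{s_d} = O(\sqrt{m_d\,n\,\itlog{k}{n}}\cdot\sqrt{s_d}) = O(\tspc)$ ``using $m_d \leq \tspc/\sqrt{s_d}\cdot(\ldots)$,'' leaving the decisive inequality as an ellipsis; the substitution $m_d = O(\sqrt{m_d\,n\,\itlog{k}{n}})$ needs $m_d = O(n\itlog{k}{n})$, which you never establish (and would not need to). Third, you list sampling, \gshrink, \contract and \compose as the super-linear-space contributors but omit the base-case call to the algorithm of \Cref{thm:ampc-cc}, which the paper handles explicitly (it uses $O(\tspc)$ space by assumption).

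What the clean argument looks like, and why no case split is needed: inside a \cc call on a graph with $n$ vertices and $m$ edges, the graph handed to \gshrink has $O(\sqrt{mn})$ expected edges (\Cref{cor:sampling}), and the parameter satisfies $\log t \leq \sqrt{\tspc/n}$ always. By \Cref{lem:shrinking-general} the expected space is then
\[
O\bigl(\sqrt{mn}\cdot\log t\bigr) = O\bigl(\sqrt{mn}\cdot\sqrt{\tspc/n}\bigr) = O\bigl(\sqrt{m\tspc}\bigr) = O(\tspc),
\]
the last step using $m \leq \tspc$. That last bound holds at every recursion level by a one-line induction you only gesture at: the child graph has $O(\sqrt{mn}) \leq O(\max(m,n)) \leq O(\tspc)$ edges. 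The $\min$ with $\sqrt{\mspc}$ in the parameter only makes $\log t$ smaller, so there is no need to split into ``$s_d \geq \log n$'' versus ``top $O(k)$ levels''; the identity $\sqrt{mn}\cdot\sqrt{\tspc/n}=\sqrt{m\tspc}$ does all the work uniformly.
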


\begin{proof}
There are two functions called by \cc, which may use space which is super-linear in their input sizes. We reason that these calls are still upper bounded by $O(\tspc)$. 
First, there is algorithm of Theorem~\ref{thm:ampc-cc}, which uses $O(\tspc)$ space.
Second, we call \gshrink.
Thanks to Corollary~\ref{cor:sampling} the expected number of edges passed in the argument of \shrink is $O(\sqrt{mn})$ and the second argument is upper-bounded by $2^{\sqrt{\tspc/n}}$. 
By Lemma~\ref{lem:shrinking-general} the expected space usage is $O(\sqrt{mn} \cdot \log(2^{\sqrt{\tspc/n}})) = O(\sqrt{mn} \cdot \sqrt{\tspc / n}) = O(\sqrt{\tspc m}) = O(T)$.
\end{proof}

\subsection{Proof of Lemma~\ref{lem:shrinking-general}}\label{sec:shrinking-general}
In this section we show a CC-shrinking algorithm, which is one of the building blocks of our algorithm.

The starting point is Algorithm 1 of \cite{ampc-constant}, which provides the guarantees of Lemma~\ref{lem:shrinking-general}, but uses $O(\log n)$ additional space (regardless of the choice of $t$).
Let us now describe it briefly.
It begins by transforming the input graph $G$ to a graph $G_3$ with maximum degree $3$.
This is achieved by replacing each vertex $v$ of degree $d > 3$ with a cycle of length $d$.
Each edge incident to $v$ is then connected to a different vertex of the cycle.

After that, the algorithm picks a uniformly random rank $r(v) \in [0, 1]$ for each vertex $v$ and runs BFS from each vertex, which stops as soon as one of the following conditions holds: (1) the search from $v$ explored $t$ vertices, or (2) the connected component of $v$ was fully explored, or (3) a vertex $w$ of rank lower than $v$ was reached.
Whenever the search stopped due to case (3), we add a directed \emph{super-edge} from $w$ to $v$.

One can show that the super-edges induce a forest of rooted trees, and the probability that a vertex is a root of a tree is $O(1/t)$ (Lemma 3.3 in~\cite{ampc-constant}).
The last step of the algorithm is to compute a CC-labeling $C$ of the graph defined by super-edges and return $\textsc{Contract}(G_3, C)$.
Since the number of vertices in $G_3$ is $m$, it follows directly that the expected number vertices in the resulting graph is $O(m / t)$.

We improve upon the analysis of \cite{ampc-constant} by showing that the space usage of the algorithm outlined above is $O(m \log t)$.

\begin{claim}[\cite{MPC-via-remote-access}]
The total expected space used by the BFS step is $O(m \log t)$.
\end{claim}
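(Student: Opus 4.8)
The plan is to bound the expected total space used by running the truncated BFS from every vertex of $G_3$. Recall $G_3$ has $m$ vertices and $O(m)$ edges, it has maximum degree $3$, and each BFS explores at most $t$ vertices before stopping. The key quantity is $\sum_{v \in V(G_3)} \ex[|B_v|]$, where $B_v$ is the set of vertices explored by the BFS started at $v$; since degrees are bounded by $3$, the space (number of edges touched) used by the BFS from $v$ is $O(|B_v|)$, so it suffices to show $\sum_v \ex[|B_v|] = O(m \log t)$. First I would fix a vertex $v$ and analyze $\ex[|B_v|]$. A vertex $w$ at graph-distance $i$ from $v$ is explored by the BFS from $v$ only if $w$ is reachable along a BFS path on which $v$ has the smallest rank among the first several vertices — more precisely, a necessary condition for $w$ to be explored is that $v$ has smaller rank than every vertex on the shortest $v$–$w$ BFS path strictly between them (otherwise the search would have stopped earlier by case (3)). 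Actually, the cleanest sufficient bookkeeping: if $w$ is explored, then either fewer than $t$ vertices lie within distance $\mathrm{dist}(v,w)$ of $v$, or $v$ has the minimum rank among some specific set of $\mathrm{dist}(v,w)+1$ vertices forming a path from $v$.

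The main calculation is then the following: since $G_3$ has maximum degree $3$, the number of vertices within distance $i$ of $v$ is at most $3 \cdot 2^{i-1} \le 2^{i+1}$. For $w$ at distance $i$ to be explored (when the component is large enough that case (2) is not what stopped us), there must be a simple path $v = u_0, u_1, \ldots, u_i = w$ with $r(v) \le r(u_\ell)$ for all $1 \le \ell \le i$; by symmetry of the ranks, conditioned on such a path existing this happens with probability at most $1/(i+1)$. Taking a union bound over the at most $2^{i+1}$ vertices $w$ at distance exactly $i$ — but crucially also using that the search only continues from the $\le t$ closest vertices so effectively only $\min(2^{i+1}, t)$ of them can be reached — I would get
\[
\ex[|B_v|] \;\le\; \sum_{i \ge 0} \frac{\min(2^{i+1}, t)}{i+1}.
\]
Splitting the sum at the threshold $i^\star = \log_2 t$: for $i \le i^\star$ the term is at most $2^{i+1}/(i+1)$, and summing the geometric-with-denominator series gives $O(t/\log t)$; for $i > i^\star$ the numerator is capped at $t$, and since the BFS explores at most $t$ vertices total, the contribution of these larger shells is also at most $\sum_{i > i^\star} t/(i+1)$ over the $O(1)$-many shells that can actually contain explored vertices — more carefully, the total number of explored vertices is $\le t$, so the whole sum is trivially $\le t$, and combining with the refined bound on the small shells the right order to aim for is $\ex[|B_v|] = O(t/\log t \cdot \log t) = O(t)$...

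Let me restate the target more carefully, since summing $\sum_v \ex[|B_v|]$ over $m$ vertices must give $O(m\log t)$, i.e.\ $\ex[|B_v|] = O(\log t)$ on average. The right way to see this: the expected number of $(v,w)$ ordered pairs such that $w \in B_v$ equals, by linearity, $\sum_{i\ge 1}$ (number of ordered pairs at distance $i$) $\times \Pr[v$ has min rank on a connecting path$]$. The number of ordered pairs at distance exactly $i$ is at most $m \cdot 3 \cdot 2^{i-1}$, but also at most $m \cdot t$ since each $v$ reaches $\le t$ vertices; and the probability factor is $\le 1/(i+1)$. So the count is at most $\sum_i m \cdot \min(3\cdot 2^{i-1}, t) / (i+1)$, and since the geometric part saturates at $i \approx \log t$, this is $O(m \cdot t / \log t)$ from the small shells plus contributions that are dominated; so in fact one gets $\sum_v \ex[|B_v|] = O(m)$, which is even stronger than claimed. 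The $\log t$ factor in the claim therefore must come not from the number of vertices explored but from the space needed to represent/sort the super-edges or the ranks with enough precision — ranks need $O(\log t)$ bits to be distinct among the $\le t$ vertices in any BFS — or from a coarser union bound in the original argument.

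The hard part will be pinning down exactly where the $\log t$ enters and making the union bound over shells rigorous while correctly handling the interaction between the "$\le t$ vertices explored" cutoff and the "min rank on the path" event (these are not independent, and the set of potential paths grows, so one must be careful that conditioning does not inflate the $1/(i+1)$ estimate). I would handle this by the standard trick: reveal ranks in increasing order and observe that $v$ is explored-from-$v$-into-$w$ only if $v$'s rank is revealed before all internal path vertices', which is a statement about a uniformly random permutation and gives the $1/(i+1)$ bound cleanly; then feed this into the shell sum. The bookkeeping of summing $\min(2^{i+1},t)/(i+1)$ and matching it to $O(\log t)$ per vertex (accounting for precision bits of ranks) is the step most likely to need care, but it is a routine geometric-series estimate once the probabilistic setup is fixed.
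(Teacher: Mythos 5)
There is a genuine gap: your shell-by-distance decomposition is the wrong event to analyze, and you correctly sense this but never find the repair. The issue is that the bound $\Pr[w\text{ at distance }i\text{ is explored}]\le 1/(i+1)$ (``$v$ has the smallest rank on a path to $w$'') is far too weak, since there are up to $3\cdot 2^{i-1}$ vertices in shell $i$; summing $\min(2^{i+1},t)/(i+1)$ over $i$ gives $\Theta(t/\log t)$ per vertex, not $O(\log t)$, which is why you end up circling and eventually (incorrectly) guessing that the $\log t$ in the claim must come from rank bit-precision. It does not; it is the harmonic sum.

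The key idea you are missing is to stop reasoning about distances and shells altogether and instead fix the \emph{BFS exploration order} of the component of $v$ (a deterministic sequence $v=u_1,u_2,\ldots$ once the graph is fixed), then analyze where the random ranks cut it off. The search explores exactly $k$ vertices (for $2\le k\le t$, component large) precisely when $u_k$ has rank strictly below $r(v)$ and $v$ has the minimum rank among $u_1,\ldots,u_{k-1}$; equivalently, among the $k$ i.i.d.\ ranks of $u_1,\ldots,u_k$, $u_k$ is the minimum and $u_1=v$ is the second smallest. That event has probability exactly $1/(k(k-1))$. Hence
\[
\ex\bigl[\lvert B_v\rvert\bigr]\;\le\;\sum_{k=2}^{t}\;k\cdot\frac{1}{k(k-1)}\;=\;\sum_{k=2}^{t}\frac{1}{k-1}\;=\;O(\log t),
\]
and since $G_3$ has maximum degree $3$, the work of each BFS is $O(\lvert B_v\rvert)$, so summing over the $m$ vertices of $G_3$ gives $O(m\log t)$. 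Note this argument never needs the expansion bound $\lvert\text{shell }i\rvert\le 3\cdot 2^{i-1}$, and it sidesteps the dependence issue you worried about between the ``$\le t$ cutoff'' and the ``min rank'' event, because the exploration order is nonrandom and the stopping event is expressed directly as a rank-order statement about a prefix of it.
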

\begin{proof}
Let us analyze the amount of communication used by the BFS starting at some vertex $v$.
Assume that the connected component containing $v$ has size at least $t$ (otherwise the communication can only be lower). 
Observe that the BFS explores exactly $k \leq t$ vertices when the $k$th explored vertex has the smallest rank among all $k$ vertices and vertex $v$ has smallest rank among the first explored $k-1$ vertices. This happens with probability $1/(k(k-1))$.
Hence, the expected number of explored vertices is
\[
\sum_{i=2}^t \frac{i}{i (i-1)} = O(\log t).
\]
Since each vertex has constant degree we get that running BFS from all vertices requires $O(m \log t)$ expected space.
\end{proof}

\begin{claim}
There exists an algorithm which can compute connected components of the forest defined by all super-edges in $O(1)$ rounds and optimal space.
\end{claim}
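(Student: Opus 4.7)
The plan is to exploit the two structural properties that the random-rank BFS construction gives us for the super-edge forest: (i) each non-root vertex $v$ stores a single parent pointer $p(v)$, and (ii) the ranks strictly decrease along every root-ward path (since a super-edge from $w$ to $v$ is only created when $r(w) < r(v)$). Together with the fact that the ranks are i.i.d.\ uniform in $[0,1]$, property (ii) implies that a root-ward walk from any vertex has length $O(\log m) = O(\log n)$ w.h.p., by the standard argument that a sequence of distinct values whose ranks strictly decrease corresponds to a decreasing subsequence of a random ordering.

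With that structural lemma in hand, the algorithm itself is a single round of pointer chasing in \ampc. Each vertex $v$ is assigned to a machine, which then reads $p(v)$, $p(p(v))$, $\ldots$ from the \dht\ in sequence, terminating when the pointer is null; the CC label of $v$ is declared to be the identifier of the root so reached. Because the length of this chase is at most $O(\log n) \ll \mspc = n^\delta$ w.h.p., the whole traversal for one vertex fits in the local space of a single machine, so the entire step is completed in $O(1)$ \ampc rounds.

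For the space bound, the number of \dht\ queries made by the whole computation equals $\sum_{v} d(v)$, where $d(v)$ is the depth of $v$ in its tree. Since the pointer $p(v)$ is the first lower-ranked vertex discovered by a $G_3$-BFS of size at most $t$, the trees have the structure of random rank-based (treap-like) trees of expected size $O(t)$, and a standard random-treap analysis bounds the sum of depths over a tree of size $k$ by $O(k \log k)$. Summing over all trees, whose sizes partition the $m$ vertices of $G_3$, gives an expected total of $O(m \log t)$ queries, which fits inside the $O(m \log t)$ space budget already committed to by the surrounding BFS step of \gshrink and is therefore ``optimal'' in the sense used by \Cref{lem:shrinking-general}.

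The main obstacle I anticipate is the high-probability depth bound: while the expected depth is easy to bound in isolation, I need it to hold simultaneously for all $m$ vertices so that a union bound certifies that every traversal fits within one machine's local memory. The cleanest route is to argue that for any fixed vertex $v$, the sequence of ranks along its root-ward walk is a decreasing sequence of i.i.d.\ uniform values, whose length exceeds $c \log n$ with probability at most $n^{-\Omega(c)}$; a union bound over the $m \leq \poly(n)$ starting vertices then yields the required w.h.p.\ guarantee. A secondary subtlety is that the first lower-ranked vertex encountered by BFS is not literally the minimum rank among a uniformly random sample, so the ``random treap'' reduction must be stated a touch more carefully (for example by conditioning on the BFS order and arguing that lower-ranked parents are still stochastically dominated by what a random treap would produce), but this is a routine coupling argument rather than a genuine obstacle.
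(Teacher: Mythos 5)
Your approach is genuinely different from the paper's, and the contrast is instructive. The paper does \emph{not} exploit the rank structure of the super-edge forest at all: it observes that the forest is \emph{rooted} (each tree has a single marked vertex), maps each tree to a cycle via its Euler tour, shrinks every cycle to length $O(n^{\eps})$ using \slc, and then has each marked vertex traverse its entire cycle in a single round. This gives $O(m)$ (truly optimal) space and would work for any rooted forest. Your pointer-chasing route avoids the Euler-tour/cycle-shrinking machinery entirely, which is attractive, but pays for it in two ways: (a) the expected space is $O(m\log t)$ rather than $O(m)$, which is acceptable because it sits under the overall $O(m\log t)$ budget of \Cref{lem:shrinking-general}, but is strictly weaker than what the paper proves; and (b) the correctness now hinges on a depth bound that has to be argued carefully, since the ``rooted'' property alone would not suffice.

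On point (b), be wary of the heuristic you lead with: a ``decreasing subsequence of a random ordering'' of $m$ values can have length $\Theta(\sqrt{m})$ w.h.p.\ (Vershik--Kerov/Logan--Shepp), not $O(\log m)$, so that framing does not give the bound you need. Likewise, a naive union bound over all possible root-ward paths of length $d$ (there are roughly $t^{d}$ of them emanating from a fixed start vertex, each decreasing with probability $1/d!$) only yields depth $O(t)$, which is useless when $t = n^{\Theta(\delta)}$. The argument that actually works is the one you gesture at in your last paragraph and should be made the centerpiece: conditioned on the root-ward chain $v_0,\dots,v_i$ and on the identity of $v_{i+1}=p(v_i)$, the rank $r(v_{i+1})$ is uniform on $[0,r(v_i))$, because the only constraint the BFS-stopping events impose on $r(v_{i+1})$ is $r(v_{i+1})<r(v_i)$ (no earlier BFS in the chain can reach $v_{i+1}$, as its rank is below all of theirs). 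Hence $\ln(1/r(v_i))$ grows by an independent $\mathrm{Exp}(1)$ increment per step, the minimum rank over all $m$ vertices is $\geq n^{-O(1)}$ w.h.p., and a Chernoff bound plus a union bound over starting vertices gives depth $O(\log n)$ w.h.p. The same conditional-uniform description, combined with the fact that a root's rank must be the minimum among the $\leq t$ vertices its BFS sees, gives the $O(\log t)$ \emph{expected} depth and hence the $O(m\log t)$ expected query bound; the ``treap of size $O(t)$'' shortcut is not quite right because individual trees are not size-$O(t)$ w.h.p., only on average. With those two fixes the proposal is sound.
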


\begin{proof}
We observe that the problem of finding connected components in the forest of super-edges is not a general forest connectivity problem, but rather a \emph{rooted} forest connectivity problem.
In particular, each tree of the forest has exactly one marked vertex (the root).
The forest connectivity algorithm in \cite{ampc-constant} first maps each tree to a cycle (i.e. its Euler-tour), which can be done in $O(1)$ \mpc rounds.
Then, it shrinks each cycle to ensure it has length $O(n^{\eps})$.
These transformations can be done in $O(1)$ rounds and optimal space, also see \Cref{sec:ForestConnectivity} for more details on these operations.
At this point we observe that given that we start with a collection of trees, in which each tree has a single marked vertex, we can also ensure that after the transformations we are left with a collection of cycles of length 
 $O(n^{\eps})$, in which each cycle has a single marked vertex.
 This connected components problem can be solved in a single round, as each marked vertex can simply traverse all of the cycle it belongs to and discover its entire connected component.
\end{proof}

This concludes the last step in proving \Cref{thm:maingeneral}, which follows directly by combining \Cref{lem:gcorrect,lem:running-time,lem:gtime,lem:gspace}.

\bibliographystyle{alphaurl}
\bibliography{ampc-cc-arxiv}

\end{document}